
\documentclass[final,12pt]{elsarticle}

\makeatletter
\def\ps@pprintTitle{%
 \let\@oddhead\@empty
 \let\@evenhead\@empty
 \def\@oddfoot{}%
 \let\@evenfoot\@oddfoot}
\makeatother

\usepackage{tikz}
\usepackage{hyperref}
\usepackage{amsfonts}
\usepackage{epsfig}
\usepackage{makeidx}
\usepackage{amsmath,amssymb,amssymb,latexsym}
\usepackage{enumerate}
\usepackage{graphicx}
\usepackage{mathtools}



\newcommand{\vertex}[4][black]{
    \draw[#1, fill=#1, inner sep=0pt] (#2, #3) circle (0.075) node(#4){};
}


\newcommand{\edge}[3][]{
    \draw[#1] (#2) -- (#3);
}



 \DeclareMathOperator{\thin}{thin}
\DeclareMathOperator{\pthin}{pthin}
\DeclareMathOperator{\indthin}{thin_{ind}}
\DeclareMathOperator{\indf}{\textit{f}_{ind}}
\DeclareMathOperator{\indpthin}{pthin_{ind}}
\DeclareMathOperator{\compthin}{thin_{cmp}}

\DeclareMathOperator{\lip}{lip} \DeclareMathOperator{\diam}{diam}
\DeclareMathOperator{\mim}{mim}

\DeclareMathOperator{\comppthin}{pthin_{cmp}}
\DeclareMathOperator{\opthin}{(p)thin}
\DeclareMathOperator{\opindthin}{(p)thin_{ind}}
\DeclareMathOperator{\opcompthin}{(p)thin_{cmp}}

\newtheorem{theorem}{Theorem}

\newtheorem{lemma}[theorem]{Lemma}
\newtheorem{corollary}[theorem]{Corollary}

\newtheorem{remark}{Remark}

\newenvironment{proof}{\textit{Proof.}}{\hfill $\ \square \ $ \\}

\date{}

\begin{document}

\begin{frontmatter}

\title{Thinness of product graphs}

\author[UBA,ICC]{Flavia Bonomo-Braberman}
\ead{fbonomo@dc.uba.ar}

\author[ICC]{Carolina L. Gonzalez}
\ead{cgonzalez@dc.uba.ar}

\author[UERJ]{Fabiano S. Oliveira}
\ead{fabiano.oliveira@ime.uerj.br}

\author[UFRJ]{Moys\'es S. Sampaio Jr.}
\ead{moysessj@cos.ufrj.br}

\author[UERJ,UFRJ]{Jayme L. Szwarcfiter}
\ead{jayme@nce.ufrj.br}

\address[UBA]{Universidad de Buenos Aires. Facultad de Ciencias Exactas y Naturales. Departamento de Computaci\'on. Buenos Aires,
Argentina.}
\address[ICC]{CONICET-Universidad de Buenos Aires. Instituto de
Investigaci\'on en Ciencias de la Computaci\'on (ICC). Buenos
Aires, Argentina.}
\address[UERJ]{Universidade do Estado do Rio de Janeiro, Brazil.}
\address[UFRJ]{Universidade Federal do Rio de Janeiro, Brazil.}

\begin{abstract} The thinness of a graph is a width parameter that generalizes some properties of interval graphs, which are exactly the graphs of thinness one. Many NP-complete problems can be solved in polynomial time for graphs with bounded thinness, given a suitable representation of the graph.
In this paper we study the thinness and its variations of graph
products. We show that the thinness behaves ``well'' in general
for products, in the sense that for most of the graph products
defined in the literature, the thinness of the product of two
graphs is bounded by a function (typically product or sum) of
their thinness, or of the thinness of one of them and the size of
the other. We also show for some cases the non-existence of such a
function.
\end{abstract}

\begin{keyword} graph operations, product graphs, proper thinness, thinness, independent thinness, complete thinness. \end{keyword}






\end{frontmatter}

\section{Introduction}\label{sec:intro}

A graph $G=(V,E)$ is \emph{$k$-thin} if there exist an ordering
$v_1, \dots , v_n$ of $V$ and a partition of $V$ into $k$ classes
$(V^1,\dots,V^k)$ such that, for each triple $(r,s,t)$ with
$r<s<t$, if $v_r$, $v_s$ belong to the same class and $v_t v_r \in
E$, then $v_t v_s \in E$. The minimum $k$ such that $G$ is
$k$-thin is called the \emph{thinness} of $G$ and denoted by
$\thin(G)$.

The thinness is unbounded on the class of all graphs, and graphs
with bounded thinness were introduced in~\cite{M-O-R-C-thinness}
as a generalization of interval graphs.

In~\cite{B-D-thinness}, the concept of \emph{proper thinness} is
defined in order to obtain an analogous generalization of proper
interval graphs, and it is proved that the proper thinness is
unbounded on the class of interval graphs. A graph $G=(V,E)$ is
\emph{proper $k$-thin} if there exist an ordering $v_1, \dots ,
v_n$ of $V$ and a partition of $V$ into $k$ classes
$(V^1,\dots,V^k)$ such that, for each triple $(r,s,t)$ with
$r<s<t$, if $v_r$, $v_s$ belong to the same class and $v_t v_r \in
E$, then $v_t v_s \in E$, and if $v_s$, $v_t$ belong to the same
class and $v_t v_r \in E$, then $v_s v_r \in E$. The minimum $k$
such that $G$ is proper $k$-thin is called the \emph{proper
thinness} of $G$ and denoted by $\pthin(G)$.

{The parameters of thinness  and proper thinness represent how far
a graph is from being an interval and proper interval graph,
respectively. The class of (proper) $1$-thin graphs is that of
(proper) interval graphs. This is so because, considering a
$1$-partitioning, a (strongly) consistent ordering is sufficient
to characterize (proper) interval
graphs~\cite{Ola-interval,Rob-uig}.}

When a representation of the graph as a $k$-thin graph is given,
for a constant value $k$, a wide family of NP-complete problems
can be solved in polynomial time, and the family of problems can
be enlarged when a representation of the graph as a proper
$k$-thin graph is given, for a constant value
$k$~\cite{B-D-thinness,B-M-O-thin-tcs,M-O-R-C-thinness}. Hardness
results for relaxed versions of this family of problems are shown
in~\cite{Bentz-thin}, even for classes of graphs with thinness one
or two.

Many operations are defined over graphs, and some of them arise in
structural characterizations of particular graph families or graph
classes, for example union and join in the case of
cographs~\cite{CorneilLerchsStewart81}. The operations known as
\emph{graph products} are those for which the graph obtained by
the operation of graphs $G_1$ and $G_2$ has as vertex set the
Cartesian product $V(G_1) \times V(G_2)$. Different graph products
are determined by the rules that define the edge set of the
obtained graph. The main properties of these products are surveyed
in~\cite{I-K-prod}.

There is a wide literature about the behavior of graph parameters
under graph operations, and in particular graph products. For the
chromatic number, it includes the famous conjecture of Hedetniemi
(1966)~\cite{Hed-conj}, that remained open more than fifty years,
it was shown to hold for many particular classes, and was recently
disproved by Shitov~\cite{Shitov2019}. Other results on the
chromatic number and its variations in product graphs can be found
in~\cite{A-H-L-prod,Mat-product,B-K-T-V-lagos15-dam,C-G-H-L-M-prod,C-F-prod,G-S-lex,Har,H-M-prod,I-K-prod,I-K-R-Cart,J-P-prod,J-T-color,
Kla-col-prod,K-P-b-col,K-M-prod1,K-M-prod2,Sabidussi1964,V-V-col,Zhu-frac},
and on domination in product graphs
in~\cite{H-R-dom-prod,Hed-thesis,I-K-prod,I-K-R-Cart}. For width
parameters, there is a recent paper studying the boxicity and
cubicity of product graphs~\cite{C-I-M-R-box-prod}.

In~\cite{B-D-thinness}, the behavior of the thinness and proper
thinness under the graph operations union, join, and Cartesian
product is studied. These results allow, respectively, to fully
characterize $k$-thin graphs by forbidden induced subgraphs within
the class of cographs, and to show the polynomiality of the
$t$-rainbow domination problem for fixed $t$ on graphs with
bounded thinness.

In this paper, we give bounds for the thinness and proper thinness
of union and join of graphs, as well as the thinness and proper
thinness of the lexicographical, Cartesian, direct, strong,
disjunctive, modular, homomorphic and hom-products of graphs in
terms of invariants of the component graphs. We also show that in
some cases such bounds do not exist. Furthermore, we describe new
general lower and upper bounds for the thinness of graphs. Also,
we consider the concepts of independent and complete (proper)
thinness, corresponding to the situations in which the classes are
all independent or complete sets. Several of the results on the
bounds of products of graphs are given additionally for these
cases.

The organization of the paper is as follows. In
Section~\ref{sec:thin} we state the main definitions and present
some basic results on thinness and proper thinness. In
Section~\ref{sec:families}, we determine the (proper) thinness of
some graph families, and prove some lower and upper bounds for the
parameters. Section~\ref{sec:thin-and-oper} contains the main
results of the paper, namely bounds of (proper) thinness for
different binary operations, in terms of the (proper) thinness of
their factors. Some concluding remarks form the last section.

\section{Definitions and basic results}\label{sec:thin}

All graphs in this work are finite, undirected, and have no loops
or multiple edges. For all graph-theoretic notions and notation
not defined here, we refer to West~\cite{West}. Let $G$ be a
graph. Denote by $V(G)$ its vertex set, by $E(G)$ its edge set, by
$\overline G$ its complement, by $\Delta(G)$ (resp. $\delta(G)$)
the maximum (resp. minimum) degree of a vertex in $G$. A graph is
\emph{$k$-regular} if every vertex has degree $k$.

Denote by $N(v)$ the neighborhood of a vertex $v$ in $G$, and by
$N[v]$ the closed neighborhood $N(v)\cup\{v\}$. If $X \subseteq
V(G)$, denote by $N(X)$ the set of vertices of $G$ having at least
one neighbor in $X$. A vertex $v$ of $G$ is \emph{universal}
(resp. \emph{isolated}) if $N[v] = V(G)$ (resp. $N(v) =
\emptyset$).

{An \emph{homogeneous set}} is a proper subset $X \subset V(G)$ of
at least two vertices such that every vertex not in $X$ is
adjacent either to all the vertices in $X$ or to none of them.

Denote by $G[W]$ the subgraph of $G$ induced by $W\subseteq V(G)$,
and by $G - W$ or $G \setminus W$ the graph $G[V(G)\setminus W]$.
A subgraph $H$ (not necessarily induced) of $G$ is a
\emph{spanning subgraph} if $V(H)=V(G)$.

Denote the size of a set $S$ by $|S|$. A \emph{clique} or
\emph{complete set} (resp.\ \emph{stable set} or \emph{independent
set}) is a set of pairwise adjacent (resp.\ nonadjacent) vertices.
We use \emph{maximum} to mean maximum-sized, whereas
\emph{maximal} means inclusion-wise maximal. The use of
\emph{minimum} and \emph{minimal} is analogous. The size of a
maximum clique (resp.\ stable set) in a graph $G$ is denoted by
$\omega(G)$ (resp.\ $\alpha(G)$).

A \emph{vertex cover} is a set $S$ of vertices of a graph $G$ such
that each edge of $G$ has at least one endpoint in $S$. Denote by
$\tau(G)$ the size of a minimum vertex cover in a graph $G$.

A graph is called \emph{trivial} if it has only one vertex. A
graph is \emph{complete} if its vertices are pairwise adjacent.
Denote by $K_n$ the complete graph of size~$n$.

Let $H$ be a graph and $t$ a natural number. The disjoint union of
$t$ disjoint copies of the graph $H$ is denoted by $tH$. In
particular, $\overline{tK_2}$ is the complement of a matching of
size~$t$. Denote by $\mim(G)$ the size of a maximum induced
matching of a graph $G$.

{Denote by $P_n$ the path on $n$ vertices.} Given a connected
graph $G$, let $\lip(G)$ be the length of the longest induced path
of $G$, and $\diam(G)$ its diameter. A graph is a \emph{cograph}
if it contains no induced $P_4$.

For a positive integer $r$, the \emph{$(r \times r)$-grid} $GR_r$
is the graph whose vertex set is $\{(i,j) : 1 \leq i, j \leq r\}$
and whose edge set is $\{(i,j)(k,l) : |i - k| + |j - l| = 1,
\mbox{ where } 1 \leq i,j,k, l \leq r \}$.

The \emph{crown graph} $CR_n$ {(also known as Hiraguchi graph)} is
the graph on $2n$ vertices obtained from a complete bipartite
graph $K_{n,n}$ by removing a perfect matching.

A \emph{dominating set} in a graph is a set of vertices such that
each vertex outside the set has at least one neighbor in the set.

A \emph{coloring} of a graph is an assignment of colors to its
vertices such that any two adjacent vertices are assigned
different colors. The smallest number $t$ such that $G$ admits a
coloring with $t$ colors (a \emph{$t$-coloring}) is called the
\emph{chromatic number} of $G$ and is denoted by $\chi(G)$. A
coloring defines a partition of the vertices of the graph into
stable sets, called \emph{color classes}.

A graph $G(V,E)$ is a \emph{comparability graph} if there exists
an ordering $v_1, \dots , v_n$ of $V$ such that, for each triple
$(r,s,t)$ with $r<s<t$, if $v_r v_s$ and $v_s v_t$ are edges of
$G$, then so is $v_r v_t$.\ Such an ordering is a
\emph{comparability ordering}. A graph is a \emph{co-comparability
graph} if its complement is a comparability graph.

In the context of thinness, an ordering $v_1, \dots , v_n$ of
$V(G)$ and a partition of $V(G)$ satisfying that for each triple
$(r, s, t)$ with $r <s<t$, if $v_r$, $v_s$ belong to the same
class and $v_tv_r \in E(G)$, then $v_tv_s \in E(G)$, are said to
be \emph{consistent}. If both $v_1, \dots , v_n$ and $v_n, \dots ,
v_1$ are consistent with the partition, the partition and the
ordering $v_1,\dots,v_n$ are said to be \emph{strongly
consistent}. Notice that a graph is (proper) $k$-thin if and only
if it admits a vertex ordering and a vertex partition into $k$
classes that are (strongly) consistent.

We will often use the following definitions and results.

Let $G$ be a graph and ${<}$ an ordering of its vertices. The
graph $G_{<}$ has $V(G)$ as vertex set, and $E(G_{<})$ is such
that for $v < w$ in the ordering, $vw \in E(G_{<})$ if and only if
there is a vertex $z$ in $G$ such that $w < z$ in the ordering,
$zv \in E(G)$ and $zw \not \in E(G)$. Similarly, the graph
$\tilde{G}_{<}$ has $V(G)$ as vertex set, and $E(\tilde{G}_{<})$
is such that for $v < w$ in the ordering, $vw \in
E(\tilde{G}_{<})$ if and only if either $vw \in E(G_{<})$ or there
is a vertex $x$ in $G$ such that $x < v$ in the ordering, $xw \in
E(G)$ and $xv \not \in E(G)$. An edge of $G_{<}$ (respectively
$\tilde{G}_{<}$) represents that its endpoints cannot belong to
the same class in a vertex partition that is consistent
(respectively strongly consistent) with the ordering ${<}$.

%

\begin{theorem}\label{thm:thin-comp-order}\cite{B-D-thinness,B-M-O-thin-tcs} Given a graph $G$ and an ordering ${<}$ of its vertices,
the graphs $G_{<}$ and $\tilde{G}_{<}$ have the following
properties:
    \begin{enumerate}[(1)]
\item\label{item:2} the chromatic number of $G_{<}$ (resp.
$\tilde{G}_{<}$) is equal to the minimum integer $k$ such that
there is a partition of $V(G)$ into $k$ sets that is consistent
(resp. strongly consistent) with the order ${<}$, and the color
classes of a valid coloring of $G_{<}$ (resp. $\tilde{G}_{<}$)
form a partition consistent (resp. strongly consistent) with
${<}$;

\item $G_{<}$ and $\tilde{G}_{<}$ are co-comparability graphs;

\item if $G$ is a co-comparability graph and ${<}$ a comparability
ordering of $\overline{G}$, then $G_{<}$ and $\tilde{G}_{<}$ are
spanning subgraphs of $G$.
\end{enumerate}
\end{theorem}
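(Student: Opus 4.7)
The plan is to prove the three items essentially in order, with each resting on a direct unfolding of the definitions of $G_<$ and $\tilde{G}_<$. The edges of $G_<$ are built to record exactly the pairs of vertices that, by consistency with ${<}$, must lie in distinct classes; similarly for $\tilde{G}_<$ and strong consistency. So item~(1) should reduce to matching this encoding with the coloring condition, and the other two items to careful case analyses on the witness vertex.

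For item~(1), I would argue that a partition $\mathcal P$ is consistent with ${<}$ if and only if $\mathcal P$ is a proper coloring of $G_<$. Direction ``$\Leftarrow$'': if $v_r,v_s$ lie in the same class with $r<s$ and $v_r v_t \in E(G)$ with $s<t$, then taking $z=v_t$ shows the pair $(v_r,v_s)$ would be an edge of $G_<$ unless $v_s v_t \in E(G)$; so consistency holds. Direction ``$\Rightarrow$'': if $v_rv_s \in E(G_<)$ with $r<s$, the witness $z$ exhibits a triple that forces $v_r,v_s$ into different classes of any consistent partition. The same reasoning, applied symmetrically for the extra edges of $\tilde{G}_<$ (witness $x$ earlier than both endpoints), handles strong consistency. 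Item~(1) then follows since the chromatic number is the minimum number of classes in any proper coloring.

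For item~(2), the claim is that the ordering ${<}$ itself is a comparability ordering of $\overline{G_<}$ (and of $\overline{\tilde{G}_<}$). Contrapositively, I would show: for indices $r<s<t$, if $v_r v_t \in E(G_<)$, then $v_rv_s \in E(G_<)$ or $v_s v_t \in E(G_<)$. Unfolding the hypothesis gives a $z$ with $t < z$ (in ${<}$), $z v_r \in E(G)$, $z v_t \notin E(G)$. Branch on whether $z v_s \in E(G)$: if yes, then $(v_s,v_t,z)$ witnesses $v_s v_t \in E(G_<)$; if no, then $(v_r,v_s,z)$ witnesses $v_r v_s \in E(G_<)$. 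For $\tilde{G}_<$, the new kind of edge has a witness $x$ with $x<r$; a symmetric two-case split on whether $x v_s \in E(G)$ disposes of that case, so the same ordering ${<}$ is a comparability ordering of $\overline{\tilde{G}_<}$.

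For item~(3), assume $G$ is a co-comparability graph and ${<}$ is a comparability ordering of $\overline{G}$. I want to show every edge of $\tilde{G}_<$ (in particular of $G_<$) is an edge of $G$. Take $vw \in E(\tilde{G}_<)$ with $v<w$, and suppose for contradiction that $vw \notin E(G)$, i.e., $vw \in E(\overline{G})$. Either the edge comes from $G_<$, with witness $z$ satisfying $w<z$, $zv \in E(G)$, $zw \notin E(G)$: then $vw, wz \in E(\overline{G})$ and comparability forces $vz \in E(\overline{G})$, contradicting $zv \in E(G)$. Or it comes from the extra edges of $\tilde{G}_<$, with witness $x$ satisfying $x<v$, $xw \in E(G)$, $xv \notin E(G)$: then $xv, vw \in E(\overline{G})$ and comparability forces $xw \in E(\overline{G})$, contradicting $xw \in E(G)$. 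Either way we reach a contradiction, so $vw \in E(G)$, and $\tilde{G}_<$ (hence $G_<$) is a spanning subgraph of $G$.

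The main obstacle I expect is mostly bookkeeping: keeping track of the ``witness'' vertex position (above $t$ for $G_<$-edges, below $r$ for the extra $\tilde{G}_<$-edges) and making sure each case branch produces a triple in the right orientation to apply either the definition of the graphs or the comparability of ${<}$. Nothing looks technically deep; the only subtle point is handling $\tilde{G}_<$ uniformly, where both the ``forward'' and ``backward'' witnesses must be treated, which amounts to noticing that the extra edges are exactly the symmetric image of the $G_<$-edges and so behave identically under a symmetric argument.
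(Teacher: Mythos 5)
Your proof is correct: each of the three items is established by exactly the intended unfolding of the definitions of $G_<$ and $\tilde{G}_<$ (edges encode forced class-separations for item (1), the order ${<}$ itself is a comparability ordering of the complements for item (2), and transitivity of the comparability ordering of $\overline{G}$ yields item (3)). The paper only quotes this theorem from the cited references rather than proving it, and your argument coincides with the standard proof given there, so there is nothing to flag.
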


Since co-comparability graphs are perfect~\cite{Meyn-co-comp},
$\chi(G_{<})=\omega(G_{<})$ and
$\chi(\tilde{G}_{<})=\omega(\tilde{G}_{<})$. We thus have the
following.

\begin{corollary}\label{cor:thin-comp-order} Let $G$ be a graph, and $k$ a positive integer.
Then $\thin(G) \geq k$ (resp. $\pthin(G) \geq k$) if and only if,
for every ordering ${<}$ of $V(G)$, the graph $G_{<}$ (resp.
$\tilde{G}_{<}$) has a clique of size $k$.
\end{corollary}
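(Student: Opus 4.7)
The plan is to derive this directly from Theorem~\ref{thm:thin-comp-order} together with the perfectness of co-comparability graphs, which is already invoked just before the corollary. I would prove both statements in parallel, handling $\thin(G)$ with $G_{<}$ and $\pthin(G)$ with $\tilde{G}_{<}$, since the two arguments are identical up to swapping ``consistent'' for ``strongly consistent''.

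First, I would rewrite the invariants via a minimum over orderings. By definition, $\thin(G)$ is the smallest $k$ for which there exist an ordering ${<}$ and a partition of $V(G)$ into $k$ classes that are consistent. Fixing ${<}$ and minimizing over partitions, item~(\ref{item:2}) of Theorem~\ref{thm:thin-comp-order} identifies this minimum with $\chi(G_{<})$. Taking the minimum over orderings then gives
\[
\thin(G) \;=\; \min_{<}\, \chi(G_{<}), \qquad \pthin(G) \;=\; \min_{<}\, \chi(\tilde{G}_{<}).
\]

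Next, I would replace $\chi$ with $\omega$. By item~(2) of the same theorem, $G_{<}$ and $\tilde{G}_{<}$ are co-comparability graphs, and co-comparability graphs are perfect, so $\chi(G_{<})=\omega(G_{<})$ and $\chi(\tilde{G}_{<})=\omega(\tilde{G}_{<})$, as already noted in the paragraph preceding the corollary. Hence
\[
\thin(G) \;=\; \min_{<}\, \omega(G_{<}), \qquad \pthin(G) \;=\; \min_{<}\, \omega(\tilde{G}_{<}).
\]

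Finally, I would unfold the ``minimum $\geq k$'' statement. The equality $\thin(G) = \min_{<} \omega(G_{<})$ gives $\thin(G) \geq k$ if and only if $\omega(G_{<}) \geq k$ for every ordering ${<}$, which is precisely the existence of a clique of size $k$ in $G_{<}$ for every ordering; the same reasoning yields the statement for $\pthin(G)$ via $\tilde{G}_{<}$. There is no real obstacle here — the argument is a direct chaining of Theorem~\ref{thm:thin-comp-order}(\ref{item:2}), Theorem~\ref{thm:thin-comp-order}(2), and perfectness — the only thing to be careful about is keeping the ``consistent'' versus ``strongly consistent'' sides correctly paired with $G_{<}$ versus $\tilde{G}_{<}$.
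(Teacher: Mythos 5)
Your proposal is correct and matches the paper's (implicit) argument exactly: the corollary is obtained by combining Theorem~\ref{thm:thin-comp-order}(\ref{item:2}) with the perfectness of the co-comparability graphs $G_{<}$ and $\tilde{G}_{<}$ to get $\thin(G)=\min_{<}\omega(G_{<})$ and $\pthin(G)=\min_{<}\omega(\tilde{G}_{<})$, and then unfolding the inequality. Nothing is missing, and your care in pairing consistent/strongly consistent with $G_{<}$/$\tilde{G}_{<}$ is exactly the right point to watch.
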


We will define also two new concepts related to (proper) thinness:
independent (proper) thinness and complete (proper) thinness.
These concepts are involved in some of the bounds of
Section~\ref{sec:thin-and-oper}.

A graph $G=(V,E)$ is \emph{$k$-independent-thin} if there exist an
ordering of $V$ and a partition of $V$ into $k$ classes,
consistent with the ordering, and such that each class is an
independent set of the graph.  The minimum $k$ such that $G$ is
$k$-independent-thin is called the \emph{independent thinness} of
$G$ and is denoted by $\indthin(G)$. Similarly, we can define the
concept of \emph{proper $k$-independent-thin} and
\emph{independent proper thinness} (denoted by $\indpthin(G)$),
where the partition has to be consistent with the ordering and its
reverse. Exchanging independent set by complete set, we define the
concepts of \emph{$k$-complete-thin}, \emph{complete thinness}
(denoted by $\compthin(G)$), \emph{proper $k$-complete-thin} and
\emph{complete proper thinness} (denoted by $\comppthin(G)$).

\begin{remark}\label{rem:co-comp-ind-thin} Notice that $\indpthin(G) \geq \indthin(G) \geq \chi(G)$
and, by Theorem~\ref{thm:thin-comp-order}, $\indpthin(G) =
\indthin(G) = \chi(G)$ when $G$ is a co-comparability graph.
Indeed, we can also see the independent (proper) thinness as a
coloring problem in a graph whose vertex set is $V(G)$ and whose
edge set is $E(G) \cup E(G_{<})$ (resp. $E(G) \cup
E(\tilde{G}_{<})$). Similarly, $\comppthin(G) \geq \compthin(G)
\geq \chi(\overline{G})$ and we can see the complete (proper)
thinness as a coloring problem in a graph whose vertex set is
$V(G)$ and whose edge set is $E(\overline{G}) \cup E(G_{<})$
(resp. $E(\overline{G}) \cup E(\tilde{G}_{<})$).
Theorem~\ref{thm:tK2} and Corollary~\ref{cor:crown} show that the
bounds $\compthin(G) \geq \chi(\overline{G})$ and $\indthin(G)
\geq \chi(G)$ can be arbitrarily bad. Notice also that, given a
(proper) $k$-thin representation of a graph, we can split each
class into independent sets and obtain a (proper)
$k$-independent-thin representation. Thus $\opindthin(G) \leq
\chi(G)\opthin(G)$. Analogously, $\opcompthin \leq
\chi(\overline{G})\opthin(G)$.
\end{remark}

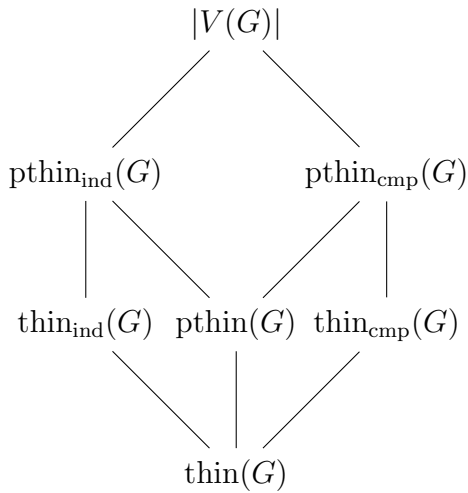
\begin{figure}[t]
\begin{center}
\begin{tikzpicture}
  \node (max) at (0,4) {$|V(G)|$};
  \node (a) at (-2,2) {$\indpthin(G)$};
  \node (c) at (2,2) {$\comppthin(G)$};
  \node (d) at (-2,0) {$\indthin(G)$};
  \node (e) at (0,0) {$\pthin(G)$};
  \node (f) at (2,0) {$\compthin(G)$};
  \node (min) at (0,-2) {$\thin(G)$};
  \draw (min) -- (d) -- (a) -- (max)
  (e) -- (min) -- (f) -- (c) -- (max);
  \draw[preaction={draw=white, -,line width=6pt}] (a) -- (e) -- (c);
\end{tikzpicture}
\end{center}
\caption{Hasse diagram of the parameters involved (if $\alpha$
precedes $\beta$ in the diagram, then $\alpha\leq \beta$). We will
state the strongest results, and the consequences for other
parameters can be deduced from the diagram.}\label{fig:hasse}
\end{figure}


\section{Thinness of some graph families and general
bounds}\label{sec:families}

In this section, we determine or give lower bounds for the
thinness and proper thinness of families of graphs, as induced
matchings, crowns, and hypercubes. In addition, we determine both
general lower and upper bounds for the thinness and proper
thinness of graphs. Also, we relate the thinness to the
independence and clique numbers of graphs.

For the complement of an induced matching, the exact value of the
thinness is known.

\begin{theorem}\cite{C-M-O-thinness-man}\label{thm:tK2} For every $t \geq
1$, $\thin(\overline{tK_2})=t$. \end{theorem}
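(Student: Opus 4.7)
My plan is to prove matching upper and lower bounds on $\thin(\overline{tK_2})$.

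For the upper bound $\thin(\overline{tK_2}) \leq t$, I would exhibit a concrete $t$-thin representation. Label the non-edges of $\overline{tK_2}$ as $\{a_i, b_i\}$ for $i = 1, \ldots, t$, order the vertices as $a_1, \ldots, a_t, b_1, \ldots, b_t$, and take the partition $V^i = \{a_i, b_i\}$. The only ordered within-class pair is $(a_i, b_i)$, and every vertex appearing after $b_i$ in the ordering is some $b_j$ with $j > i$, whose unique non-neighbor is $a_j \notin \{a_i, b_i\}$; hence $b_j$ is adjacent to both $a_i$ and $b_i$, and the consistency implication is satisfied trivially.

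For the lower bound $\thin(\overline{tK_2}) \geq t$, I would apply Corollary~\ref{cor:thin-comp-order} and, for an arbitrary ordering ${<}$ of $V(\overline{tK_2})$, exhibit a clique of size $t$ in $G_{<}$. The decisive structural fact is that each vertex $w$ has exactly one non-neighbor in $\overline{tK_2}$, namely its matching partner $p(w)$. Therefore the only possible witness $z > w$ in the definition of $E(G_{<})$ is $z = p(w)$, giving the characterization: for $v < w$, the edge $vw$ belongs to $E(G_{<})$ if and only if $w < p(w)$ (the remaining condition $p(w)v \in E(\overline{tK_2})$ amounts to $v \neq p(p(w)) = w$, which is automatic). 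Now let $u_i$ be whichever vertex of the $i$-th matching pair appears first in ${<}$, so that $u_i < p(u_i)$ holds for every $i$. For distinct $i$ and $j$ the vertices $u_i, u_j$ lie in different pairs, and assuming WLOG $u_i < u_j$ the characterization yields $u_i u_j \in E(G_{<})$. Hence $\{u_1, \ldots, u_t\}$ is a clique of size $t$ in $G_{<}$, as required.

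The proof is quite direct, and I do not foresee a substantial obstacle. The one point that warrants care is the characterization of $E(G_{<})$: it relies in an essential way on every vertex of $\overline{tK_2}$ having a unique non-neighbor, which is what rules out spurious witnesses and lets the ``first-in-each-pair'' construction produce a full $K_t$.
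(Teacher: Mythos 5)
Your proof is correct, but your lower bound takes a different route from the paper's. The upper bound is the same in both: the partition into the $t$ nonadjacent pairs is consistent (the paper notes it is consistent with \emph{every} ordering; your specific ordering $a_1,\dots,a_t,b_1,\dots,b_t$ is a harmless specialization). For the lower bound, the paper argues directly by contradiction: given a consistent ordering and a partition into $t-1$ classes, it applies the pigeonhole principle to the $t-1$ class-minima to find a pair $\{x_i,y_i\}$ containing no minimum, and then the triple $(f(V^q),x_i,y_i)$, where $V^q$ is the class of $x_i$, violates consistency since $y_i$ is adjacent to $f(V^q)$ but not to $x_i$. You instead invoke Corollary~\ref{cor:thin-comp-order} and, for an arbitrary ordering, exhibit an explicit $t$-clique in $G_{<}$, namely the first vertex of each matching pair, using the fact that the unique non-neighbor $p(w)$ is the only possible witness in the definition of $E(G_{<})$. (One pedantic point: your parenthetical ``$p(w)v\in E$ amounts to $v\neq w$'' also tacitly needs $v\neq p(w)$, but in that case $w<p(w)$ fails anyway, and in your application $u_i,u_j$ lie in different pairs, so nothing breaks.) Your argument is essentially the specialization to $\overline{tK_2}$ of the technique the paper uses for Theorem~\ref{thm:gcrown} on $\overline{G_1\boxminus_f G_2}$, where the unique-non-neighbor structure upgrades the generic $n/2$ bound to the full $t$; it buys an explicit clique in the incompatibility graph and aligns with the machinery used throughout the paper, whereas the paper's own proof of this theorem is more elementary and self-contained, needing no $G_{<}$ apparatus.
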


The vertex partition used when proving the part
``$\thin(\overline{tK_2})=t$'' of the equation of
Theorem~\ref{thm:tK2}, which is consistent with any vertex
ordering, is the one where each class consists of a pair of
nonadjacent vertices. It is easy to see that this partition is
also strongly consistent with any vertex ordering. So we have the
following corollary.

\begin{corollary}\label{thm:tK2prop} For every $t \geq
1$,
$\pthin(\overline{tK_2})=\indthin(\overline{tK_2})=\indpthin(\overline{tK_2})=t$.
\end{corollary}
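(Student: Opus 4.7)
The plan is to derive the corollary from Theorem~\ref{thm:tK2} by sandwiching: the Hasse diagram in Figure~\ref{fig:hasse} gives the chain
\[ \thin(G) \le \pthin(G) \le \indpthin(G), \qquad \thin(G) \le \indthin(G) \le \indpthin(G), \]
so once I know the upper bound $\indpthin(\overline{tK_2}) \le t$, the lower bound $\thin(\overline{tK_2}) = t$ from Theorem~\ref{thm:tK2} forces equality throughout.

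To obtain the upper bound $\indpthin(\overline{tK_2}) \le t$, I will take exactly the partition used in the proof of Theorem~\ref{thm:tK2}: label the edges of the matching $tK_2$ as $u_iv_i$ for $i=1,\dots,t$, and let the $i$-th class of the partition of $V(\overline{tK_2})$ be $\{u_i,v_i\}$. Each class is a non-edge of $\overline{tK_2}$ (in fact exactly a pair of twins), so the partition is into independent sets, which is what the ``$\mathrm{ind}$'' variants require. It remains to check that this partition is strongly consistent with every ordering of the vertices, which then yields the required bound for $\indpthin$.

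The verification is essentially automatic from the structure of $\overline{tK_2}$: in this graph the only non-edges are precisely the pairs $\{u_i,v_i\}$, i.e., the classes themselves. Fix any ordering ${<}$ of $V(\overline{tK_2})$ and any triple $v_r < v_s < v_t$. For the consistency direction, suppose $v_r$ and $v_s$ lie in the same class; then $v_t$ lies in a different class from $v_s$, hence $v_tv_s \in E(\overline{tK_2})$, and the implication ``$v_tv_r\in E \Rightarrow v_tv_s\in E$'' holds trivially. For the other direction, suppose $v_s$ and $v_t$ lie in the same class and $v_tv_r \in E(\overline{tK_2})$; then $v_r$ is in a class distinct from the common class of $v_s,v_t$ (otherwise $v_r,v_s,v_t$ would be three distinct vertices in a size-$2$ class), so $v_sv_r$ is also a cross-class pair and therefore an edge. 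Thus the partition is strongly consistent with any ordering, proving $\indpthin(\overline{tK_2})\le t$.

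Combining the two directions, $\indpthin(\overline{tK_2}) \le t \le \thin(\overline{tK_2}) \le \pthin(\overline{tK_2}), \indthin(\overline{tK_2}) \le \indpthin(\overline{tK_2})$, so all three parameters equal $t$. There is no real obstacle here; the only thing worth emphasizing is that the same partition simultaneously witnesses both the properness and the independence conditions, which is why one single argument handles all three parameters at once.
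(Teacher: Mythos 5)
Your proof is correct and follows essentially the same route as the paper: it reuses the partition of $V(\overline{tK_2})$ into the $t$ nonadjacent pairs from Theorem~\ref{thm:tK2}, observes that these classes are independent and strongly consistent with every ordering (you merely spell out the verification the paper calls ``easy to see''), and then sandwiches all parameters between $\thin(\overline{tK_2})=t$ and $\indpthin(\overline{tK_2})\leq t$ via the inequalities of Figure~\ref{fig:hasse}. No issues.
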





\subsection{Lower bounds}

{ Let $G_1$ and $G_2$ be graphs on $n$ vertices, and $f: V(G_1)
\to V(G_2)$ a bijection. Let $G_1 \boxminus_f G_2$ with vertex set
$V(G_1) \cup V(G_2)$, such that $V(G_i)$ induces $G_i$ for $i =
1,2$, and the edges from $V(G_1)$ to $V(G_2)$ are exactly
$\{vf(v)\}_{v \in V(G_1)}$. When $G_1$ or $G_2$ is either the
complete graph $K_n$ or the empty graph $nK_1$, we can omit $f$ by
symmetry. Notice that $\overline{tK_2} = \overline{tK_1 \boxminus
tK_1}$, and the \emph{crown} $CR_n = \overline{K_n \boxminus
K_n}$.

\begin{theorem}\label{thm:gcrown}
For $G_1$ and $G_2$ graphs on $n$ vertices and $f: V(G_1) \to
V(G_2)$ a bijection, $\thin(\overline{G_1 \boxminus_f G_2}) \geq
n/2$.
\end{theorem}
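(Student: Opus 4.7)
The plan is to apply Corollary~\ref{cor:thin-comp-order}: to prove $\thin(H) \geq n/2$ for $H := \overline{G_1 \boxminus_f G_2}$, it suffices to show that for every vertex ordering $<$ of $V(H)$ the auxiliary graph $H_<$ contains a clique of size at least $n/2$. The strategy is to build such a clique using only the matching-complement structure between $A := V(G_1)$ and $B := V(G_2)$, ignoring the internal edges of $G_1$ and $G_2$ entirely.

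The key observation is the crown-like bipartite structure of $H$: the only edges of $G_1 \boxminus_f G_2$ joining $A$ and $B$ form the perfect matching $M = \{vf(v) : v \in A\}$, so for $a \in A$ and $b \in B$, $ab \in E(H)$ precisely when $b \neq f(a)$. Fix an ordering $<$ and classify each matched pair $\{v, f(v)\}$ by orientation: let $I_A = \{v \in A : v < f(v)\}$ and $I_B = \{v \in A : v > f(v)\}$. Since $|I_A| + |I_B| = n$, pigeonhole gives $\max(|I_A|,|I_B|) \geq n/2$, and up to swapping the roles of $G_1$ and $G_2$ via $f^{-1}$ we may assume $|I_A| \geq n/2$.

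The remaining step is to verify that $I_A$ itself is a clique in $H_<$. Given $v, w \in I_A$ with $v < w$, take as witness $z := f(w) \in B$. Because $w \in I_A$, we have $z = f(w) > w$, meeting the positional requirement in the definition of $H_<$. The pair $zw = wf(w)$ lies in $M$, so $zw \notin E(H)$; meanwhile $v \in A$, $z \in B$, and $f^{-1}(z) = w \neq v$ imply that $zv$ is not a matching edge of $G_1 \boxminus_f G_2$, hence $zv \in E(H)$. This certifies $vw \in E(H_<)$, so $|I_A| \geq n/2$ provides the desired clique, and Corollary~\ref{cor:thin-comp-order} concludes $\thin(H) \geq n/2$.

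The argument carries no real obstacle: it rests on the single clean fact that each matching non-edge between $A$ and $B$ supplies the later of its two endpoints as a tailor-made $z$-witness in the definition of $H_<$, a mechanism entirely independent of what $G_1$ and $G_2$ look like internally. The one point that needs care is the orientation split $A = I_A \cup I_B$, which is precisely what guarantees the witness $z$ sits strictly after both endpoints in the ordering.
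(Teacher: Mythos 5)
Your proof is correct and follows essentially the same route as the paper: for an arbitrary ordering, the vertices of one side that precede their matched partner form a clique in $H_<$ (with the partner of the later vertex serving as the witness $z$), and the pigeonhole principle gives a clique of size at least $n/2$. The only cosmetic difference is that the paper verifies the clique property on both sides explicitly while you invoke the $A$/$B$ symmetry via $f^{-1}$ to treat just one side, which is legitimate.
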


\begin{proof}
Let $G = \overline{G_1 \boxminus_f G_2}$, and $<$ an arbitrary
ordering of its vertices. We will show that $\omega(G_{<}) \geq
n/2$. Let $A = V(G_1)$ and $A'=V(G_2)$, and for each vertex $v \in
A$, let $v' = f(v)$ (the only vertex in $A'$ that is not adjacent
to $v$ in $G$).

By definition of $G_{<}$, if $v < v'$ then $v$ is adjacent in
$G_{<}$ to every vertex $w$ in $A$ such that $w < v$, and to every
vertex $w$ in $A$ such that $v < w < w'$. Analogously, if $v' <
v$, then $v'$ is adjacent in $G_{<}$ to every vertex $w'$ in $A'$
such that $w' < v'$, and to every vertex $w'$ in $A'$ such that
$v' < w' < w$. Therefore, the vertices $v \in A$ such that $v <
v'$ form a clique in $G_{<}$, and the vertices $v' \in A'$ such
that $v' < v$ form a clique in $G_{<}$. Since for each of the $n$
pairs of vertices $v,v'$ one of the inequalities holds, by the
pigeonhole principle, $G_{<}$ has a clique of size at least $n/2$.
\end{proof}

\begin{corollary}\label{cor:crown}
For every $n \geq 1$, $\thin(CR_n) \geq n/2$.
\end{corollary}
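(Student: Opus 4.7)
The plan is to observe that this is an immediate consequence of Theorem~\ref{thm:gcrown} applied to the specific choice $G_1 = G_2 = K_n$. Recall from the definition just before Theorem~\ref{thm:gcrown} that $CR_n = \overline{K_n \boxminus K_n}$, where by the noted symmetry the bijection $f$ can be omitted when one (or both) of the factors is either complete or empty.

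So the only step I would actually carry out is the substitution: take $G_1 = G_2 = K_n$ in Theorem~\ref{thm:gcrown}, which yields
\[
\thin(CR_n) \;=\; \thin(\overline{K_n \boxminus K_n}) \;\geq\; n/2.
\]
There is no real obstacle here; the work has already been done in Theorem~\ref{thm:gcrown}, whose proof established a clique of size at least $n/2$ in $G_{<}$ for every vertex ordering $<$, via a pigeonhole argument on the $n$ nonadjacent pairs $\{v, f(v)\}$. Since the argument there is oblivious to the internal structure of $G_1$ and $G_2$ (it uses only the edges between $V(G_1)$ and $V(G_2)$ in $\overline{G_1 \boxminus_f G_2}$), specializing to $K_n$ requires no extra verification. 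Thus the corollary follows in one line, and I would present it simply as an application of the theorem together with the identity $CR_n = \overline{K_n \boxminus K_n}$.
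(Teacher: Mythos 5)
Your proposal is correct and is exactly how the paper obtains the corollary: it is stated without further argument as an immediate specialization of Theorem~\ref{thm:gcrown} to $G_1 = G_2 = K_n$, using the identity $CR_n = \overline{K_n \boxminus K_n}$. Nothing more is needed.
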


Since a \emph{fat spider} is the graph $\overline{K_n \boxminus
nK_1}$, the theorem above implies that split graphs have unbounded
thinness.}

\medskip

The \emph{vertex isoperimetric peak} of a graph $G$, denoted as
$b_v(G)$, is defined as $b_v(G) = \max_s \min_{X\subset V, |X|=s}
|N(X) \cap (V(G) \setminus X)|$, i.e., the maximum over $s$ of the
lower bounds for the number of boundary vertices (vertices outside
the set with a neighbor in the set) in sets of size $s$.

\begin{theorem}\cite{C-M-O-thinness-man}\label{thm:peak}
For every graph $G$ with at least one edge, $\thin(G)\geq
b_v(G)/\Delta(G)$.
\end{theorem}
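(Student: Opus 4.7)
The plan is to prove the equivalent inequality $\thin(G)\cdot\Delta(G) \ge b_v(G)$ by showing that a $k$-thin representation forces every initial segment of the underlying ordering to have small vertex boundary.

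First, I would fix an optimal $k$-thin representation of $G$ with $k=\thin(G)$, that is, an ordering $v_1,\dots,v_n$ and a partition $(V^1,\dots,V^k)$ consistent with it. For each $s\in\{1,\dots,n\}$, consider the specific set $X_s=\{v_1,\dots,v_s\}$, and within each class $V^i$ that meets $X_s$, let $v_{r_i}$ denote the last vertex of $V^i\cap X_s$ in the ordering (so $r_i\le s$ whenever this vertex exists). The target is to establish $|N(X_s)\setminus X_s|\le k\,\Delta(G)$ for this particular $X_s$, which already bounds the minimum in the definition of $b_v(G)$ at prefix size $s$.

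The key step is to use consistency to ``funnel'' every boundary neighbor through some representative $v_{r_i}$. Take an arbitrary $v_t\in N(X_s)\setminus X_s$, so $t>s$, and pick some neighbor $v_r\in X_s$ of $v_t$, lying in class $V^i$. Because $r\le r_i<t$ with $v_r,v_{r_i}$ in the same class and $v_tv_r\in E(G)$, consistency yields $v_tv_{r_i}\in E(G)$. Hence $N(X_s)\setminus X_s \subseteq \bigcup_i N(v_{r_i})$, where the union ranges over the (at most $k$) classes meeting $X_s$, and since each $|N(v_{r_i})|\le \Delta(G)$, we get $|N(X_s)\setminus X_s|\le k\,\Delta(G)$.

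Finally, this bound on the specific prefix $X_s$ of size $s$ yields $\min_{|X|=s}|N(X)\setminus X|\le k\,\Delta(G)$ for every $s$, and taking the maximum over $s$ gives $b_v(G)\le k\,\Delta(G)$. Dividing by $\Delta(G)\ge 1$, which is well defined because $G$ has at least one edge, produces the stated inequality. I do not anticipate a genuine obstacle here; the only delicate point is choosing $v_{r_i}$ to be the \emph{last} vertex of $V^i\cap X_s$ in the ordering, so that the index inequality $r<r_i<t$ needed to invoke consistency is automatically satisfied.
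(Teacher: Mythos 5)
Your proposal is correct and uses essentially the same argument as the paper, just mirrored: the paper fixes the $s$ realizing the peak, takes the suffix $S$ of the $s$ largest vertices, and uses consistency to show each class contains at most $\Delta(G)$ vertices of $N(S)\setminus S$ (they are all neighbors of a single vertex of $S$), whereas you take prefixes and use the same consistency step to cover $N(X_s)\setminus X_s$ by the neighborhoods of the at most $\thin(G)$ last-in-class prefix vertices; both routes give $b_v(G)\leq \thin(G)\Delta(G)$. The only point needing a word in your write-up is the degenerate case $r=r_i$, where adjacency to the representative is immediate rather than via consistency, so there is no gap.
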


The thinness of the grid $GR_r$ was lower bounded by using
Theorem~\ref{thm:peak}.

\begin{corollary}\cite{C-M-O-thinness-man}\label{cor:grid} For every $r \geq
2$, $\thin(GR_r) \geq r/4$.
\end{corollary}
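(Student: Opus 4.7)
The plan is to derive the corollary as a direct consequence of Theorem~\ref{thm:peak}. Since every vertex of $GR_r$ has at most four neighbors (up, down, left, right), we have $\Delta(GR_r)=4$, so it suffices to establish the vertex isoperimetric bound $b_v(GR_r)\geq r$; the conclusion then follows immediately by dividing by $\Delta(GR_r)=4$.

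To bound $b_v$ from below, I would pick the single value $s=\lfloor r^{2}/2\rfloor$ and show that every $X\subseteq V(GR_r)$ with $|X|=s$ satisfies $|N(X)\setminus X|\geq r$. For each row $i\in\{1,\dots,r\}$, classify it as \emph{full} (entirely contained in $X$), \emph{empty} (disjoint from $X$), or \emph{mixed}. A mixed row always contains at least one vertex of $V\setminus X$ with a neighbor in $X$ along the same row, so it contributes at least one boundary vertex; vertices from different rows are distinct, so the total boundary is at least the number of mixed rows, and symmetrically at least the number of mixed columns. Splitting into cases, if the number of mixed rows is at least $r$ (only possible when every row is mixed) we are done; otherwise some row is full or empty, and the same dichotomy applies to columns. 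Combining with the balancing constraint $|X|\approx r^{2}/2$ forces the set of full/empty rows (or columns) to realise a ``horizontal'' (or ``vertical'') cut of the grid whose boundary occupies an entire row (or column) of $r$ vertices.

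The main obstacle is the bookkeeping in the case analysis: one must be sure not to double-count boundary vertices between the row contribution and the column contribution, and one must verify that the size constraint $|X|=\lfloor r^{2}/2\rfloor$ really does force at least $r$ mixed rows, at least $r$ mixed columns, or a full row/column adjacent to an empty one. This is a standard discrete isoperimetric argument for the two-dimensional grid; once carried out, it yields $b_v(GR_r)\geq r$, and Theorem~\ref{thm:peak} gives
\[
\thin(GR_r)\;\geq\;\frac{b_v(GR_r)}{\Delta(GR_r)}\;\geq\;\frac{r}{4},
\]
as required.
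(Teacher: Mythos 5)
Your top-level reduction is exactly the paper's: Corollary~\ref{cor:grid} is obtained from Theorem~\ref{thm:peak} together with $\Delta(GR_r)=4$ and the isoperimetric bound $b_v(GR_r)\geq r$. The difference is that the paper simply invokes Chv\'atalov\'a's lemma (cited as \cite{Chv-grid}) for $b_v(GR_r)\geq r$, while you try to prove that bound directly --- and it is precisely there that your argument has a genuine gap. The full/empty/mixed classification gives you one boundary vertex per mixed row (all distinct) and one per mixed column (all distinct), but you may not add the two counts, so a priori you only get $\max\{\#\text{mixed rows},\#\text{mixed columns}\}$. The cases you handle cleanly are: all rows (or all columns) mixed, and a full row coexisting with an empty row (then every column is mixed). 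The case that remains --- at least one full row, at least one full column, and no empty row or column --- is not covered by your claim that the size constraint ``forces a horizontal or vertical cut whose boundary occupies an entire row or column,'' and that claim is false. Take $X$ to be (a trimming to size $\lfloor r^2/2\rfloor$ of) the staircase region $\{(i,j): i+j\leq r+1\}$: it has one full row and one full column, no empty ones, only $r-1$ mixed rows and $r-1$ mixed columns, and its boundary is a diagonal staircase, not a row or a column. Such diagonal sets are exactly the near-extremal configurations for the grid's vertex isoperimetric problem, and certifying that they (and everything else of the chosen size) have boundary at least $r$ is the real content of the lemma; the known proofs use compression/shifting arguments in the spirit of Harper and Wang--Wang, which is why the paper cites the result rather than reproving it.

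A secondary point: you fix the single value $s=\lfloor r^2/2\rfloor$ at the outset, but it is not self-evident that this $s$ realizes the peak; whichever $s$ you choose, you must verify the bound for \emph{all} sets of that size, including the diagonal ones above. So as written the proposal establishes only a weaker bound (roughly $\max$ of mixed rows/columns, which in the bad case can be about $r-R_f$), and the constant $r/4$ in the corollary genuinely needs $b_v(GR_r)\geq r$. Either carry out a compression-style proof of the isoperimetric inequality or, as the paper does, quote Chv\'atalov\'a's lemma; the rest of your derivation (division by $\Delta(GR_r)=4$ via Theorem~\ref{thm:peak}) is fine.
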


We will prove next some other lower bounds, which are very useful
for bounding the thinness of highly symmetric graphs, as is the
case of graph products of highly symmetric graphs.


\begin{theorem}\label{thm:degree}
Let $G$ be a graph. If $|N(u) \setminus N[v]| \geq k$ for all $u,v
\in V(G)$ then $\thin(G) \geq k+1$. Moreover, for every order $<$
of $V(G)$, the first $k+1$ vertices induce a complete graph in
$G_{<}$.
\end{theorem}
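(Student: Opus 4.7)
The plan is to prove the ``moreover'' clause directly and then derive $\thin(G)\geq k+1$ from Corollary~\ref{cor:thin-comp-order}. Fix an arbitrary vertex ordering $v_1 < v_2 < \dots < v_n$ of $V(G)$, and let me try to show that for every pair $1 \le i < j \le k+1$ the edge $v_iv_j$ belongs to $G_{<}$, which will mean that $\{v_1,\dots,v_{k+1}\}$ induces $K_{k+1}$ in $G_{<}$ and hence $\omega(G_{<})\geq k+1$.

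Unpacking the definition of $G_{<}$, the task reduces to producing, for each such pair $(i,j)$, a vertex $z$ with $z > v_j$, $z \in N(v_i)$, and $z \notin N[v_j]$; in other words, a vertex $z \in N(v_i)\setminus N[v_j]$ lying strictly after $v_j$ in the order. This is exactly where the hypothesis is meant to bite: it guarantees $|N(v_i)\setminus N[v_j]|\geq k$, so I only have to argue that not all of these $k$ vertices can be hidden among the $j$ vertices $v_1,\dots,v_j$.

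The key counting step is the observation that among $\{v_1,\dots,v_j\}$ at most $j-2$ members can belong to $N(v_i)\setminus N[v_j]$, because $v_i$ is excluded (it is not a neighbor of itself) and $v_j$ is excluded (it lies in $N[v_j]$). Therefore at least
\[
k - (j-2) \;=\; k-j+2 \;\geq\; k-(k+1)+2 \;=\; 1
\]
elements of $N(v_i)\setminus N[v_j]$ must be strictly greater than $v_j$ in the order. Any one of them is a valid witness $z$, so $v_iv_j \in E(G_{<})$, completing the argument.

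There is no real obstacle here: the proof is essentially a pigeonhole argument on the placement of the $k$ vertices guaranteed by the hypothesis. The only subtle point to state carefully is the exclusion of both $v_i$ (because it is not its own neighbor) and $v_j$ (because $N[v_j]$ is the closed neighborhood), which is what brings the count down from $j$ to $j-2$ and makes the inequality work exactly at $j=k+1$. Once the clique of size $k+1$ is produced in $G_{<}$ for an arbitrary order, Corollary~\ref{cor:thin-comp-order} immediately yields $\thin(G)\geq k+1$.
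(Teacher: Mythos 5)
Your proof is correct and follows essentially the same route as the paper's: for an arbitrary order and any pair $v_i,v_j$ with $i<j\leq k+1$, the hypothesis $|N(v_i)\setminus N[v_j]|\geq k$ plus the exclusion of $v_i$ and $v_j$ leaves a witness $z>v_j$ in $N(v_i)\setminus N[v_j]$, giving the clique of size $k+1$ in $G_{<}$ and then $\thin(G)\geq k+1$ via Corollary~\ref{cor:thin-comp-order}. The paper phrases the same pigeonhole step as $|(N(v_i)\setminus N[v_j])\setminus(\{v_1,\ldots,v_{j-1}\}\setminus\{v_i\})|\geq 1$, so the two arguments coincide.
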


\begin{proof}
Let $v_1, \ldots, v_n$ be an ordering of the vertices of $G$. Let
$i,j$ be such that $1 \leq i < j \leq k+1$. We know that $|N(v_i)
\setminus N[v_j]| \geq k$. Hence, $|(N(v_i) \setminus N[v_j])
\setminus (\{v_1, \ldots, v_{j-1}\} \setminus \{v_i\})| \geq 1$.
Therefore, there exists a vertex $v_h$ with $h > j$ such that $v_h
\in N(v_i)$ and $v_h \notin N(v_j)$, implying that $v_i$ and $v_j$
are adjacent in $G_{<}$.

So, for every order $<$ of vertices of $G$, we have that the first
$k+1$ vertices induce a complete graph in $G_{<}$. By
Corollary~\ref{cor:thin-comp-order}, $\thin(G) \geq k+1$.
\end{proof}



\begin{corollary}\label{cor:k-reg}
Let $G$ be a graph with $\delta(G) \geq d$ and such that for all
$u,v \in V(G)$, $|N(u) \cap N(v)| \leq c < d$. Then $\thin(G) \geq
d-c$.
\end{corollary}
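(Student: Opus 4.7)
The plan is to reduce this statement to Theorem~\ref{thm:degree} by verifying its hypothesis with $k = d-c-1$. Concretely, I would fix two distinct vertices $u, v \in V(G)$ and bound $|N(u) \setminus N[v]|$ from below using the identity $|N(u) \setminus N[v]| = |N(u)| - |N(u) \cap N[v]|$.

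Next, I would split into two cases according to whether $v$ is adjacent to $u$. If $v \in N(u)$, then $N(u) \cap N[v]$ equals $(N(u) \cap N(v)) \cup \{v\}$, so its size is at most $c + 1$; combined with $|N(u)| \geq \delta(G) \geq d$, this yields $|N(u) \setminus N[v]| \geq d - c - 1$. If $v \notin N(u)$, then $N(u) \cap N[v] = N(u) \cap N(v)$ has size at most $c$, giving the stronger bound $|N(u) \setminus N[v]| \geq d - c$. In either case, the hypothesis of Theorem~\ref{thm:degree} is satisfied with $k = d-c-1$, and the conclusion $\thin(G) \geq k + 1 = d - c$ follows immediately.

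There is no real obstacle here; the only subtlety is the careful bookkeeping of the $+1$ that appears when $v$ itself lies in $N(u)$, which is precisely what accounts for the strict inequality $c < d$ in the hypothesis (without it, the bound $d - c - 1 \geq 0$ would fail and Theorem~\ref{thm:degree} would deliver nothing). Since the argument produces the same lower bound uniformly over all ordered pairs of distinct vertices, no further analysis of a vertex ordering is needed beyond invoking Theorem~\ref{thm:degree}.
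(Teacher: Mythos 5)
Your proof is correct and is exactly the derivation the paper intends: the corollary is stated without a separate proof precisely because it follows from Theorem~\ref{thm:degree} by the computation you give, namely $|N(u)\setminus N[v]| \geq \delta(G) - |N(u)\cap N[v]| \geq d-(c+1)$ for adjacent pairs (and $\geq d-c$ for non-adjacent ones), so the theorem applies with $k=d-c-1$ and yields $\thin(G)\geq d-c$. Your remark about $c<d$ guaranteeing $d-c-1\geq 0$ is the right bookkeeping and matches the paper's implicit use of the hypothesis.
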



The class of hypercubes $Q_n$ consists of the graphs whose vertex
sets correspond to all binary strings with fixed size $n$ and two
vertices $u$ and $v$ are adjacent if $u$ and $v$ differ exactly in
one position. We say that $n$ is the \emph{dimension} of the
hypercube $Q_n$. Clearly, $Q_n$ is a $n$-regular graph.

\begin{lemma}\cite{Mulder-Qn}
For all $u,v\in V(Q_n)$, $|N(u) \cap N(v)| \leq 2$.
\end{lemma}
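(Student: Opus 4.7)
The plan is to analyze the common neighborhood $N(u) \cap N(v)$ as a function of the Hamming distance $d(u,v)$ between the binary strings $u$ and $v$, exploiting that adjacency in $Q_n$ is exactly Hamming distance~$1$ and that $Q_n$ is bipartite (with parts determined by the parity of the number of $1$s). The statement is intended for $u \neq v$, since that is how the lemma is used to feed Corollary~\ref{cor:k-reg} (with $c=2$). So I would set $d = d(u,v) \geq 1$ and split into three cases according to the value of $d$.

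If $d=1$, then $u$ and $v$ are adjacent, so any common neighbor would produce a triangle; since $Q_n$ is bipartite it is triangle-free, giving $|N(u)\cap N(v)| = 0$. If $d \geq 3$, then any common neighbor $w$ would satisfy $d(u,w) = d(w,v) = 1$, whence by the triangle inequality $d(u,v) \leq 2$, a contradiction; again $|N(u) \cap N(v)| = 0$. The only nontrivial case is $d=2$: let $u$ and $v$ differ exactly in coordinates $i$ and $j$. A vertex $w$ adjacent to $u$ is obtained by flipping a single coordinate $k$ of $u$, and the set of coordinates in which $w$ and $v$ differ is then the symmetric difference $\{i,j\} \triangle \{k\}$. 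For $w$ to be adjacent to $v$ this set must have size one, which happens exactly when $k \in \{i,j\}$. Thus there are precisely two common neighbors, and $|N(u) \cap N(v)| = 2$.

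There is no real obstacle here: the hypercube metric makes each case essentially immediate, and the only care needed is to write the $d=2$ argument as a clean bijection between common neighbors and the two differing coordinates.
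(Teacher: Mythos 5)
Your argument is correct. Note, however, that the paper does not prove this lemma at all: it is quoted from Mulder's work on interval-regular graphs, so there is no in-paper proof to compare against. Your case analysis by Hamming distance is the standard elementary argument and is complete: a common neighbor $w$ of $u\neq v$ forces $d(u,v)\in\{1,2\}$ by the triangle inequality, the case $d(u,v)=1$ is excluded because $Q_n$ is bipartite and hence triangle-free (or by parity of Hamming distance), and for $d(u,v)=2$ the symmetric-difference computation shows the common neighbors are exactly the two vertices obtained from $u$ by flipping one of the two differing coordinates. Your remark that the statement must be read with $u\neq v$ is also apt: for $u=v$ one has $|N(u)\cap N(v)|=n>2$ when $n\geq 3$, and the intended use in Corollary~\ref{cor:k-reg} (via Theorem~\ref{thm:degree}, whose proof compares two distinct vertices $v_i<v_j$) indeed only requires the bound for distinct vertices.
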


\begin{corollary}\label{cor:Qn}
For every $n \geq 1$, $\thin(Q_n) \geq n-2$.
\end{corollary}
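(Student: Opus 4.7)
The plan is essentially to recognize that Corollary~\ref{cor:Qn} is a one-line consequence of Corollary~\ref{cor:k-reg} applied to the hypercube, using the two facts highlighted immediately before the statement: $Q_n$ is $n$-regular, and any two vertices of $Q_n$ share at most two neighbors.

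More concretely, I would set $d := n$ and $c := 2$ in Corollary~\ref{cor:k-reg}. Since $Q_n$ is $n$-regular, $\delta(Q_n) = n \geq d$. By the cited lemma of Mulder, $|N(u) \cap N(v)| \leq 2 = c$ for all $u,v \in V(Q_n)$. The hypothesis $c < d$ of Corollary~\ref{cor:k-reg} becomes $2 < n$, i.e.\ $n \geq 3$. Under this condition, Corollary~\ref{cor:k-reg} directly yields $\thin(Q_n) \geq d - c = n - 2$, which is the desired bound.

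It remains to take care of the small cases $n \in \{1,2\}$, where the hypothesis $c < d$ fails. But here $n-2 \leq 0$, so $\thin(Q_n) \geq n-2$ holds trivially, since the thinness of any nonempty graph is at least $1$.

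I do not expect any real obstacle: the whole content of the proof is choosing the right parameters in Corollary~\ref{cor:k-reg}. If desired, one could alternatively bypass the corollary and invoke Theorem~\ref{thm:degree} directly, noting that for any $u,v \in V(Q_n)$, $|N(u) \setminus N[v]| \geq |N(u)| - |N(u)\cap N(v)| - 1 \geq n - 3$, so Theorem~\ref{thm:degree} with $k = n-3$ gives $\thin(Q_n) \geq n-2$; but this is just unfolding the proof of Corollary~\ref{cor:k-reg} and seems less clean than citing it.
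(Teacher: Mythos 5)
Your proposal is correct and is exactly the paper's intended argument: the corollary is stated as an immediate consequence of Corollary~\ref{cor:k-reg} with $d=n$ and $c=2$, using $n$-regularity of $Q_n$ and Mulder's lemma, with the cases $n\le 2$ being trivial since $n-2\le 0$. Nothing further is needed.
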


\begin{theorem}
Let $G$ be a graph. Let $S \subseteq V(G)$ and $p = |S|$. If
$|N(u) \setminus N[v]| \geq k$ for all $u,v \in S$ and $|V(G)| - p
\leq k$ then $\thin(G) \geq 1 + k + p - |V(G)|$.
\end{theorem}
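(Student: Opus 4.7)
The plan is to generalize Theorem~\ref{thm:degree} (which is the special case $S=V(G)$, $p=n$) by combining its clique-in-$G_{<}$ construction with a pigeonhole step that confines the clique to $S$. Write $n=|V(G)|$. By Corollary~\ref{cor:thin-comp-order}, it is enough to exhibit, for every vertex ordering ${<}$, a clique of size $1+k+p-n$ in $G_{<}$.

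Fix an arbitrary ordering $v_1,\dots,v_n$. The hypothesis already forces $n\geq k+1$: if $|S|\geq 2$, any $u\in S$ has $|N(u)|\geq |N(u)\setminus N[v]|\geq k$ for some $v\in S\setminus\{u\}$, so $\deg(u)\geq k$, giving $n\geq k+1$. Set $T = S \cap \{v_1,\dots,v_{k+1}\}$. Since at most $n-p$ vertices of $V(G)$ lie outside $S$, at most $n-p$ of the first $k+1$ vertices lie outside $S$, so $|T|\geq (k+1)-(n-p)=1+k+p-n$, and this number is positive by the assumption $|V(G)|-p\leq k$.

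Next I would show that $T$ is a clique in $G_{<}$, reproducing the local argument of Theorem~\ref{thm:degree}. Take $v_i,v_j\in T$ with $i<j\leq k+1$. Because $v_i,v_j\in S$, the hypothesis gives $|N(v_i)\setminus N[v_j]|\geq k$. Neither $v_i$ (since $v_i\notin N(v_i)$) nor $v_j$ (since $v_j\in N[v_j]$) lies in $N(v_i)\setminus N[v_j]$, so at most $j-2\leq k-1$ elements of this set lie in $\{v_1,\dots,v_j\}$. Hence there exists some $v_h$ with $h>j$ belonging to $N(v_i)\setminus N[v_j]$, and by the definition of $G_{<}$ this witnesses $v_iv_j\in E(G_{<})$. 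Applying Corollary~\ref{cor:thin-comp-order} to the clique $T$ of size at least $1+k+p-n$ in $G_{<}$ yields $\thin(G)\geq 1+k+p-n$.

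The only subtle point is this last counting step: verifying that the initial segment $\{v_1,\dots,v_j\}$ cannot swallow the entire set $N(v_i)\setminus N[v_j]$. The bound $j\leq k+1$ together with the automatic exclusion of $v_i$ and $v_j$ supplies just two units of slack, and that is exactly what is needed to guarantee a witness $v_h$ with $h>j$; everything else in the proof is routine bookkeeping on top of Theorem~\ref{thm:degree} and the pigeonhole estimate on $|T|$.
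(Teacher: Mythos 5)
Your proposal is correct and follows essentially the same route as the paper's proof: for an arbitrary ordering, the vertices of $S$ among the first $k+1$ positions are pairwise adjacent in $G_{<}$ by the counting argument of Theorem~\ref{thm:degree}, and there are at least $1+k+p-|V(G)|$ of them, so Corollary~\ref{cor:thin-comp-order} gives the bound. Your extra remarks (the check that $|V(G)|\geq k+1$ and the explicit exclusion of $v_i,v_j$ from $N(v_i)\setminus N[v_j]$) are just slightly more detailed bookkeeping of the same argument.
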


\begin{proof}
Let $v_1, \ldots, v_n$ be an ordering of the vertices of $G$. Let
{$v_i,v_j \in S$ be} such that $1 \leq i < j \leq k+1$.

We know that $|N(v_i) \setminus N[v_j]| \geq k$. Hence, $|(N(v_i)
\setminus N[v_j]) \setminus (\{v_1, \ldots, v_{j-1}\} \setminus
\{v_i\})| \geq 1$. Therefore, there exists a vertex $v_h$ with $h
> j$ such that $v_h \in N(v_i)$ and $v_h \notin N(v_j)$, implying
that $v_i$ and $v_j$ are adjacent in $G_{<}$.

So, for every order $<$ of vertices of $G$, we have that the
vertices in $S$ within the first $k+1$ vertices induce a complete
graph in $G_{<}$. Since they are at least {$k+1-|V(G) \setminus S|
=  k+1-|V(G)|+p$, by Corollary~\ref{cor:thin-comp-order},
$\thin(G) \geq k+1-|V(G)|+p$.}
\end{proof}



\subsection{Upper Bounds}


{Two general upper bounds were known for the thinness of a graph.

\begin{theorem}\cite{C-M-O-thinness-man}\label{thm:n-log4}
Let $G$ be a graph. Then $\thin(G) \leq |V(G)| - \log(|V(G)|)/4$.
\end{theorem}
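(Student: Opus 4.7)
The plan is to apply Ramsey's theorem to produce an explicit thin representation in which a single class absorbs a monochromatic subset of size $\Theta(\log n)$. By the classical bound $R(k,k)\le\binom{2k-2}{k-1}\le 4^{k-1}$, setting $k=\lfloor\log_2(n)/2\rfloor+1$ with $n=|V(G)|$ gives $R(k,k)\le n$, so $G$ contains either a clique or an independent set $S$ of size $k$.

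I would then fix an ordering $v_1,\dots,v_n$ of $V(G)$ that places the vertices of $S$ in the last $k$ positions (with arbitrary internal order), and use the partition $\{S\}\cup\{\{v\}:v\in V(G)\setminus S\}$, which has $n-k+1$ classes. Consistency with this ordering needs to be checked only on triples $r<s<t$ with $v_r,v_s$ in the same non-singleton class: then both lie in $S$, and $s\ge n-k+1$ forces $v_t\in S$ as well. If $S$ is a clique, both $v_tv_r$ and $v_tv_s$ are edges of $G$ and the implication $v_tv_r\in E(G)\Rightarrow v_tv_s\in E(G)$ holds immediately; if $S$ is independent, $v_tv_r\notin E(G)$ and the implication is vacuous.

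This yields $\thin(G)\le n-k+1\le n-\log_2(n)/2+1$. For $n\ge 16$, the target inequality $\thin(G)\le n-\log(n)/4$ follows since the slack $\log_2(n)/2-1-\log(n)/4$ is nonnegative. For the small cases $n\le 15$, the trivial construction merging $v_{n-1}$ and $v_n$ into a single class (vacuously consistent because no vertex follows $v_n$) gives $\thin(G)\le n-1$, and $n-1\le n-\log(n)/4$ holds because $\log(n)/4\le 1$ in that range. The main obstacle is merely aligning the constant $1/4$ with the Ramsey savings; since the argument actually saves about $\frac{1}{2}\log_2 n$ classes, the stated factor $1/4$ is reached comfortably once $n$ is large, and the trivial bound covers the remaining small values.
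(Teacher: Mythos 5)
Your proof is correct, but it follows a genuinely different route from the paper's (which is taken from the cited manuscript of Chandran, Mannino and Oriolo). You build a representation directly: Ramsey's theorem with $R(k,k)\le\binom{2k-2}{k-1}\le 4^{k-1}$ gives a clique or stable set $S$ of size $k=\lfloor\log_2(n)/2\rfloor+1$, which you place last in the order as the unique non-singleton class; your consistency check is right (a potentially violating triple has all three vertices in $S$, so it is satisfied when $S$ is a clique and vacuous when $S$ is stable), and in fact your construction is an instance of Lemma~\ref{n-S+thin}/Corollary~\ref{cor:subgint} ($\thin(G)\le |V(G)|-|S|+1$ when $G[S]$ induces an interval graph) combined with Ramsey, yielding the slightly stronger $\thin(G)\le n-\lfloor\log_2(n)/2\rfloor$; your treatment of $n\le 15$ via the bound $\thin(G)\le n-1$ is fine since $\log(n)/4\le 1$ there (and $n=1$ is trivial because $\log(1)=0$), and the base of the logarithm only helps you since $\log(n)\le\log_2(n)$. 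The paper instead reasons about an optimal consistent partition: letting $X$ be the vertices in singleton classes, $|X|=h$, $q=n-h$, it shows $h\le 2^q$ by pigeonhole on the adjacency patterns toward $V(G)\setminus X$ --- two singleton vertices with identical patterns could be merged (after moving singletons to the front of the order), contradicting optimality --- and then arithmetic gives $t\le n-q/2\le n-\log(n)/4$. What each approach buys: yours is constructive and numerically a bit sharper; the paper's counting of adjacency patterns of singleton vertices is the machinery that also produces the degree-dependent bound of Theorem~\ref{thm:bound-delta}, which the Ramsey argument does not directly give.
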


\begin{theorem}\cite{C-M-O-thinness-man}\label{thm:bound-delta}
Let $G$ be a graph. Then $\thin(G) \leq
|V(G)|(\Delta(G)+3)/(\Delta(G)+4)$. \end{theorem}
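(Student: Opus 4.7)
Plan:

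The plan is to prove $\thin(G) \le n(\Delta+3)/(\Delta+4)$ by exhibiting an ordering of $V(G)$ together with a consistent partition using at most that many classes; equivalently, to save at least $n/(\Delta+4)$ classes over the trivial partition into singletons.

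First I would apply the classical greedy algorithm to construct an independent set $I \subseteq V(G)$ with $|I| \ge n/(\Delta+1)$: repeatedly pick any remaining vertex $v$, add it to $I$, and delete $N[v]$ from consideration. Second I would form an ordering $v_1 < v_2 < \cdots < v_n$ in which every vertex of $V(G) \setminus I$ precedes every vertex of $I$ (with arbitrary internal order within each block), and define the partition $\{I\} \cup \{\{v\} : v \in V(G) \setminus I\}$, which has $n - |I| + 1$ classes. Third I would verify consistency: the only non-singleton class is $I$, so a potential violation requires a triple $r < s < t$ with $v_r, v_s \in I$. But then $v_t$, being later in the ordering than $v_s$, must also lie in $I$, and because $I$ is independent in $G$ the edge $v_t v_r$ does not exist, so the implication in the consistency definition is vacuously satisfied.

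Fourth I would carry out the arithmetic: the partition uses at most $n\Delta/(\Delta+1) + 1$ classes, and a short calculation shows this is at most $n(\Delta+3)/(\Delta+4)$ exactly when $3n \ge (\Delta+1)(\Delta+4)$, which covers the large-$n$ regime.

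The main obstacle is the small-$n$ regime in which $3n < (\Delta+1)(\Delta+4)$, where the greedy bound is not strong enough. In that range $n$ is bounded by a function of $\Delta$, and the target bound $n(\Delta+3)/(\Delta+4)$ is close to $n - 1$. I would handle it using complementary tools: the symmetric bound $\thin(G) \le n - \omega(G) + 1$, obtained by placing a maximum clique at the tail as a single class (the consistency condition inside a clique is automatic, since every required edge is already present), together with the trivial bound $\thin(G) \le n - 1$, obtained by merging any two vertices at the very end of the ordering. Combining these bounds with the greedy independent-set estimate and arguing by cases on the precise value of $n$ relative to $\Delta$ should close the remaining intermediate range. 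I expect this small-$n$ analysis to be the most delicate step, although each individual subcase is essentially routine.
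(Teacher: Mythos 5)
Your constructive part is fine as far as it goes: placing a greedy independent set (or a maximum clique) as a single class at the tail of the ordering is consistent, and it gives $\thin(G)\leq n-\alpha(G)+1$ and $\thin(G)\leq n-\omega(G)+1$ (these are essentially Corollaries~\ref{cor:thin-alpha} and~\ref{cor:thin-omega} of the paper). The genuine gap is in what you call the ``small-$n$ regime''. First, it is not small: the greedy estimate $\alpha\geq n/(\Delta+1)$ yields the theorem only when $3n\geq(\Delta+1)(\Delta+4)$, so the uncovered range is $\Delta+4<n<(\Delta+1)(\Delta+4)/3$, i.e.\ $n$ may be of order $\Delta^2$, not bounded by a function of $\Delta$ in any way that makes a finite case analysis meaningful. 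Second, the tools you list provably do not cover it as stated: the trivial bound $n-1$ needs $n\leq\Delta+4$; the clique bound needs $\omega\geq n/(\Delta+4)+1$, which fails for triangle-free graphs; and the greedy bound needs exactly the inequality $3n\geq(\Delta+1)(\Delta+4)$ that is false in this range (e.g.\ $\Delta=30$, $n=155$: greedy only gives $\alpha\geq 5$, while you would need $\alpha\geq 6$ or $\omega\geq 6$). What your plan implicitly requires is a statement of the form ``every graph with $n>\Delta+4$ satisfies $\alpha\geq n/(\Delta+4)+1$ or $\omega\geq n/(\Delta+4)+1$'', which does not follow from anything you invoke; it is a nontrivial Ramsey/stability-type assertion (true instances of it in the literature, such as independence bounds for $K_r$-free bounded-degree graphs, are substantially harder than the theorem itself). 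So ``arguing by cases on the precise value of $n$ relative to $\Delta$'' cannot close the range with the stated ingredients.

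For contrast, the paper's proof is not constructive and needs no case split: it takes an optimal consistent partition, lets $X$ be the vertices in singleton classes ($|X|=h$, $q=n-h$), and shows $h\leq q(\Delta+2)/2$ by a fingerprint argument --- two singleton-class vertices with identical neighborhoods into $V(G)\setminus X$ could have their classes merged, a singleton-class vertex with no neighbor in $V(G)\setminus X$ could be merged with the class of the largest vertex, and the total number of $1$'s in all fingerprints is at most $q\Delta$. This gives $q\geq 2n/(\Delta+4)$ and hence $t\leq h+q/2=n-q/2\leq n(\Delta+3)/(\Delta+4)$ uniformly in $n$ and $\Delta$. If you want to salvage your approach, you would need to replace the greedy/clique step in the intermediate range by an argument of comparable strength, e.g.\ one that merges singleton classes with equal neighborhoods, which is exactly the paper's idea.
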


We will prove here other general upper bounds.}

\begin{lemma}\label{n-S+thin}
Let $S \subseteq V(G)$. Then $\thin(G) \leq |V(G)| - |S| +
\thin(G[S])$.
\end{lemma}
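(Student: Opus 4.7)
The plan is to take an optimal representation of $G[S]$ and extend it to $G$ by prepending each vertex of $V(G) \setminus S$ as its own singleton class. Let $k = \thin(G[S])$, fix an ordering $<_S$ of $S$ and a partition $(V^1,\dots,V^k)$ of $S$ consistent with $<_S$, and enumerate $V(G) \setminus S = \{u_1, \dots, u_m\}$ arbitrarily, where $m = |V(G)| - |S|$.

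I would then define an ordering $<$ of $V(G)$ by placing first $u_1 < u_2 < \dots < u_m$ and then the vertices of $S$ in the order $<_S$. For the partition, I would use $(\{u_1\}, \dots, \{u_m\}, V^1, \dots, V^k)$, which is a partition of $V(G)$ into exactly $m + k = |V(G)| - |S| + \thin(G[S])$ classes.

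To verify consistency, I would pick an arbitrary triple $v_r < v_s < v_t$ with $v_r$ and $v_s$ in the same class. Since the singleton classes cannot contain two distinct vertices, $v_r$ and $v_s$ must both lie in some $V^j$, and therefore both belong to $S$. By construction, every element of $S$ is greater under $<$ than every element of $V(G) \setminus S$, so $v_t > v_s$ forces $v_t \in S$ as well. Thus the required implication $v_tv_r \in E(G) \Rightarrow v_tv_s \in E(G)$ is precisely the consistency condition for the original representation of $G[S]$, which holds by hypothesis.

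The key point---and the only place where one could slip---is the decision to place the extra vertices at the beginning of the ordering rather than at the end. Prepending them guarantees that whenever the same-class requirement forces $v_r, v_s \in S$, the third vertex $v_t$ is also automatically in $S$, so the (arbitrary) adjacencies between $V(G) \setminus S$ and $S$ are never tested by the consistency condition. Appending the $u_i$'s would fail, because some $u_i$ could play the role of $v_t$ and its edges into $S$ are uncontrolled; this subtlety is the main (and essentially only) obstacle in the argument.
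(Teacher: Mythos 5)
Your proof is correct and rests on exactly the same idea as the paper's: order the vertices of $V(G)\setminus S$ before those of $S$ (each outside vertex effectively contributing one class) and use an optimal representation of $G[S]$ on top, the crucial point in both arguments being that any third vertex $v_t$ forced by the consistency condition then lies in $S$. The only difference is presentational: the paper counts classes via $\omega(G_{<})$ and Corollary~\ref{cor:thin-comp-order}, bounding $\omega(G_{<})\leq \omega(G_{<}[S])+|V(G)\setminus S|$, whereas you build the partition into singletons plus $V^1,\dots,V^k$ explicitly and verify consistency directly, which is an equally valid and slightly more elementary rendering of the same argument.
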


\begin{proof}
Consider an order $<$ of vertices of $G$ such that $v < s$ for all
$v \in V(G) - S$ and $s \in S$, and such that $\omega(G[S]_{<_S})
= \thin(G[S])$, where $<_S$ stands for the order restricted to
$S$. Such an order exists by Corollary~\ref{cor:thin-comp-order}.
Notice that, since $v < s$ for all $v \in V(G) - S$ and $s \in S$,
$G_{<}[S] = G[S]_{<_S}$. Then $\thin(G) \leq \omega(G_{<}) \leq
\omega(G_{<}[S]) + \omega(G_{<}[V(G) \setminus S]) \leq
\omega(G[S]_{<_S}) + |V(G)| - |S| = |V(G)| - |S| + \thin(G[S])$.
\end{proof}

\begin{corollary}\label{cor:subgint}
Let $S \subseteq V(G)$ be such that $G[S]$ is an interval graph.
Then $\thin(G) \leq |V(G)| - |S| + 1$.
\end{corollary}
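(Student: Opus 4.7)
The plan is essentially to invoke Lemma~\ref{n-S+thin} as a black box, since the corollary is the specialization of that lemma to the case where $G[S]$ has thinness one. The only fact I need beyond the lemma is the well-known equivalence, noted in the introduction of the paper, that the class of $1$-thin graphs coincides with the class of interval graphs (a consistent ordering suffices to characterize them, as observed after the definition of proper thinness).

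Concretely, I would proceed in two short steps. First, since $G[S]$ is an interval graph, it is $1$-thin, so $\thin(G[S])=1$. Second, applying Lemma~\ref{n-S+thin} to the subset $S$, I obtain
\[
\thin(G) \;\leq\; |V(G)| - |S| + \thin(G[S]) \;=\; |V(G)| - |S| + 1,
\]
which is exactly the statement of the corollary.

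There is no real obstacle here: the content is entirely absorbed by Lemma~\ref{n-S+thin}, which does the work of producing an ordering that places the vertices of $V(G)\setminus S$ before those of $S$ and then reuses an optimal consistent ordering for $G[S]$. The corollary just records the most useful instance, where $S$ induces an interval subgraph and the term $\thin(G[S])$ collapses to $1$. No additional lemmas or calculations are required.
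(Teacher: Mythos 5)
Your proof is correct and matches the paper's intended argument: the corollary is stated as an immediate consequence of Lemma~\ref{n-S+thin}, using the fact that interval graphs are exactly the $1$-thin graphs, so $\thin(G[S])=1$. Nothing further is needed.
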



An \emph{interval completion} of a graph $G$ is a spanning
supergraph of $G$ which is an interval graph.

\begin{lemma}
Let $G$ be a graph. Let $H$ be an interval completion of $G$. Let
$F$ be the subgraph of $H$ whose edges are $E(H) - E(G)$. Then
{the number of vertices of a maximum induced interval subgraph of
$G$}  is at least $|V(G)| - \tau(F)$.
\end{lemma}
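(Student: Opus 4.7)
The plan is to exhibit an explicit induced interval subgraph of $G$ of the desired size by removing a minimum vertex cover of $F$ from $G$. The key observation is that the complement of a vertex cover is an independent set, so restricting to those vertices eliminates exactly the edges added when going from $G$ to $H$.

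More concretely, I would proceed as follows. First, let $C \subseteq V(G)$ be a minimum vertex cover of $F$, so $|C| = \tau(F)$, and set $W = V(G) \setminus C$. Since $C$ covers every edge of $F$, the set $W$ is independent in $F$, i.e.\ $E(F) \cap \binom{W}{2} = \emptyset$. Now for any pair $u,v \in W$, any edge $uv$ belonging to $H$ lies either in $E(G)$ or in $E(H) \setminus E(G) = E(F)$; the second option is excluded by independence of $W$ in $F$. Hence $E(H[W]) \subseteq E(G[W])$, and since $E(G) \subseteq E(H)$, the reverse inclusion also holds. Therefore $H[W] = G[W]$.

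Next, I invoke the fact that the class of interval graphs is hereditary (closed under induced subgraphs). Since $H$ is an interval graph by assumption, $H[W]$ is also an interval graph, and by the equality just established, $G[W]$ is an induced interval subgraph of $G$. Its number of vertices is $|V(G)| - |C| = |V(G)| - \tau(F)$, which therefore is a lower bound on the size of a maximum induced interval subgraph of $G$.

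There is no real obstacle here: once one notices that the ``bad'' edges (those present in $H$ but not in $G$) are precisely the edges of $F$, a minimum vertex cover of $F$ is the natural smallest deletion set that kills all of them, and the hereditary property of interval graphs does the rest. The only thing to state carefully is the double inclusion ensuring $H[W] = G[W]$, which is what justifies transferring the interval property from $H$ down to an induced subgraph of $G$.
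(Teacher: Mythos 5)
Your proof is correct and follows essentially the same route as the paper's: remove a minimum vertex cover of $F$, observe that no edge of $E(H)\setminus E(G)$ survives so the induced subgraphs of $H$ and $G$ coincide on the remaining vertices, and use heredity of interval graphs. You have merely spelled out the double inclusion that the paper leaves implicit.
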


\begin{proof}
Let $H$ be an interval completion of $G$. Then $H$ has $|V(G)|$
vertices. Let $F$ be the subgraph of $H$ whose edges are $E(H) -
E(G)$. If we remove from $H$ the vertices of a vertex cover in
$F$, we get an interval graph that is an induced subgraph of $G$.
\end{proof}

\begin{corollary}
Let $G$ be a graph. Let $H$ be an interval completion of $G$. Let
$F$ be the subgraph of $H$ whose edges are $E(H) - E(G)$. Then
$\thin(G) \leq \tau(F) + 1$.
\end{corollary}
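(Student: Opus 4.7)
The plan is to combine the immediately preceding (unnumbered) lemma with Corollary~\ref{cor:subgint}; no new machinery should be needed. Specifically, the previous lemma produces, from any interval completion $H$ of $G$ and the corresponding ``fill-in'' graph $F$ with edge set $E(H)\setminus E(G)$, an induced interval subgraph of $G$ on at least $|V(G)| - \tau(F)$ vertices, and Corollary~\ref{cor:subgint} turns such a large induced interval subgraph directly into a bound on $\thin(G)$.

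Concretely, I would proceed as follows. First, invoke the preceding lemma to obtain a set $S \subseteq V(G)$ with $|S| \geq |V(G)| - \tau(F)$ such that $G[S]$ is an interval graph: indeed, letting $C$ be a minimum vertex cover of $F$, the subgraph $H - C$ is still an interval graph (interval graphs are hereditary), and every edge of $H - C$ belongs to $E(G)$ because any edge of $F$ has at least one endpoint in $C$; hence $G[V(G) \setminus C] = H - C$ is interval, and $|V(G) \setminus C| = |V(G)| - \tau(F)$. Second, apply Corollary~\ref{cor:subgint} with this $S$ to conclude
\[
\thin(G) \;\leq\; |V(G)| - |S| + 1 \;\leq\; |V(G)| - (|V(G)| - \tau(F)) + 1 \;=\; \tau(F) + 1.
\]

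There is essentially no obstacle here: the statement is a one-line chaining of the two previously established facts, and the only thing worth being careful about is verifying that removing a vertex cover of $F$ from $V(H)$ really does yield an induced subgraph of $G$ (not merely of $H$), which is immediate from the definition of a vertex cover applied to the edge set $E(H) \setminus E(G)$.
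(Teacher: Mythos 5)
Your proposal is correct and matches the paper's (implicit) argument exactly: the preceding lemma is proved in the paper precisely by removing a minimum vertex cover of $F$ from $H$ to obtain an induced interval subgraph of $G$ on at least $|V(G)|-\tau(F)$ vertices, and the corollary then follows by plugging this into Corollary~\ref{cor:subgint}. Your verification that $G[V(G)\setminus C]=H-C$ is the right small detail to check, and it is handled correctly.
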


In particular, stable and complete sets induce interval graphs.
Moreover, if $\alpha(G) < |V(G)|$ (resp. $\omega(G) < |V(G)|$), we
can add one more vertex $v$ and reorder the vertices of the stable
or complete set $S$ such that $u < w$ for all $u \in S - N(v)$ and
$w \in N(s) \cap S$ and $s < v$ for all $s \in S$, so $G$ has an
induced interval graph of size at least $\alpha(G) + 1$ (resp.
$\omega(G) +1$). As corollaries of Corollary~\ref{cor:subgint}, we
have the following two results.

\begin{corollary}\label{cor:thin-alpha}
If $V(G)$ is not a stable set then $\thin(G) \leq |V(G)| -
\alpha(G)$.
\end{corollary}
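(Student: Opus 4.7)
The plan is to derive this bound as an essentially immediate consequence of Corollary~\ref{cor:subgint}, following the informal argument sketched in the paragraph just before the statement. The hypothesis that $V(G)$ is not a stable set means that $\alpha(G) < |V(G)|$, so I would fix a maximum stable set $S$ of $G$ with $|S| = \alpha(G)$ together with any vertex $v \in V(G) \setminus S$; the set $T := S \cup \{v\}$ then has size $\alpha(G) + 1$.

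The key step is to verify that $G[T]$ is an interval graph. Since $S$ is independent in $G$, the only edges of $G[T]$ are those joining $v$ to $N(v) \cap S$, so $G[T]$ is a (possibly trivial) star centered at $v$ together with the isolated vertices of $S \setminus N(v)$. Such a graph is trivially an interval graph: one may represent $v$ by a long interval that overlaps pairwise disjoint short intervals for each neighbor in $S$, and place the remaining vertices as disjoint intervals off to one side.

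With this in hand, I would apply Corollary~\ref{cor:subgint} to the set $T$ to obtain $\thin(G) \leq |V(G)| - |T| + 1 = |V(G)| - \alpha(G)$, which is the desired inequality.

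I do not expect any genuine obstacle here: all of the substantive work has already been carried out in Corollary~\ref{cor:subgint} (which itself rests on Lemma~\ref{n-S+thin}), and the only content of the present argument is the observation that the hypothesis guarantees the existence of an induced interval subgraph of size $\alpha(G) + 1$, not merely $\alpha(G)$. Without this ``$+1$'' trick one would only get the weaker bound $|V(G)| - \alpha(G) + 1$, so the role of excluding the case where $V(G)$ is itself stable is precisely to ensure that such an extra vertex $v$ exists.
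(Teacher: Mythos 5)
Your proof is correct and is essentially the paper's own argument: the paragraph preceding the statement makes exactly this observation, that since $\alpha(G) < |V(G)|$ one can add a vertex $v$ outside a maximum stable set $S$ so that $S \cup \{v\}$ induces an interval graph (a star plus isolated vertices) of size $\alpha(G)+1$, and then Corollary~\ref{cor:subgint} yields $\thin(G) \leq |V(G)| - \alpha(G)$. Your explicit interval representation replaces the paper's consistent-ordering description, but the approach is the same.
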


\begin{corollary}\label{cor:thin-omega}
If $G$ is not a complete graph then $\thin(G) \leq |V(G)| -
\omega(G)$.
\end{corollary}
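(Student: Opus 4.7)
The plan is to apply Corollary~\ref{cor:subgint} to an induced interval subgraph of size $\omega(G)+1$, which will give $\thin(G)\leq |V(G)|-(\omega(G)+1)+1=|V(G)|-\omega(G)$ directly.

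First I would fix a maximum clique $S$ of $G$, so $|S|=\omega(G)$. Since $G$ is not complete, there exists a vertex $v\in V(G)\setminus S$, and I would consider the induced subgraph $G[S\cup\{v\}]$ of size $\omega(G)+1$. The main step is to verify that $G[S\cup\{v\}]$ is an interval graph. This is essentially the observation from the paragraph just before the corollary: split $S=A\cup B$ with $A=S\setminus N(v)$ and $B=S\cap N(v)$, order $S$ by listing $A$ first and then $B$, and write $m=|S|$. Assign to the $i$-th vertex $s_i$ of $S$ the interval $[i,m+1]$ (so all vertices of $S$ share the point $m+1$ and hence form a clique in the interval model), and assign to $v$ the interval $[|A|+\tfrac12,\,m+\tfrac12]$. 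Then $v$'s interval meets $s_i$'s interval iff $i>|A|$, i.e.\ iff $s_i\in B=N(v)\cap S$, which matches the adjacencies in $G[S\cup\{v\}]$. Hence $G[S\cup\{v\}]$ is an interval graph on $\omega(G)+1$ vertices.

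Having produced an induced interval subgraph of size $\omega(G)+1$, the result follows by plugging $|S\cup\{v\}|=\omega(G)+1$ into Corollary~\ref{cor:subgint}. There is no real obstacle in this argument: the only content beyond bookkeeping is the interval representation of a clique plus a single extra vertex, which is immediate since any graph obtained from a clique by adding one vertex (with an arbitrary neighborhood in the clique) is always an interval graph.
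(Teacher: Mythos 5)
Your overall route is exactly the paper's: exhibit an induced interval subgraph on $\omega(G)+1$ vertices (a maximum clique $S$ plus one vertex $v$ outside it, which exists because $G$ is not complete, so $\omega(G)<|V(G)|$) and plug it into Corollary~\ref{cor:subgint}, giving $\thin(G)\leq |V(G)|-(\omega(G)+1)+1=|V(G)|-\omega(G)$. The one genuine flaw is the explicit interval model you give for $G[S\cup\{v\}]$: it does not represent that graph. With $s_i\mapsto[i,m+1]$ the clique intervals are nested and all contain the segment $[m,m+1]$; in particular the intervals of the vertices of $A=S\setminus N(v)$ contain those of the vertices of $B=S\cap N(v)$, so any interval meeting some $B$-interval necessarily meets every $A$-interval. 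Concretely, your interval $[|A|+\tfrac12,\,m+\tfrac12]$ for $v$ meets $[i,m+1]$ for every $i\leq m$, so in your model $v$ is adjacent to all of $S$, and the claimed equivalence ``meets iff $i>|A|$'' is false whenever $A\neq\emptyset$ and $B\neq\emptyset$.

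The fact you need is of course true and the slip is easy to repair: for instance give $s_i$ the interval $[0,i]$ (all clique intervals share the point $0$) and give $v$ the interval $[|A|+\tfrac12,\,m+1]$, which meets $[0,i]$ exactly when $i>|A|$, i.e.\ exactly when $s_i\in N(v)$; alternatively, argue as the paper does without intervals, namely that the ordering listing $S\setminus N(v)$ first, then $S\cap N(v)$, then $v$, together with the single class $S\cup\{v\}$, is consistent, so $G[S\cup\{v\}]$ is $1$-thin and hence an interval graph. With that correction your proof is complete and coincides with the paper's argument.
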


\begin{remark}\label{rem:bounds-indep}
In the same way, one can see that  $\indthin(G) \leq |V(G)| -
\alpha(G)+1$ and $\compthin(G) \leq  |V(G)| - \omega(G)+1$ (in
this case we cannot add another vertex to the class containing the
maximum stable set or maximum clique, respectively).
\end{remark}

We have also the following bound for co-comparability graphs. As
already noticed in~\cite{B-M-O-thin-tcs},
Theorem~\ref{thm:thin-comp-order} implies that if $G$ is a
co-comparability graph, then $\thin(G) \leq \chi(G)$. Recalling
that co-comparability graphs are perfect~\cite{Meyn-co-comp}, this
implies $\thin(G) \leq \omega(G)$.

We prove next a new upper bound for the thinness of
co-comparability graphs.

\begin{theorem}
If $G$ is a non trivial co-comparability graph, then $\thin(G)
\leq |V(G)|/2$.
\end{theorem}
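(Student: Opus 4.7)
The plan is to combine two already-available upper bounds for the thinness of a co-comparability graph. As noted in the paragraph immediately preceding the statement, Theorem~\ref{thm:thin-comp-order} together with the perfection of co-comparability graphs gives $\thin(G) \leq \chi(G) = \omega(G)$. Independently, Corollary~\ref{cor:thin-omega} gives $\thin(G) \leq |V(G)| - \omega(G)$ whenever $G$ is not a complete graph. Taking the minimum of these two quantities, the elementary inequality $\min\{x,\, n-x\} \leq n/2$ for any real $x$ immediately delivers the claimed bound.

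Concretely, I would split the argument into two cases according to whether $G$ is complete. If $G \cong K_n$, then since $G$ is non-trivial we have $n = |V(G)| \geq 2$, and $\thin(G) = 1 \leq n/2$. Otherwise, $G$ is not complete, so Corollary~\ref{cor:thin-omega} is applicable, and combined with the co-comparability bound $\thin(G) \leq \omega(G)$ I obtain
$$\thin(G) \;\leq\; \min\{\omega(G),\ |V(G)| - \omega(G)\} \;\leq\; \frac{|V(G)|}{2}.$$

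There is no substantive obstacle; the only subtle point is the case distinction. Corollary~\ref{cor:thin-omega} is stated only for non-complete graphs, so the complete graphs must be handled separately, and this is exactly where the non-triviality hypothesis $|V(G)| \geq 2$ is used (without it, $K_1$ would violate the bound since $\thin(K_1) = 1 > 1/2$).
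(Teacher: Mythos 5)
Your proof is correct and follows essentially the same route as the paper: handle the complete non-trivial case directly, and otherwise combine the co-comparability bound $\thin(G) \leq \omega(G)$ with Corollary~\ref{cor:thin-omega}. Taking the minimum of the two bounds is just a rephrasing of the paper's step of adding the two inequalities to get $2\thin(G) \leq |V(G)|$.
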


\begin{proof}
If $G$ is complete and non trivial, then $\thin(G) = 1 \leq
|V(G)|/2$. If $G$ is not complete, by
Corollary~\ref{cor:thin-omega}, $\thin(G) \leq |V(G)| -
\omega(G)$. Adding this inequality to the inequality $\thin(G)
\leq \omega(G)$ that holds for co-comparability graphs, we have
$2\thin(G) \leq |V(G)|$, thus $\thin(G) \leq |V(G)|/2$.
\end{proof}

The bound is attained, for example, by the family
$\overline{tK_2}$ (Theorem~\ref{thm:tK2}).


\section{Thinness and binary graph operations}\label{sec:thin-and-oper}

In this section, we analyze the behavior of the thinness and
proper thinness under different binary graph operations. {Each one
of these operations will be defined over a pair of graphs $G_1 =
(V_1,E_1)$ and $G_2 = (V_2,E_2)$ such that $|V_1| = n_1$, $|V_2| =
n_2$ and $V_1 \cap V_2 = \emptyset$. Besides, for some of the
following proofs, we consider an implicit ordering and partition
for both $V_1$ and $V_2$, as defined next.

The ordering of $V_1$ will be denoted by $v_1,\dots, v_{n_1}$ and
that of $V_2$ by $w_1,\dots, w_{n_2}$. Moreover, if the value
$t_i$ of some variation of thinness of $G_i$ (for $i \in \{1,2\}$)
is involved in the bound to be proved, the implicit ordering is
one consistent, according to the specified variation of thinness,
with a partition $(V_i^1, \ldots, V_1^{t_i})$. If, otherwise, only
the cardinality $n_i$ of $V_i$ is involved in the bound, the
implicit ordering is an arbitrary one. For instance, if $G_1$ is a
proper $t_1$-independent-thin graph, and $t_1$ is involved in the
bound to be proved, it means that the implicit ordering and
partition of $V_1$ are strongly consistent and all the $t_1$ parts
of the partition are independent sets.}

{Although the proofs in this section are not exactly the same,
some of them indeed share a common structure in the reasoning. For
the sake of conciseness, some of them were omitted and can be
found in~\ref{apdx:sec4:proofs}. }

\subsection{Union and join}\label{sec:unionjoin}

The \emph{union} of $G_1$ and $G_2$ is the graph $G_1 \cup G_2 =
(V_1 \cup V_2, E_1 \cup E_2)$, and the \emph{join} of $G_1$ and
$G_2$ is the graph {$G_1 \vee G_2 = (V_1 \cup V_2, E_1 \cup E_2
\cup \{vv' : v \in V_1, v' \in V_2\})$} (i.e., $\overline{G_1\vee
G_2} = \overline{G_1} \cup \overline{G_2}$). (The join is
sometimes also noted by $G_1 \otimes G_2$, but we follow the
notation in~\cite{B-D-thinness}).

The class of \emph{cographs} can be defined as the graphs that can
be obtained from trivial graphs by the union and join
operations~\cite{CorneilLerchsStewart81}. Aiming to characterize
$k$-thin graphs by forbidden induced subgraphs within the class of
cographs, the following results were proved.

\begin{theorem}\label{thm:union}\cite{B-D-thinness}
Let $G_1$ and $G_2$ be graphs. Then $f(G_1 \cup G_2) =
\max\{f(G_1),$ $f(G_2)\}$, for $f \in \{\thin,\pthin\}$.
\end{theorem}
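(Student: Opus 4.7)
The plan is to prove the two inequalities separately. For the lower bound $f(G_1 \cup G_2) \geq \max\{f(G_1),f(G_2)\}$, I would use that each $G_i$ is an induced subgraph of $G_1 \cup G_2$ (indeed the union of some of its connected components). Restricting any consistent (respectively strongly consistent) ordering and partition of $V(G_1 \cup G_2)$ to $V(G_i)$ yields a consistent (respectively strongly consistent) ordering and partition of $G_i$ using at most the same number of nonempty classes, so $f(G_i) \leq f(G_1 \cup G_2)$ for $i=1,2$, and the lower bound follows.

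For the upper bound, set $k = \max\{f(G_1),f(G_2)\}$ and concatenate the two witnesses. Take the implicit ordering $v_1,\dots,v_{n_1}$ of $V_1$ together with the partition $(V_1^1,\dots,V_1^k)$ witnessing $f(G_1) \leq k$, and likewise $w_1,\dots,w_{n_2}$ with $(V_2^1,\dots,V_2^k)$ for $G_2$, padding with empty classes so both witnesses use exactly $k$ parts. I would then place $V_1$ entirely before $V_2$ in the combined ordering and define $V^i = V_1^i \cup V_2^i$ for $1 \leq i \leq k$. Consistency of this pair is verified by a short case analysis on a triple $r<s<t$ with $v_r,v_s$ in the same class $V^i$ and $v_tv_r \in E(G_1 \cup G_2)$: if all three lie in $V_1$ (respectively $V_2$), the conclusion follows from consistency of the $G_1$ (respectively $G_2$) witness; the only other ordering-compatible case has $v_r \in V_1$ and $v_t \in V_2$, but then $v_tv_r \notin E_1 \cup E_2$ since the union has no edges between $V_1$ and $V_2$, so the hypothesis is vacuous.

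For the proper thinness case, the same concatenation works and the verification of the additional backward-consistency condition, namely that if $v_s,v_t$ lie in the same class and $v_tv_r \in E$ then $v_sv_r \in E$, is entirely analogous: one-sided triples invoke the strong consistency of the respective witness, and straddling triples again fail the edge hypothesis. I do not anticipate a real obstacle here: the union is the most benign of the binary operations considered in the paper, precisely because the absence of cross-edges forces every mixed case to be vacuous, and the only bookkeeping item worth flagging is the padding of whichever partition has fewer than $k$ classes so that the two can be combined index-by-index.
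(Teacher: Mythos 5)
Your proof is correct and follows essentially the same route the paper uses for this result (stated there with a citation to Bonomo--De~Estrada) and spells out explicitly for the analogous Theorem~\ref{thm:union-ind}: lower bound by heredity under induced subgraphs, upper bound by concatenating the two orderings and merging the partitions class-by-class, with every mixed triple vacuous because the union has no cross edges. Nothing is missing; the padding with empty classes is exactly the bookkeeping the paper handles by indexing up to $\max\{t_1,t_2\}$.
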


\begin{theorem}\label{thm:join}\cite{B-D-thinness}
Let $G_1$ and $G_2$ be graphs. Then $f(G_1 \vee G_2) \leq
f(G_1)+f(G_2)$, for $f \in \{\thin,\pthin\}$. Moreover, if $G_2$
is complete, then $\thin(G_1 \vee G_2) = \thin(G_1)$.
\end{theorem}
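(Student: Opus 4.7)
The plan is to produce witnesses for $f(G_1 \vee G_2)$ by gluing together the implicit orderings and partitions of $G_1$ and $G_2$, taking advantage of the fact that every pair of vertices in $V_1 \times V_2$ is automatically an edge of the join.

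For the general inequality $f(G_1 \vee G_2) \le f(G_1)+f(G_2)$, I would take as ordering the concatenation $v_1,\dots,v_{n_1},w_1,\dots,w_{n_2}$ and as partition the disjoint union $V_1^1,\dots,V_1^{f(G_1)},V_2^1,\dots,V_2^{f(G_2)}$, which has $f(G_1)+f(G_2)$ classes. To verify consistency, I would fix a triple $r<s<t$ with $v_r,v_s$ in a common class $C$. Since every class sits entirely in $V_1$ or entirely in $V_2$, there are two subcases. If $C \subseteq V_1$, then either $v_t \in V_1$ and consistency is inherited from the $G_1$-witness, or $v_t \in V_2$ and the required edge $v_tv_s$ is a join edge. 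If $C \subseteq V_2$, the fact that $V_2$ comes last in the ordering forces $v_t \in V_2$, and consistency follows from the $G_2$-witness. For the proper case the symmetric condition is checked the same way: with $v_s,v_t$ in a common class $C \subseteq V_2$ and $v_r \in V_1$, the edge $v_sv_r$ is supplied by the join, and the remaining subcases again reduce to strong consistency of $G_1$ or $G_2$.

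For the moreover part, I would keep the same ordering but absorb all of $V_2$ into the last $G_1$-class, producing the partition
\[
V_1^1,\ \dots,\ V_1^{\thin(G_1)-1},\ V_1^{\thin(G_1)}\cup V_2,
\]
which has exactly $\thin(G_1)$ classes. The earlier case analysis still covers every situation except when $v_r \in V_1^{\thin(G_1)}$ and $v_s \in V_2$ both lie in the enlarged class; then $v_t$ is forced to be in $V_2$, and completeness of $G_2$ supplies the edge $v_tv_s$. Combined with the trivial lower bound $\thin(G_1) \le \thin(G_1 \vee G_2)$ coming from induced-subgraph monotonicity of thinness, this gives equality.

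The construction is essentially a careful case analysis, so there is no genuine technical obstacle. The one point worth flagging is that the moreover argument is specific to ordinary thinness: when one tries to absorb $V_2$ into a $V_1^i$ under the proper variant, the strong-consistency condition arising from a triple with $v_s \in V_1^i$, $v_t \in V_2$ and $v_r \in V_1$ would demand an edge inside $G_1$ that completeness of $G_2$ cannot provide, which is why the analogous strengthening is not claimed for $\pthin$.
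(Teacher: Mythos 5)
Your proposal is correct and follows essentially the same route as the paper: the theorem itself is quoted from~\cite{B-D-thinness}, and both that proof and the paper's analogous join arguments (Theorems~\ref{thm:join-ind} and~\ref{thm:join-comp}) use exactly your construction of concatenating the two orderings, taking the disjoint union of the partitions, and, for the ``moreover'' part, absorbing $V_2$ into one class of $V_1$ and letting completeness of $G_2$ (plus induced-subgraph monotonicity) give equality. Your closing remark about why the absorption fails for $\pthin$ is also accurate and matches the paper's discussion of the claw $=3K_1\vee K_1$.
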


\begin{lemma}\label{lem:join}\cite{B-D-thinness}
If $G$ is not complete, then $\thin(G \vee 2K_1) = \thin(G)+1$.
\end{lemma}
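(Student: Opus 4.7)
The upper bound $\thin(G \vee 2K_1) \le \thin(G)+1$ is immediate from Theorem \ref{thm:join} with $G_2 = 2K_1$, as $2K_1$ is an interval graph and so $\thin(2K_1)=1$. For the lower bound, let $k=\thin(G)$ and suppose for contradiction that $(V^1,\ldots,V^k)$ is a consistent $k$-partition of $V(G \vee 2K_1)$ with an ordering $<$; by the symmetry of $a,b$ we may assume $a<b$. Write $<'$ for the restriction of $<$ to $V(G)$, let $\bar V^i := V^i \cap V(G)$, and let $V^{i_a}$ be the class of $a$. The restriction is a consistent $k$-partition of $G$ with $<'$, so each $\bar V^i$ is non-empty (otherwise $\thin(G)<k$).

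The plan is to show that every vertex of $\bar V^{i_a}$ is ``suffix-universal'' with respect to $<'$, meaning adjacent in $G$ to every later vertex of $V(G)$. Applying the triple $(u,a,b)$ for $u \in V(G)$ with $u<a$ shows that no such $u$ can lie in $V^{i_a}$, for otherwise $bu \in E$ and $ba \notin E$ would violate consistency. Applying the triple $(a,u,w)$ for $u \in \bar V^{i_a}$ and $w \in V(G)$ with $u<'w$ forces $uw \in E(G)$. Denoting by $S$ the set of suffix-universal vertices, one then has $\emptyset \neq \bar V^{i_a} \subseteq S$, and $S$ is a clique in $G$ (any two suffix-universal vertices are adjacent, the earlier to the later). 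Since $G$ is not complete, $S \subsetneq V(G)$, so some $\bar V^{j_0}$ with $j_0 \neq i_a$ is non-empty.

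The key step is to derive a consistent $(k-1)$-partition of $V(G)$, contradicting $\thin(G)=k$. I would define $<''$ from $<'$ by moving the vertices of $\bar V^{i_a}$ to the very end, keeping their $<'$ relative order. The original partition is still consistent with $<''$: the only new triples are those $(p,q,r)$ in $<''$ with $r \in \bar V^{i_a}$, and either $q>'r$ (so $rq \in E$ by suffix-universality of $r$, making the implication trivial) or $q<'r$ (in which case the same triple already appears in $<'$ and is controlled by the original consistency). Then merge $\bar V^{i_a}$ into $\bar V^{j_0}$. The only new same-class triples of the merged partition are of the form $(p,q,r)$ with $p \in \bar V^{j_0}$ and $q,r \in \bar V^{i_a} \subseteq S$; since $S$ is a clique in $G$, $rq \in E$ automatically, so the implication is again trivial. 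This yields a consistent $(k-1)$-partition of $V(G)$ with ordering $<''$, contradicting $\thin(G)=k$.

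The main obstacle is the book-keeping in this final reduction: checking in detail that both the reordering and the merge preserve consistency across all configurations. The unifying point is that every vertex of $\bar V^{i_a}$ lies in the suffix-universal clique $S$, so any triple having such a vertex as the largest element is either automatically satisfied by suffix-universality (when the middle vertex is later in $<'$) or already controlled by the original consistency with $<'$ (when it is earlier).
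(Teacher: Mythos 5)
Your argument is correct. The upper bound is indeed immediate from Theorem~\ref{thm:join}, and your lower-bound argument goes through: no vertex of $V(G)$ before $a$ can share $a$'s class (triple $(u,a,b)$), every vertex of $\bar V^{i_a}$ is forced to be suffix-universal (triple $(a,u,w)$), these vertices form part of the clique $S$ which is proper since $G$ is not complete, and the two-step modification (push $\bar V^{i_a}$ to the end of the order, then merge it into another class) preserves consistency exactly for the reasons you give -- any triple whose order changes, or whose same-class pair is newly created, has its largest two elements in the suffix-universal clique, so the required adjacency holds for free; all remaining triples are inherited from the original consistent pair. This yields a consistent $(k-1)$-partition of $G$, the desired contradiction (with the degenerate case $k=1$ collapsing even earlier, since then $V(G)\subseteq S$ forces $G$ complete).

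Your route is, however, genuinely different from how this paper treats the result. The paper cites the lemma without proof and instead proves the stronger Theorem~\ref{thm:join2} ($\thin(G_1\vee G_2)=\thin(G_1)+\thin(G_2)$ when neither factor is complete, which gives the lemma with $G_2=2K_1$) via the incompatibility graph $G_{<}$: Corollary~\ref{cor:thin-comp-order} reduces thinness to clique numbers of $G_{<}$ over all orders, and Lemma~\ref{lem:clique-top} supplies, in each factor, a maximum clique of $G_{<}$ whose smallest vertex also has a later non-neighbor, so that the two cliques join into one of size $k_1+k_2$. Your proof manipulates consistent orderings and partitions directly, with no appeal to $G_{<}$ or its perfection, which makes it more elementary and self-contained; the price is that it is tailored to the $2K_1$ case and produces only the $+1$ increment, whereas the paper's machinery gives the full additivity of the join and is reused elsewhere. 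It is worth noting the kinship between your key trick and the paper's proof of Lemma~\ref{lem:clique-top}: both hinge on vertices that are adjacent to everything after them in the order and on moving such vertices to the end of the ordering.
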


Lemma~\ref{lem:join} implies that if there is some constant value
$k$ such that recognizing $k$-thin graphs is NP-complete, then for
every $k' > k$, recognizing $k'$-thin graphs is NP-complete. The
existence of such $k$ is still not known, and in general the
complexity of recognition of $k$-thin and proper $k$-thin graphs,
both with $k$ as a parameter and with constant $k$, is open.

\begin{remark}\label{rem:clique-top} By definition of $G_{<}$, every non-smallest vertex of
any non-trivial clique has a vertex in $V(G)$ greater than it and
non-adjacent to it in $G$.
\end{remark}

The following lemma is necessary to prove Theorem~\ref{thm:join2}.

\begin{lemma}\label{lem:clique-top}
Let $G=(V,E)$ be a graph with $\thin(G)=k$ and $v_1 < \dots < v_n$
be an ordering of $V$. If $G$ is not complete, then there exist a
clique of size $k$ of $G_{<}$, $v_{i_1} < \dots < v_{i_k}$, and
$v_j
> v_{i_1}$, such that $v_j v_{i_1} \not \in E$.
\end{lemma}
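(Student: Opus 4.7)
The plan is to apply Corollary~\ref{cor:thin-comp-order}: since $\thin(G)=k$, for the given ordering $<$ we have $\omega(G_{<}) \geq k$. I would then split into two cases according to the value of $\omega(G_{<})$.

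If $\omega(G_{<}) \geq k+1$, I would pick a clique of size $k+1$ in $G_{<}$, say $\{v_{i_0}, v_{i_1}, \ldots, v_{i_k}\}$ with $v_{i_0} < v_{i_1} < \cdots < v_{i_k}$. Since $v_{i_1}$ is not the smallest vertex of this clique, Remark~\ref{rem:clique-top} directly furnishes $v_j > v_{i_1}$ with $v_j v_{i_1} \notin E$, and the subclique $\{v_{i_1}, \ldots, v_{i_k}\}$ of size exactly $k$ is the one required.

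If $\omega(G_{<}) = k$, I would start from a $k$-clique $C = \{v_{i_1}, \ldots, v_{i_k}\}$ in which the index $i_1$ of the smallest vertex is minimum over all $k$-cliques of $G_{<}$. If $v_{i_1}$ has a non-neighbor above in $G$, we are done. Otherwise $v_{i_1}$ is adjacent in $G$ to every vertex above it. Here I would use the minimality of $i_1$ to rule out the existence of any $v_m$, $m < i_1$, which is also adjacent to everything above it in $G$: such a $v_m$ would yield the $k$-clique $\{v_m, v_{i_2}, \ldots, v_{i_k}\}$ of $G_{<}$ (since each edge $v_m v_{i_j}$ is witnessed by the same non-neighbor of $v_{i_j}$ above $v_{i_j}$ that witnesses $v_{i_1} v_{i_j}$, using that $v_m$, being adjacent to all vertices above it, is adjacent to that witness), contradicting the minimality of $i_1$. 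Hence every $v_m$ with $m < i_1$ has a non-neighbor above it in $G$.

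The main obstacle is going from this structural fact to the existence of the required clique. I would attempt an induction on $|V(G)|$. When $i_1 = 1$, the vertex $v_{i_1}$ has no smaller vertex and is adjacent to every larger vertex, so it is universal in $G$; since $G$ is not complete, $G - v_{i_1}$ is not complete either, and Theorem~\ref{thm:join} gives $\thin(G - v_{i_1}) = k$, so the inductive hypothesis applied to $(G - v_{i_1})_{<}$ (which coincides with $G_{<}[V(G) \setminus \{v_{i_1}\}]$ when $v_{i_1}$ is universal) produces a $k$-clique of $G_{<}$ whose smallest vertex is good. When $i_1 > 1$, a further swap-or-shift argument exploiting a non-neighbor of $v_{i_1}$ below it in $<$ (which exists because $v_{i_1}$ is not universal) would be needed to construct an alternative $k$-clique with good smallest.
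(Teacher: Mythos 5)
Your first case ($\omega(G_{<})\geq k+1$) and your exchange argument showing that every $v_m$ with $m<i_1$ has a non-neighbour above it are both correct, but the proof is not complete: in the crucial case $\omega(G_{<})=k$ with the chosen clique's smallest vertex $v_{i_1}$ adjacent in $G$ to everything above it and $i_1>1$, you only announce that ``a further swap-or-shift argument would be needed'' --- that missing argument is precisely the heart of the lemma, not a routine detail. Note also that within that sketch the assertion that $v_{i_1}$ is not universal is unjustified (a universal vertex can occupy any position in $<$); the universal subcase could be handled by the same deletion argument you use for $i_1=1$, but the genuinely hard subcase (non-universal $v_{i_1}$ adjacent to all vertices above it) is left open. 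Moreover, the structural fact you derive is of limited use: by the very minimality of $i_1$, no $k$-clique of $G_{<}$ has its smallest vertex below position $i_1$, so knowing that those earlier vertices have non-neighbours above them does not by itself produce the required clique. The natural local fix --- move only $v_{i_1}$ to the top of the order, observe that the new graph is a subgraph of $G_{<}$ with $v_{i_1}$ isolated, and extract a new $k$-clique --- does not close the gap either, because the new clique's smallest vertex may again be adjacent to everything above it, and iterating changes the order in a way that breaks the subgraph property (``bad'' is defined with respect to the original order, while the reordering argument needs adjacency to everything above in the current order).

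The paper resolves exactly this difficulty by a single global contradiction argument instead of an extremal choice plus induction: assuming no $k$-clique of $G_{<}$ has a good smallest vertex, it collects into one set $S$ \emph{all} vertices that occur as the smallest vertex of some $k$-clique of $G_{<}$ (each of which is then adjacent in $G$ to every vertex above it), and moves all of $S$ to the top of the order at once. Because any two vertices of $S$ are adjacent in $G$ (the lower one is adjacent to everything above it), the resulting graph $G_{<'}$ is a subgraph of $G_{<}$ in which the vertices of $S$ are isolated; hence $\omega(G_{<'})<k$, contradicting Corollary~\ref{cor:thin-comp-order} applied to the order $<'$. If you want to salvage your line of attack, you essentially need this ``move all bad smallest vertices simultaneously'' idea (or an equivalent global argument); as written, your proposal proves the lemma only in the cases $\omega(G_{<})>k$ and $i_1=1$.
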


\begin{proof}
By Corollary~\ref{cor:thin-comp-order}, for every order $<$ of the
vertices of $G$, $\omega(G_{<}) \geq k$. If $\thin(G)=1$, the
statement follows because $G$ is not complete. Suppose $\thin(G) >
1$, and let $v_1 < \dots < v_n$ be an ordering of $V$. By
definition of $G_{<}$, for every clique of $G_{<}$, all the
vertices that are not the smallest one have a vertex in $V(G)$
which is greater than it and non-adjacent to it in $G$. So, if
$G_{<}$ contains a clique of size greater than $k$, the statement
follows. {Consider} now that $\omega(G_{<})=k$. {In order to reach
a contradiction, suppose that} no clique of size $k$ of $G_{<}$
satisfies the property, then each vertex in the {set
$S = \{v \in V(G_<):v\text{ is the first vertex in $<$ of a clique of size $k$ of $G_<$}\}$} 
 is adjacent to every vertex greater than it in $G$. So, modifying the
order by placing $S$ as the largest vertices produces a graph
$G_{<'}$ which is a subgraph of $G_{<}$ and {in which the vertices
 of $S$ are isolated vertices}. In particular, $\omega(G_{<'}) <
\omega(G_{<}) = k$, a contradiction since, {by
Corollary~\ref{cor:thin-comp-order}, $\omega(G_{<'}) \geq k $}.
\end{proof}

We strength the result of Theorem~\ref{thm:join} for thinness.

\begin{theorem}\label{thm:join2}
Let $G_1$ and $G_2$ be graphs. If $G_1$ is complete, then
$\thin(G_1 \vee G_2) = \thin(G_2)$. If neither $G_1$ nor $G_2$ are
complete, then $\thin(G_1 \vee G_2) = \thin(G_1) + \thin(G_2)$.
\end{theorem}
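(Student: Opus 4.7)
The proof will split into the two cases of the statement. The first case, in which $G_1$ is complete, follows directly from the second part of Theorem~\ref{thm:join} together with the symmetry of the join operation: $G_1 \vee G_2 = G_2 \vee G_1$, so swapping the roles of the factors in that theorem gives $\thin(G_1 \vee G_2) = \thin(G_2)$. For the main case the upper bound $\thin(G_1 \vee G_2) \leq \thin(G_1) + \thin(G_2)$ is already supplied by Theorem~\ref{thm:join}, so the real task is to show the matching lower bound. By Corollary~\ref{cor:thin-comp-order}, it suffices to exhibit, for every ordering ${<}$ of $V(G_1) \cup V(G_2)$, a clique of size $\thin(G_1) + \thin(G_2)$ in $(G_1 \vee G_2)_{<}$.

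The first step is to describe the edges of $(G_1 \vee G_2)_{<}$ in terms of the restrictions $<_i$ of $<$ to $V_i$. For $v < w$ with $v, w \in V_1$, any witness $z > w$ with $zv \in E(G_1 \vee G_2)$ and $zw \notin E(G_1 \vee G_2)$ must lie in $V_1$ (otherwise $zw$ would be a join edge), so the edge sets satisfy $E((G_1 \vee G_2)_{<}) \cap \binom{V_1}{2} = E((G_1)_{<_1})$, and symmetrically for $V_2$. For a cross pair $v < w$ with, say, $v \in V_1$ and $w \in V_2$, the witness $z$ must lie in $V_2$ so that $zw$ is not a join edge, and must satisfy $zw \notin E_2$, in which case $zv$ is automatic; hence such a cross edge exists in $(G_1 \vee G_2)_{<}$ iff $w$ has a non-$E_2$-neighbor in $V_2$ that is greater than $w$ in $<$, a property I shall call $w$ being \emph{free} in $V_2$. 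Symmetrically for the other direction, so in every case a cross edge between $V_1$ and $V_2$ is present iff the larger endpoint is free in its own factor.

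The second step combines Lemma~\ref{lem:clique-top} with Remark~\ref{rem:clique-top}. Since $G_1$ is not complete, Lemma~\ref{lem:clique-top} applied to $G_1$ with the order $<_1$ yields a clique $A = \{a_1 <_1 \dots <_1 a_{k_1}\}$ of $(G_1)_{<_1}$ with $k_1 = \thin(G_1)$, whose smallest vertex $a_1$ admits some vertex in $V_1$ greater than $a_1$ and non-adjacent to it in $G_1$; that is, $a_1$ is free in $V_1$. By Remark~\ref{rem:clique-top}, every $a_i$ with $i \geq 2$ also has a non-$E_1$-neighbor above it in $V_1$, so every vertex of $A$ is free in $V_1$. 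An analogous construction gives a clique $B = \{b_1 <_2 \dots <_2 b_{k_2}\}$ of $(G_2)_{<_2}$ in which every vertex is free in $V_2$. By the structural description from the previous paragraph, $A$ remains a clique in $(G_1 \vee G_2)_{<}$, so does $B$, and for any $a_i \in A$ and $b_j \in B$ the larger of the two is free in its own $V_i$, hence $a_i b_j \in E((G_1 \vee G_2)_{<})$. Therefore $A \cup B$ is a clique of size $k_1 + k_2$ in $(G_1 \vee G_2)_{<}$, and Corollary~\ref{cor:thin-comp-order} finishes the proof.

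The only delicate point is ensuring that \emph{every} vertex of $A$ is free in $V_1$ and every vertex of $B$ is free in $V_2$; this is exactly what Lemma~\ref{lem:clique-top} provides for the smallest elements, while Remark~\ref{rem:clique-top} takes care of the remaining elements automatically. Without Lemma~\ref{lem:clique-top}, the construction could break down when the smallest vertex of one of the cliques happens to dominate every later vertex of its factor in the chosen order, which is precisely why the non-completeness hypothesis on both factors is needed to improve Theorem~\ref{thm:join} to an equality.
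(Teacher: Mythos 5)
Your proposal is correct and follows essentially the same route as the paper: the complete case and the upper bound come from Theorem~\ref{thm:join}, and the lower bound is obtained by applying Lemma~\ref{lem:clique-top} (together with Remark~\ref{rem:clique-top}) to each non-complete factor to get cliques of $({G_1})_{<_1}$ and $({G_2})_{<_2}$ all of whose vertices have a later non-neighbor in their own factor, so that all cross pairs become edges of $(G_1 \vee G_2)_{<}$ and Corollary~\ref{cor:thin-comp-order} yields $\thin(G_1\vee G_2)\geq \thin(G_1)+\thin(G_2)$. Your explicit description of $(G_1\vee G_2)_{<}$ on within-factor and cross pairs just spells out what the paper leaves implicit.
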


\begin{proof}
Let $G = G_1 \vee G_2$. If one of them is complete (suppose
without loss of generality $G_1$), then, by Theorem
\ref{thm:join}, $\thin(G) = \thin(G_2)$. Otherwise, by Theorem
\ref{thm:join}, $\thin(G_1 \vee G_2) \leq \thin(G_1)+\thin(G_2)$.
Let us prove the equality. Let $k_1 = \thin(G_1)$, $k_2 = \thin
(G_2)$, and $k = \thin(G_1 \vee G_2)$. Let $<$ be an ordering
consistent with a $k$-partition of $V(G)$. Let ${G_1}_{<}$ and
${G_2}_{<}$ {be the incompatibility graphs} obtained from the
order $<$ restricted to $V_1$ and $V_2$, respectively.

By Lemma \ref{lem:clique-top}, there exist a $k_1$-clique of
${G_1}_{<}$, $v^1_{i_1} < \dots < v^1_{i_{k_1}}$, and $v^1_j
> v^1_{i_1}$, such that $v^1_j v^1_{i_1} \not \in E_1$. As
well, there exist a $k_2$-clique of ${G_2}_{<}$, $v^2_{i_1} <
\dots < v^2_{i_{k_2}}$, and $v^2_j
> v^2_{i_1}$, such that $v^2_j v^2_{i_1} \not \in E_2$.

Notice also that, by definition of ${G_i}_{<}$, $i=1,2$, every
non-smallest vertex of the clique of ${G_i}_{<}$ has a vertex in
$V_i$, greater than it and non-adjacent to it in $G_i$.
Considering this property and the fact that every vertex of $G_1$
is adjacent to every vertex of $G_2$ in $G$, it follows that every
vertex of $\{v^1_{i_1}, \dots, v^1_{i_{k_1}}\}$ is adjacent to
every vertex of $\{v^2_{i_1}, \dots, v^2_{i_{k_2}}\}$ in $G_{<}$,
hence $k \geq k_1+k_2$. This completes the proof of the theorem.
\end{proof}

\begin{remark}
The proper thinness of the join $G_1 \vee G_2$ cannot be expressed
as a function whose only parameters are the proper thinness of
$G_1$ and $G_2$ (even excluding simple particular cases, like
trivial or complete graphs). The graph $tK_1$ has proper
thinness~1 for every $t$. By Theorem~\ref{thm:join}, the proper
thinness of the join of two graphs of proper thinness~1 is either
1~or~2, and there are examples for both of the values. The graph
$P_3 = 2K_1 \vee K_1$ has proper thinness~1 but $3K_1 \vee K_1$,
known as \emph{claw}, or $C_4 = 2K_1 \vee 2K_1$ have proper
thinness~2 (the claw and $C_4$ are not proper interval graphs).
Similarly, by Theorem~\ref{thm:join}, the proper thinness of the
join of a graph of proper thinness~2 and a graph of proper
thinness~1 is either 2~or~3, and there are examples for both of
the values. The graph $(\mbox{claw} \cup tK_1) \vee K_1$ has
proper thinness~2, but the graph $3\mbox{claw} \vee K_1$ has
proper thinness~3~\cite{B-D-thinness}.
\end{remark}

Nevertheless, we have a lemma similar to Lemma~\ref{lem:join} for
proper thinness.

\begin{lemma}\label{lem:joinp}\cite{B-D-thinness}
For every graph $G$, $\pthin(3G \vee K_1) = \pthin(G)+1$.
\end{lemma}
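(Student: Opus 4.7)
The plan is to prove the two inequalities separately. For the upper bound I would combine Theorem~\ref{thm:union}, which gives $\pthin(3G) = \pthin(G)$, with Theorem~\ref{thm:join}, which gives $\pthin(H \vee K_1) \leq \pthin(H) + 1$ for every $H$, so that $\pthin(3G \vee K_1) \leq \pthin(3G) + 1 = \pthin(G) + 1$.

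For the lower bound I would argue by contradiction. Suppose $3G \vee K_1$ admits a proper $k$-thin representation with $k = \pthin(G)$, given by an ordering $<$ and a partition $(C_1,\ldots,C_k)$. Denote by $v$ the vertex of $K_1$ (universal in $3G \vee K_1$), by $G^1, G^2, G^3$ the three copies of $G$, and by $C_{j_0}$ the class containing $v$. The decisive step will be the following structural claim: \emph{if $u \in C_{j_0} \cap V(G^a)$ satisfies $u > v$, then every vertex $w > u$ lies in $V(G^a)$.} To see this, I apply the consistency condition to the triple $v < u < w$: since $v, u$ share class $C_{j_0}$ and $wv \in E$ by universality of $v$, the condition forces $wu \in E$; as the only neighbor of $u$ outside $V(G^a)$ is $v$ and $w \neq v$, it follows that $w \in V(G^a)$. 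A symmetric argument, using the consistency of the reverse ordering (which holds because the representation is proper, hence strongly consistent) on the triple $w < u < v$, shows that if $u \in C_{j_0} \cap V(G^a)$ with $u < v$, then every $w < u$ lies in $V(G^a)$.

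From the structural claim I would conclude that all vertices of $C_{j_0} \cap V(3G)$ below $v$ lie in a single copy $G^{a_L}$, and all such vertices above $v$ in a single copy $G^{a_R}$ (either side possibly empty). The case analysis then splits as follows. If $a_L = a_R$ or one side is empty, at least two copies are disjoint from $C_{j_0}$. Otherwise, with $a_L \neq a_R$ and $u^L, u^R$ denoting the extremal $C_{j_0}$-vertices on each side, the third copy $G^{a_M}$ is forced inside the open interval $(u^L, u^R)$, and since $C_{j_0} \cap V(3G) \subseteq V(G^{a_L}) \cup V(G^{a_R})$, it too is disjoint from $C_{j_0}$. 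In every scenario some copy $G^c$ satisfies $C_{j_0} \cap V(G^c) = \emptyset$; restricting the ordering and partition to $V(G^c)$ yields a proper $(k-1)$-thin representation of $G \cong G^c$, contradicting $\pthin(G) = k$.

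The main obstacle I expect is spotting and proving the structural claim: it is not evident a priori that strong consistency combined with the universality of $v$ pins the entire tail (resp.\ head) of the ordering into a single copy of $G$. Once that is isolated the counting is clean and reveals exactly why the number three appears: two copies of $G$ would not suffice to guarantee a copy untouched by $C_{j_0}$, but three copies force it by pigeonhole.
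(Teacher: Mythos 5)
Your proof is correct, and the lower-bound half follows a genuinely different route from the paper's. The upper bound is identical (Theorems~\ref{thm:union} and~\ref{thm:join}). For the lower bound, the paper does not argue by contradiction on a hypothetical representation: for an arbitrary order $<$ it works in the auxiliary graph $\tilde{H}_{<}$ of $H=3G\vee K_1$, picks the copy $S$ of $G$ containing neither the minimum nor the maximum vertex of $H\setminus\{u\}$ (where $u$ is the universal vertex), shows via those two extreme vertices that $u$ is adjacent in $\tilde{H}_{<}$ to all of $S$, and then adds $u$ to a clique of size $\pthin(G)$ in $\tilde{H}_{<}[S]$, invoking Corollary~\ref{cor:thin-comp-order}. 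Your argument instead works directly with the strong-consistency conditions: your structural claim shows that the class of the universal vertex can meet at most one copy below $v$ and at most one copy above $v$, so by pigeonhole some copy is untouched, and restricting the ordering and partition to that copy (a standard induced-subgraph restriction, which you should state explicitly preserves strong consistency) yields a proper $(k-1)$-thin representation of $G$, a contradiction. The two proofs locate the ``spared'' copy differently --- the paper by the positions of the two extreme vertices, you by the class of the universal vertex --- but both explanations make the role of the three copies transparent. The paper's version is shorter and fits the $\tilde{H}_{<}$/clique framework used throughout; yours is more elementary and self-contained, needing only the definition of strong consistency and monotonicity under induced subgraphs, at the cost of a case analysis (note that the ``forced inside the open interval'' observation in your middle case is not actually needed, since $C_{j_0}\cap V(3G)\subseteq V(G^{a_L})\cup V(G^{a_R})$ already suffices).
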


\begin{proof}
Theorems~\ref{thm:union} and~\ref{thm:join} imply $\pthin(3G \vee
K_1) \leq \pthin(G)+1$. To show $\pthin(3G \vee K_1) \geq
\pthin(G)+1$, we will use Corollary~\ref{cor:thin-comp-order}. Let
$u$ be the corresponding vertex of the $K_1$ in $H=3G \vee K_1$,
and let $<$ be an ordering of the vertices of $H$. Let $w,w'$ be
the minimum and maximum vertices, respectively, according to $<$
restricted to $H \setminus \{u\}$. Let $S$ be the set of vertices
of the copy of $G$ in $H$ which contains neither $w$ nor $w'$. We
will show that in $\tilde{H}_{<}$, all the vertices of $S$ are
adjacent to $u$. Let $v$ be a vertex of $S$. If $u < v$, then $u <
v < w'$, $w'v \not \in V(H)$ and $w'u \in V(H)$. If $v < u$, then
$w < v < u$, $wv \not \in V(H)$ and $wu \in V(H)$. In either case,
by definition of $\tilde{H}_{<}$, $uv \in V(\tilde{H}_{<})$. By
Corollary~\ref{cor:thin-comp-order}, $\tilde{H}_{<}[S]$ contains a
clique of size $\pthin(G)$, and thus $\tilde{H}_{<}[S \cup \{u\}]$
contains a clique of size $\pthin(G)+1$. As the order $<$ was
arbitrary, again by Corollary~\ref{cor:thin-comp-order},
$\pthin(H) \geq \pthin(G)+1$.
\end{proof}

We will now study the behavior of independent and complete
(proper) thinness under the union and join operations.

\begin{theorem}\label{thm:union-ind}
Let $G_1$ and $G_2$ be graphs. Then $f(G_1 \cup G_2) =
\max\{f(G_1),$ $f(G_2)\}$, for $f \in \{\indthin, \indpthin\}$.
\end{theorem}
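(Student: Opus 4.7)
The plan is to establish the two inequalities separately, with the nontrivial direction being the upper bound, which we obtain by concatenating representations of $G_1$ and $G_2$.

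For the lower bound, I would argue that both $\indthin$ and $\indpthin$ are monotone under induced subgraphs. Indeed, given a (strongly) consistent ordering and partition of a graph $H$ into independent sets, restricting the ordering to any $W \subseteq V(H)$ and intersecting each class with $W$ yields a (strongly) consistent partition of $H[W]$ into independent sets, since independence and the consistency conditions only concern vertices and edges inside $W$. Applying this to $G_1$ and $G_2$ as induced subgraphs of $G_1 \cup G_2$ gives $f(G_1 \cup G_2) \geq \max\{f(G_1), f(G_2)\}$ for $f \in \{\indthin, \indpthin\}$.

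For the upper bound, let $k = \max\{f(G_1), f(G_2)\}$ and take representations of $G_1$ and $G_2$ into $k$ independent classes each (padding with empty classes if necessary), say with orderings $v_1, \ldots, v_{n_1}$ and $w_1, \ldots, w_{n_2}$ and partitions $(V_1^1, \ldots, V_1^k)$ and $(V_2^1, \ldots, V_2^k)$. I would then consider for $G_1 \cup G_2$ the concatenated ordering $v_1, \ldots, v_{n_1}, w_1, \ldots, w_{n_2}$ together with the partition whose $i$th class is $V_1^i \cup V_2^i$. Each class is independent in $G_1 \cup G_2$ since $V_1^i$ and $V_2^i$ are independent in their respective graphs and there are no edges between $V_1$ and $V_2$.

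For consistency, any triple $r<s<t$ either lies entirely in $V_1$ or entirely in $V_2$ (handled by the consistency of the original representations) or is mixed with $v_r \in V_1$ and $v_t \in V_2$; in the mixed case $v_tv_r \notin E(G_1 \cup G_2)$, so the consistency implication is vacuous. This gives $\indthin(G_1 \cup G_2) \leq k$. For the $\indpthin$ case, the reversed ordering $w_{n_2}, \ldots, w_1, v_{n_1}, \ldots, v_1$ satisfies the same case analysis (triples are either internal to one graph, handled by strong consistency there, or mixed and hence vacuous), so the constructed partition is strongly consistent as well, yielding $\indpthin(G_1 \cup G_2) \leq k$. The only subtle point is verifying that the mixed triples really are vacuous in both orderings, which is immediate once one notes that there are no edges across the two factors; everything else reduces cleanly to the factor representations.
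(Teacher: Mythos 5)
Your proposal is correct and follows essentially the same route as the paper: the lower bound via monotonicity under induced subgraphs, and the upper bound via concatenating the two orderings and merging the classes index-by-index (your padding with empty classes is equivalent to the paper's taking $V^i = V_2^i$ for $i > t_1$), with mixed triples vacuous because the union has no cross edges.
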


\begin{proof}
Since both  $G_1$ and $G_2$ are induced subgraphs of $G_1 \cup
G_2$, then $\indthin(G_1 \cup G_2) \geq
\max\{\indthin(G_1),\indthin(G_2)\}$ and the same holds for the
independent proper thinness.

Let $G_1$ and $G_2$ be two graphs with independent thinness (resp.
independent proper thinness) $t_1$ and $t_2$, respectively.
Suppose without loss of generality that $t_1 \leq t_2$. For $G =
G_1 \cup G_2$, define a partition $(V^1,\dots, V^{t_2})$ such that
$V^i = V_1^i \cup V_2^i$ for $i = 1, \dots, t_1$ and $V^i = V_2^i$
for $i = t_1+1, \dots, t_2$, and define $v_1,\dots,
v_{n_1},w_1,\dots, w_{n_2}$ as an ordering of the vertices. By
definition of union of graphs, the sets of the partition are
independent and, if three ordered vertices according to the order
defined in $V(G_1 \cup G_2)$ are such that the first and the third
are adjacent, either the three vertices belong to $V_1$ or the
three vertices belong to $V_2$. Since the order and the partition
restricted to each of $G_1$ and $G_2$ are the original ones, the
properties required for consistency (resp. strong consistency) are
satisfied.
\end{proof}

\begin{theorem}\label{thm:union-comp}
Let $G_1$ and $G_2$ be graphs. Then $f(G_1 \cup G_2) =
f(G_1)+f(G_2)$, for $f  \in \{\compthin,\comppthin\}$.
\end{theorem}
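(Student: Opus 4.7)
The plan is to prove $f(G_1 \cup G_2) = f(G_1)+f(G_2)$ for $f \in \{\compthin, \comppthin\}$ by matching upper and lower bounds, with all the leverage coming from one structural fact: $G_1 \cup G_2$ has no edges between $V_1$ and $V_2$, so any clique of $G_1 \cup G_2$ is contained in one of $V_1$ or $V_2$. In particular, in any complete (proper) thin representation of $G_1 \cup G_2$, each class of the partition must lie entirely inside $V_1$ or entirely inside $V_2$.

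For the upper bound, I would take optimal representations of $G_1$ and $G_2$ realizing $f(G_i) = t_i$, and glue them: the concatenated ordering $v_1,\dots,v_{n_1},w_1,\dots,w_{n_2}$ together with the disjoint union of the two partitions yields $t_1+t_2$ complete classes. Consistency is checked by case analysis on a triple $r < s < t$: if all three vertices lie in the same $V_i$, the implication is inherited from the consistency of the representation of $G_i$; otherwise the three vertices straddle $V_1$ and $V_2$, and in every such configuration either $v_r v_t \notin E(G_1 \cup G_2)$ (because the endpoints belong to different factors) or $v_r$ and $v_s$ belong to classes coming from different factors, so the premise is vacuous. For $\comppthin$ the very same construction works: the reverse of the concatenated ordering is the concatenation of the two reversed orderings, and by hypothesis each inner ordering was strongly consistent with its partition.

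For the lower bound, I would start with any complete (proper) thin representation of $G_1 \cup G_2$ with $k$ classes. By the structural fact above, the classes split into two families of sizes $k_1$ and $k_2$, with $k_1+k_2 = k$, partitioning $V_1$ and $V_2$ respectively into cliques. The ordering restricted to $V_i$ together with this sub-partition is a complete (proper) thin representation of $G_i$: any triple of $V_i$ is also a triple of $G_1 \cup G_2$, and since $V_i$ induces $G_i$ in the union the implications transfer verbatim, both for the ordering and, in the proper case, for its reverse. Hence $k_i \geq f(G_i)$ and $k \geq f(G_1)+f(G_2)$.

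No step here is expected to be delicate; the only point worth stating carefully is the clique constraint that forces the clean separation of the classes between $V_1$ and $V_2$, which is what makes the inequality tight in contrast with the corresponding bound for $\thin$ or $\indthin$ (Theorem~\ref{thm:union} and Theorem~\ref{thm:union-ind}), where classes are allowed to mix vertices of both factors.
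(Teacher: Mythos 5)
Your proposal is correct and follows essentially the same route as the paper: the lower bound comes from the observation that no complete class can contain vertices of both $V_1$ and $V_2$ (so any representation of the union restricts to valid representations of $G_1$ and $G_2$), and the upper bound comes from concatenating the two optimal orderings and taking the union of the two partitions, with the same case analysis for (strong) consistency. No gaps.
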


\begin{proof}
Let $G_1$ and $G_2$ be two graphs with complete thinness (resp.
complete proper thinness) $t_1$ and $t_2$, respectively. By
definition of $G_1 \cup G_2$, any vertex ordering and partition
into complete sets of $G_1 \cup G_2$ which are (strongly)
consistent are (strongly) consistent when restricted to $V_1$ and
$V_2$. Notice that no complete set of $G_1 \cup G_2$ contains both
a vertex of $V_1$ and a vertex of $V_2$. So, $\compthin(G_1 \cup
G_2)$ (resp. $\comppthin(G_1 \cup G_2)$) is at least $t_1 + t_2$.
On the other hand, consider orderings and partitions of $V_1$ and
$V_2$ into $t_1$ and $t_2$ complete sets, respectively, which are
consistent (resp. strongly consistent). For $G_1 \cup G_2$, define
a partition with $t_1+t_2$ complete sets as the union of the two
partitions and define as ordering of the vertices the
concatenation of the orderings of $V_1$ and $V_2$ ({i.e., $v < v'$
if either $v$ and $v'$ belong to $V_i$ and $v < v'$, for $i\in
\{1,2\}$, or $v \in V_1$ and $v'$ in $V_2$}). By definition of
union of graphs, if three ordered vertices according to the order
defined in $V(G_1 \cup G_2)$ are such that the first and the third
are adjacent, either the three vertices belong to $V_1$ or the
three vertices belong to $V_2$. Since the order and the partition
restricted to each of $G_1$ and $G_2$ are the original ones, the
properties required for consistency (resp. strong consistency) are
satisfied. Thus $\compthin(G_1 \cup G_2)$ (resp. $\comppthin(G_1
\cup G_2)$) is at most $t_1 + t_2$, which completes the proof.
\end{proof}

\begin{theorem}\label{thm:join-ind}
Let $G_1$ and $G_2$ be graphs. Then $f(G_1 \vee G_2) =
f(G_1)+f(G_2)$, for $f \in \{\indthin, \indpthin\}$.
\end{theorem}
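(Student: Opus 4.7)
The plan is to prove both directions of the equality simultaneously for $\indthin$ and $\indpthin$, using the same construction; the strong consistency case will simply carry an extra symmetric check.

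For the upper bound, I would take orderings and partitions $(V_1^1,\dots,V_1^{t_1})$ of $V_1$ and $(V_2^1,\dots,V_2^{t_2})$ of $V_2$ that realize $\indthin(G_1)=t_1$ and $\indthin(G_2)=t_2$ (respectively the independent proper thinness values), and build a partition of $V(G_1\vee G_2)$ into $t_1+t_2$ classes by keeping the $V_1^i$ and $V_2^j$ as separate classes. Since every edge between $V_1$ and $V_2$ is present, each $V_1^i$ and each $V_2^j$ remains an independent set in $G_1\vee G_2$, so all $t_1+t_2$ classes are independent. For the ordering, I concatenate the two orderings, placing all of $V_1$ before all of $V_2$. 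I would then verify the consistency condition case-by-case on a triple $v_r<v_s<v_t$ with $v_r,v_s$ in a common class: such a class lies entirely in $V_1$ or entirely in $V_2$, so $v_r,v_s$ are both in $V_i$ for some $i$; if $v_t$ is also in $V_i$, consistency is inherited from the original representation of $G_i$, while if $v_t$ is in the other side the join makes $v_tv_r$ and $v_tv_s$ both edges automatically. For the independent proper thinness case I do the same check for the reversed ordering, which works symmetrically.

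For the lower bound, I would let $(W^1,\dots,W^k)$ be a partition into independent sets of $V(G_1\vee G_2)$ consistent (resp.\ strongly consistent) with some ordering $<$, where $k=\indthin(G_1\vee G_2)$. The crucial observation, which is really the whole content of the lower bound, is that because every vertex of $V_1$ is adjacent to every vertex of $V_2$, no class $W^j$ can contain vertices from both $V_1$ and $V_2$: such a class would contain an edge. Hence the classes split into $k_1$ classes contained in $V_1$ and $k_2$ classes contained in $V_2$, with $k_1+k_2=k$. Restricting $<$ to $V_1$ gives an ordering, and the restricted classes form an independent partition of $V_1$; checking consistency amounts to noting that for a triple inside $V_1$ with $v_r,v_s$ in the same class and $v_tv_r\in E_1$, consistency in $G_1\vee G_2$ forces $v_tv_s$ to be an edge in $G_1\vee G_2$, which (since $v_s,v_t\in V_1$) must be an edge in $G_1$. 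So $\indthin(G_1)\leq k_1$ and similarly $\indthin(G_2)\leq k_2$, giving $\indthin(G_1)+\indthin(G_2)\leq k$. The same argument, applied also to the reverse ordering, yields the inequality for $\indpthin$.

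I do not expect a substantive obstacle: the proof rests on the single structural fact that an independent set in $G_1\vee G_2$ is entirely contained in $V_1$ or entirely in $V_2$, which both forces the classes of any independent-thin representation to split cleanly and lets the two representations be concatenated without creating cross-edges to worry about. The only mildly careful point is the triple-case verification in the upper bound, where one must check that a ``mixed'' triple (say $v_r,v_s\in V_1$ and $v_t\in V_2$) does not produce a violation, and this is immediate from the definition of the join.
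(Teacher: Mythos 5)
Your proposal is correct and follows essentially the same route as the paper's proof: the upper bound by concatenating the orderings and taking the union of the two independent partitions with the same triple case analysis, and the lower bound from the observation that an independent class of $G_1 \vee G_2$ cannot meet both $V_1$ and $V_2$ and that a (strongly) consistent representation restricts to one on each factor.
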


\begin{proof}
Let $G_1$ and $G_2$ be two graphs with independent thinness (resp.
independent proper thinness) $t_1$ and $t_2$, respectively. By
definition of $G_1 \vee G_2$, any vertex ordering and partition
into independent sets of $G_1 \vee G_2$ which are (strongly)
consistent are (strongly) consistent when restricted to $V_1$ and
$V_2$. Notice that no independent set of $G_1 \vee G_2$ contains
both a vertex of $V_1$ and a vertex of $V_2$. So, $\indthin(G_1
\vee G_2)$ (resp. $\indpthin(G_1 \vee G_2)$) is at least $t_1 +
t_2$.

On the other hand, consider orderings and partitions of $V_1$ and
$V_2$ into $t_1$ and $t_2$ independent sets, respectively, which
are consistent (resp. strongly consistent). For $G = G_1 \vee
G_2$, define a partition with $t_1+t_2$ {independent sets} as the
union of the two partitions and define as ordering of the vertices
the concatenation of the orderings of $V_1$ and $V_2$. Let $x,y,z$
be three vertices of $V(G)$ such that $x < y < z$, $xz \in E(G)$,
and $x$ and $y$ are in the same class of the partition of $V(G)$.
Then, in particular, $x$ and $y$ both belong either to $V_1$ or to
$V_2$. If $z$ belongs to the same graph, then $yz \in E(G)$
because the ordering and partition restricted to each of $G_1$ and
$G_2$ are consistent. Otherwise, $z$ is also adjacent to $y$ by
the definition of join. We have proved that the defined partition
and ordering are consistent. The respective proof of the strong
consistency, given the strong consistency of the partition and
ordering of each of $G_1$ and $G_2$, is symmetric. Then
$\indthin(G_1 \vee G_2)$ (resp. $\indpthin(G_1 \vee G_2))$ is at
most $t_1+t_2$, which completes the proof.
\end{proof}

Lemma~\ref{lem:joinp} and Theorems~\ref{thm:union-comp}
and~\ref{thm:join-ind} imply the following corollary.

\begin{corollary}
If there is some constant value $k$ such that recognizing proper
$k$-thin graphs (resp. (proper) $k$-independent-thin and (proper)
$k$-complete-thin graphs) is NP-complete, then for every $k' > k$,
recognizing proper $k'$-thin graphs (resp. (proper)
$k'$-independent-thin and (proper) $k'$-complete-thin graphs) is
NP-complete.
\end{corollary}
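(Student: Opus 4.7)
My plan is to exhibit, for each of the five parameters involved, a polynomial-time many-one reduction from the hypothetically NP-complete problem ``does $G$ satisfy the parameter bound $k$?'' to its analogue with bound $k'$. In each case I will construct, from an input graph $G$, an auxiliary graph $G'$ (depending on $G$ and on the constants $k,k'$) so that the relevant invariant satisfies $f(G') = f(G) + (k'-k)$. Since $k'-k$ is a fixed constant, this immediately gives $f(G') \leq k'$ if and only if $f(G) \leq k$, as required for the reduction; membership in NP for each target problem follows by guessing the ordering and the partition into $k'$ classes and verifying (strong) consistency and, when applicable, that each class is independent or complete.

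For proper thinness I iterate Lemma~\ref{lem:joinp}: set $G^{(0)} = G$ and $G^{(i+1)} = 3 G^{(i)} \vee K_1$. Induction on $i$ yields $\pthin(G^{(i)}) = \pthin(G) + i$, so I take $G' = G^{(k'-k)}$. Its size is bounded by $3^{k'-k}|V(G)| + (k'-k)$, which is polynomial in $|V(G)|$ because $k'-k$ is constant.

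For the independent variants I apply Theorem~\ref{thm:join-ind} to $G' = G \vee K_{k'-k}$. Since all vertices of $K_{k'-k}$ are pairwise adjacent, each independent set of $K_{k'-k}$ contains at most one of them, so $\indthin(K_{k'-k}) = \indpthin(K_{k'-k}) = k'-k$, and the theorem gives $f(G') = f(G) + (k'-k)$ for $f \in \{\indthin, \indpthin\}$. Dually, for the complete variants I invoke Theorem~\ref{thm:union-comp} with $G' = G \cup (k'-k)K_1$: the isolated vertices force each complete class to be a singleton, so $\compthin((k'-k)K_1) = \comppthin((k'-k)K_1) = k'-k$, and the theorem gives $f(G') = f(G) + (k'-k)$ for $f \in \{\compthin, \comppthin\}$.

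This is essentially a routine padding construction and I do not expect any genuine obstacle; the only points that require a line of verification are the ``building block'' values $\indthin(K_n) = \indpthin(K_n) = n$ and $\compthin(nK_1) = \comppthin(nK_1) = n$, both of which are immediate from the definitions of the two parameters.
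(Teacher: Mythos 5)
Your proposal is correct and follows essentially the same route as the paper, which derives the corollary precisely from Lemma~\ref{lem:joinp} (iterated for proper thinness), Theorem~\ref{thm:join-ind} (join with a complete graph for the independent variants), and Theorem~\ref{thm:union-comp} (union with isolated vertices for the complete variants). Your explicit padding constructions and the NP-membership remark just spell out the details the paper leaves implicit.
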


Recall that the existence of such $k$ is still not known for all
these classes.

\begin{theorem}\label{thm:join-comp}
Let $G_1$ and $G_2$ be graphs. Then $f(G_1 \vee G_2) \leq
f(G_1)+f(G_2)$, for $f \in \{\compthin, \comppthin\}$. Moreover,
if $G_2$ is complete, then $\compthin(G_1 \vee G_2) =
\compthin(G_1)$.
\end{theorem}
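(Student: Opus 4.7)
The plan is to follow the structure of the proofs of Theorem~\ref{thm:union-comp} and Theorem~\ref{thm:join-ind}, combining the implicit (strongly) consistent orderings and partitions of $G_1$ and $G_2$ into a representation of $G_1 \vee G_2$. The key observation is that, unlike independent sets, complete sets of $G_1 \vee G_2$ may freely span across $V_1$ and $V_2$ since every vertex of $V_1$ is adjacent to every vertex of $V_2$; this is precisely what makes the sharper equality possible when $G_2$ is complete.

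For the inequality $f(G_1 \vee G_2) \leq f(G_1) + f(G_2)$, I would concatenate the orderings of $V_1$ and $V_2$ (placing $V_1$ first) and take the disjoint union of the two partitions, obtaining $t_1 + t_2$ classes. Each class is complete in some $G_i$ and hence remains complete in the join. For a triple $r < s < t$ with $r$ and $s$ in a common class, the two must lie in the same factor $V_i$ (no class mixes the two factors in this construction); then either all three vertices are in $V_i$, in which case the consistency of $G_i$'s ordering does the job, or the third vertex lies in the other factor, in which case the required edge is automatically present by the definition of the join. An entirely symmetric case analysis on the reverse ordering handles strong consistency for $\comppthin$.

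For the moreover statement, $G_1$ being an induced subgraph of $G_1 \vee G_2$ yields $\compthin(G_1) \leq \compthin(G_1 \vee G_2)$ by restriction of any complete-thin representation. For the reverse inequality, I would modify the previous construction by placing $V_2$ at the very end of the ordering and absorbing it into the first class, producing the partition $V^1 = V_1^1 \cup V_2$ and $V^i = V_1^i$ for $i \geq 2$, with only $\compthin(G_1)$ classes. Since $G_2$ is complete and all $V_1$--$V_2$ edges are present by the join, $V^1$ is complete; and the only new triples to check for consistency are those whose final vertex lies in $V_2$, for which the needed edge is supplied either by the join or by the completeness of $G_2$. The main subtlety, and the reason the moreover clause is stated only for $\compthin$, is that this absorption trick fails for strong consistency: in the reverse ordering a vertex absorbed from $V_2$ into $V^1$ can, together with a vertex of $V_1^1$, impose an adjacency requirement between $V_1^1$ and some other class $V_1^j$ that the $G_1$ representation does not necessarily enforce.
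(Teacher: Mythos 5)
Your proposal is correct and follows essentially the same route as the paper: concatenate the orderings with $V_1$ first, take the disjoint union of the two complete partitions for the additive bound, and for the case of complete $G_2$ merge $V_2$ into the class $V_1^1$ while keeping $V_2$ last in the ordering, with the join supplying the edges needed in the mixed triples. Your closing remark on why the absorption argument breaks for strong consistency is a sound observation, though the paper does not include it.
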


\subsection{Graph composition or lexicographical product}\label{sec:lex}

{Let $v \in V_1$.} The \emph{lexicographical product} of $G_1$ and
$G_2$ \emph{over the vertex} $v$ is the graph $G_1 \bullet_v G_2$
obtained from $G_1$ by replacing vertex $v$ by graph $G_2$, i.e.,
$V(G_1 \bullet_v G_2) = V_2 \cup V_1 \setminus \{v\}$, and $x$,
$y$ are adjacent if either $x, y \in V_1 \setminus \{v\}$ and $xy
\in E_1$, or $x, y \in V_2$ and $xy \in E_2$, or $x \in V_1
\setminus \{v\}$, $y \in V_2$, and $xv \in E_1$.

\begin{theorem}\label{thm:lexv} Let $G_1$ and $G_2$ be two graphs.
Then $f(G_1 \bullet_v G_2) \leq f(G_1) + f(G_2)$, for $f \in
\{\thin,$ $\pthin,$ $\compthin,$ $\comppthin\}$, and $f(G_1
\bullet_v G_2) \leq f'(G_1) + f(G_2)-1$, for $(f,f') \in
\{(\thin,\indthin),$ $(\indthin,\indthin),$ $(\pthin,\indpthin),$
$(\indpthin,\indpthin)\}$. Moreover, if $G_2$ is complete, $f(G_1
\bullet_v G_2) = f(G_1)$, for $f \in \{\thin,$ $\pthin,$
$\compthin,$ $\comppthin\}$.
\end{theorem}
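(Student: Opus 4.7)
The plan is to prove every inequality at once by building a single unified representation of $G_1 \bullet_v G_2$ out of representations of $G_1$ and $G_2$ realizing the respective parameters. Let $V_1^1$ denote the class containing $v$ in the partition of $G_1$, and write the ordering of $G_1$ as $v_1,\dots,v_{n_1}$ with $v=v_i$. The first step is to form the combined ordering $v_1,\dots,v_{i-1},w_1,\dots,w_{n_2},v_{i+1},\dots,v_{n_1}$, i.e.\ to splice the $G_2$-ordering in place of $v$, and to use the default partition $(V_1^1\setminus\{v\},V_1^2,\dots,V_1^{t_1},V_2^1,\dots,V_2^{t_2})$ of $t_1+t_2$ classes; this should yield the bound $f(G_1)+f(G_2)$ in each of the four basic variants.

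The core technical step is a case analysis on a triple $x<y<z$ of the combined ordering with $xz\in E(G_1\bullet_v G_2)$ and $x,y$ in the same class. If all three lie in $V_1\setminus\{v\}$ or all three lie in $V_2$, consistency is inherited from the original representation. Otherwise, since no class of the default partition mixes $V_1$ and $V_2$, the only possibilities are $x,y\in V_1\setminus\{v\}$ with $z\in V_2$, which reduces via the definition of $G_1\bullet_v G_2$ to the triple $(x,y,v)$ in $G_1$, and $x,y\in V_2$ with $z\in V_1\setminus\{v\}$, where $xz\in E\Leftrightarrow vz\in E_1\Leftrightarrow yz\in E$. For the proper variants the same argument is applied to the reverse ordering to obtain strong consistency, and for the independent/complete variants one checks that every class of the constructed partition remains independent (resp.\ complete) in $G_1\bullet_v G_2$.

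For the sharper bound $f'(G_1)+f(G_2)-1$, the listed pairs have $f'\in\{\indthin,\indpthin\}$, so $V_1^1$ is an independent set. The improvement is obtained by merging $V_1^1\setminus\{v\}$ with $V_2^1$ into a single class, giving $t_1+t_2-1$ classes. This merged class is independent: $V_1^1$ independent forces no edges inside $V_1^1\setminus\{v\}$ and, crucially, implies that vertices of $V_2$, which inherit $v$'s external neighborhood, have no neighbor in $V_1^1\setminus\{v\}$. The additional consistency check concerns triples with $x\in V_1^1\setminus\{v\}$ and $y\in V_2^1$ (or vice versa); the independence of $V_1^1$ forces the critical edge $xv$ (respectively $vz$) to be absent in $G_1$, killing the subcase with $z\in V_2$, and the remaining subcase with $z\in V_1\setminus\{v\}$ reduces to consistency of $G_1$ on $(x,v,z)$ (respectively $(v,y,z)$).

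Finally, the $G_2$-complete case, which gives the equality $f(G_1\bullet_v G_2)=f(G_1)$, is handled by placing the entire $V_2$ inside $V_1^1$ in the constructed partition, keeping the number of classes equal to $t_1$. When $f\in\{\compthin,\comppthin\}$, the enlarged class $(V_1^1\setminus\{v\})\cup V_2$ is complete because $V_1^1$ is complete in $G_1$, so every $w\in V_2$ is adjacent to every vertex of $V_1^1\setminus\{v\}$, and $G_2$ itself is complete; consistency is verified by the same case analysis. The matching lower bound $f(G_1\bullet_v G_2)\geq f(G_1)$ follows because $G_1$ appears as an induced subgraph of $G_1\bullet_v G_2$ (take any single vertex of $V_2$ together with $V_1\setminus\{v\}$), and all six parameters are inherited by induced subgraphs. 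The main obstacle is the bookkeeping in the case analysis, and in particular explaining why the merging trick giving the $-1$ term does not extend to the complete variants: the symmetric argument there would require an unforced edge inside $G_2$, breaking consistency.
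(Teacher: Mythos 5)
Your construction --- splicing the $G_2$-ordering into $v$'s position, taking the union of the two partitions (with $v$ deleted from its class) for the $f(G_1)+f(G_2)$ bounds, merging $v$'s class with one class of $G_2$ both for the $-1$ improvement and for the $G_2$-complete case, checking consistency by the same case analysis on triples, and noting that the classes stay independent or complete when required --- is essentially the paper's own proof, with the induced-subgraph lower bound for the equality case made explicit where the paper leaves it implicit. The only blemish is your stated reason for discarding the subcase $x\in V_2$, $y\in V_1\setminus\{v\}$, $z\in V_2$ in the merged-class analysis: that subcase is impossible simply because the $V_2$-block is consecutive in the ordering and $y$ lies after it (as the paper observes), not because independence of $v$'s class forbids an edge; this slip does not affect correctness since the genuine subcases are handled exactly as in the paper.
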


\begin{proof}
Let $G_1$ and $G_2$ be two graphs with thinness (resp. proper
thinness) $t_1$ and $t_2$, respectively.

For $G = G_1 \bullet_v G_2$, if $v$ is the $i$-th vertex in the
ordering of $V_1$ and belongs to the class $V_1^j$, define
$v_1,\dots,v_{i-1}, w_1,\dots, w_{n_2},v_{i+1}, \dots, v_{n_1}$ as
an ordering of the vertices of $G$, and a partition with at most
$t_1+t_2$ sets as the union of the two partitions, where $V_1^j$
is replaced by $V_1^j \setminus \{v\}$ (or eliminated if $v$ is
the only vertex in the class, justifying the partition to have
\emph{at most} $t_1+t_2$ classes).

Let $x,y,z$ be three vertices of $V(G)$ such that $x < y < z$, $xz
\in E(G)$, and $x$ and $y$ are in the same class of the partition
of $V(G)$. Then, in particular, $x$ and $y$ both belong either to
$V_1 \setminus \{v\}$ or to $V_2$. If $z$ belongs to the same
graph, then $yz \in E(G)$ because the ordering and partition
restricted to each of $G_1$ and $G_2$ are consistent.

Otherwise, if $x$ and $y$ belong to $V_2$ and $z$ belongs to $V_1
\setminus \{v\}$, then $z$ is adjacent to $y$ because $V_2$ is an
homogeneous set in $G$. If $x$ and $y$ belong to $V_1 \setminus
\{v\}$ and $z$ belongs to $V_2$, by the definition of the order in
$G$, $y < v$ in the order of $V_1$, so $v$ is adjacent to $y$ in
$G_1$. By the definition of $G$, $y$ is adjacent to $z$.

We have proved that the defined partition and ordering are
consistent, and thus that $\thin(G_1 \bullet_v G_2) \leq
\thin(G_1)+\thin(G_2)$. The proof of the strong consistency, given
the strong consistency of the partition and ordering of each of
$G_1$ and $G_2$, is symmetric and implies $\pthin(G_1 \bullet_v
G_2) \leq \pthin(G_1)+\pthin(G_2)$.

Notice that if the partitions of $G_1$ and $G_2$ are into complete
sets (resp. independent sets), so is the defined partition of $G_1
\bullet_v G_2$. Moreover, we will define a new partition with
$t_1+t_2-1$ sets as the union of the partitions of $V_1$ and
$V_2$, where $V_1^j$ and $V_2^1$ are replaced by $V_1^j \setminus
\{v\} \cup V_2^1$. Notice that if the partitions of $G_1$ and
$G_2$ are into complete sets (resp. independent sets), so is the
new class.

We will prove first that if the partition of $V_1$ consists of
independent sets (not necessarily the partition of $V_2$), then
the order and the new partition are consistent. For strongly
consistence the proof is symmetric.

The only cases that need to be re-considered are the ones in which
$x$ and $y$ belong to the new class and, moreover, one of them
belongs to $V_1$ and the other one belongs to $V_2$. If $x$
belongs to $V_2$ and $y$ belongs to $V_1$, since the vertices of
$G_2$ are consecutive in the order, this implies that $z$ belongs
to $V_1$. Since $xz \in E(G)$, $vz \in E_1$. By the order
definition, $v < y < z$ in $G_1$. Since the ordering and partition
of $G_1$ are consistent, $yz \in E_1$ and thus $yz \in E(G)$. If
$x$ belongs to $V_1$ and $y$ belongs to $V_2$, since the partition
of $G_1$ consists of independent sets, $xv \not \in V_1$. So, if
$xz \in E(G)$, $z$ belongs to $V_1$. By the order definition, $x <
v < z$ in $G_1$. Since the ordering and partition of $G_1$ are
consistent, $vz \in E_1$ and thus $yz \in E(G)$.

Now we will prove that if $G_2$ is complete, i.e., $V_2 = V_2^1$,
then the order and the new partition are consistent. For strongly
consistence, the proof is symmetric.

Again, the only cases that need to be re-considered are the ones
in which $x$ and $y$ belong to the new class and, moreover, one of
them belongs to $V_1$ and the other one belongs to $V_2$. We will
prove them for consistence, and for strongly consistence the proof
is symmetric.

If $x$ belongs to $V_2$ and $y$ belongs to $V_1$, since the
vertices of $G_2$ are consecutive in the order, this implies that
$z$ belongs to $V_1$. Since $xz \in E(G)$, $vz \in E_1$. By the
order definition, $v < y < z$ in $G_1$. Since the ordering and
partition of $G_1$ are consistent, $yz \in E_1$ and thus $yz \in
E(G)$. If $x$ belongs to $V_1$ and $y$ belongs to $V_2$, since
$G_2$ is complete, if $z \in V_2$, $zy \in V(G)$. So, assume $z$
belongs to $V_1$. By the order definition, $x < v < z$ in $G_1$.
Since the ordering and partition of $G_1$ are consistent, $xz \in
E_1$ implies $vz \in E_1$ and thus $yz \in E(G)$.
\end{proof}

\begin{remark} Notice that if $v$ is isolated in $G_1$, then $G_1 \bullet_v G_2 = G_1[V(G_1) \setminus \{v\}] \cup G_2$, and if $v$ is universal in $G_1$, then $G_1 \bullet_v G_2 = G_1[V(G_1) \setminus \{v\}] \vee G_2$, so we can obtain better bounds by using the results of Section~\ref{sec:unionjoin}. \end{remark}

An equivalent formulation of Theorem~\ref{thm:lexv} is the
following.

\begin{theorem}\label{thm:lexhom}
Let $H$ be an homogeneous set of $G$, and $G|_H$ be the graph
obtained by contracting $H$ into a vertex. Then $f(G) \leq f(G|_H)
+ f(H)$, for $f \in \{\thin,$ $\pthin,$ $\compthin,$
$\comppthin\}$, and $f(G) \leq f'(G|_H) + f(H)-1$, for $(f,f') \in
\{(\thin,\indthin),$ $(\indthin,\indthin),$ $(\pthin,\indpthin),$
$(\indpthin,\indpthin)\}$. Moreover, if $H$ is complete, $f(G) =
f(G|_H)$, for $f \in \{\thin,$ $\pthin,$ $\compthin,$
$\comppthin\}$.
\end{theorem}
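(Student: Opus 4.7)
The plan is to recognize Theorem~\ref{thm:lexhom} as a direct reformulation of Theorem~\ref{thm:lexv}, obtained by viewing the lexicographical product over a vertex as the inverse operation of contracting a homogeneous set. First I would let $v$ denote the vertex of $G|_H$ resulting from the contraction of $H$. I would then verify that $G$ is isomorphic to $(G|_H) \bullet_v G[H]$: vertices of $V(G)\setminus H$ keep their mutual adjacencies (which are exactly the edges of $G|_H$ not involving $v$), vertices of $H$ keep their mutual adjacencies (which form $G[H]$), and for $x \in V(G)\setminus H$ and $y \in H$, the edge $xy$ belongs to $G$ if and only if $x$ is adjacent to every vertex of $H$ in $G$ (by the homogeneous set property), which in turn holds if and only if $xv \in E(G|_H)$. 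This is precisely the adjacency rule of the lexicographical product over the vertex $v$.

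Having established the isomorphism $G \cong (G|_H)\bullet_v G[H]$, I would simply invoke Theorem~\ref{thm:lexv} with $G_1 = G|_H$ and $G_2 = G[H]$. This yields $f(G) \leq f(G|_H) + f(G[H])$ for every $f \in \{\thin, \pthin, \compthin, \comppthin\}$, the mixed inequality $f(G) \leq f'(G|_H) + f(G[H]) - 1$ for each of the four pairs $(f,f')$ listed in Theorem~\ref{thm:lexv}, and the equality $f(G) = f(G|_H)$ when $G[H]$ is complete, again for $f \in \{\thin, \pthin, \compthin, \comppthin\}$. Identifying $f(H)$ with $f(G[H])$ (as $H$ is viewed as an induced subgraph) gives exactly the claimed bounds.

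There is no real obstacle here beyond the bookkeeping of checking the isomorphism; the only subtlety is confirming that the homogeneous-set condition on $H$ is precisely what makes the external adjacencies in $G$ depend only on the contracted vertex $v$ and not on the individual vertices of $H$, so that no edge information is lost or added when translating between the two formulations.
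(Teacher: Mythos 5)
Your proof is correct and is exactly the paper's approach: the paper introduces Theorem~\ref{thm:lexhom} as ``an equivalent formulation'' of Theorem~\ref{thm:lexv}, relying on precisely the isomorphism $G \cong (G|_H) \bullet_v G[H]$ that you verify, with the homogeneous-set property guaranteeing the external adjacencies depend only on the contracted vertex $v$. Nothing further is needed.
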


The \emph{lexicographical product} of $G_1$ and $G_2$ (also known
as \emph{composition} of $G_1$ and $G_2$) is the graph $G_1
\bullet G_2$ (also noted as $G_1[G_2]$) whose vertex set is the
Cartesian product $V_1 \times V_2$, and such that two vertices
$(u_1,u_2)$ and $(v_1,v_2)$ are adjacent in $G_1 \bullet G_2$ if
and only if either $u_1 = v_1$ and $u_2$ is adjacent to $v_2$ in
$G_2$, or $u_1$ is adjacent to $v_1$ in $G_1$. It is not
necessarily commutative.

\begin{theorem}\label{thm:lex} Let $G_1$ and $G_2$ be two graphs.
Then, if $G_2$ is complete, $f(G_1 \bullet G_2) = f(G_1)$, for $f
\in \{\thin,$ $\pthin,$ $\compthin,$ $\comppthin\}$. Also, $f(G_1
\bullet G_2) \leq f'(G_1)f(G_2)$, for $(f,f') \in
\{(\thin,\indthin),$ $(\indthin,\indthin),$ $(\pthin,\indpthin),$
$(\indpthin,\indpthin)\}$, and $f(G_1 \bullet G_2) \leq
|V_1|f(G_2)$, for $f \in \{\compthin,$ $\comppthin\}$. If $G_2$ is
not complete, $\omega(G_1)f(G_2) \leq f(G_1 \bullet G_2)$, for $f
\in \{\thin,\indthin,\allowbreak \indpthin\}$.
\end{theorem}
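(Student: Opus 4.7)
The statement splits into four claims, which I plan to address in turn using the tools already established. For the equality when $G_2$ is complete, I view $G_1 \bullet G_2$ as the graph obtained by iteratively substituting each vertex of $G_1$ by the complete graph $G_2$ via the single-vertex operation of Theorem~\ref{thm:lexv}. Since that theorem gives $f(G \bullet_v G_2) = f(G)$ for $f \in \{\thin, \pthin, \compthin, \comppthin\}$ whenever $G_2$ is complete, a straightforward induction on $|V_1|$ yields $f(G_1 \bullet G_2) \leq f(G_1)$; the reverse inequality is immediate because $G_1$ embeds as an induced subgraph of $G_1 \bullet G_2$ by fixing any vertex of $G_2$.

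For the upper bound $f(G_1 \bullet G_2) \leq f'(G_1) f(G_2)$, the plan is to equip $V(G_1 \bullet G_2)$ with the lexicographic order $(u_1, u_2) < (v_1, v_2)$ iff $u_1 < v_1$ in $V_1$, or $u_1 = v_1$ and $u_2 < v_2$ in $V_2$, together with the product partition whose $t'_1 t_2$ classes are the sets $V_1^i \times V_2^j$. Independence of each such class in $G_1 \bullet G_2$ is a direct check from the definition of the lexicographical product. For consistency, given a triple $(x_1, x_2) < (y_1, y_2) < (z_1, z_2)$ with $(x_1, x_2)(z_1, z_2) \in E(G_1 \bullet G_2)$ and the first two vertices sharing a class, I split on whether $x_1 = y_1$: in that case the required edge $(y_1, y_2)(z_1, z_2)$ follows either from consistency of the ordering of $V_2$ or from $x_1 z_1 = y_1 z_1 \in E_1$; if $x_1 \neq y_1$, then $x_1, y_1 \in V_1^i$ independent forces $x_1 y_1 \notin E_1$, which rules out the degenerate case $y_1 = z_1$ and reduces the problem to $x_1 < y_1 < z_1$ in $V_1$, where consistency of the ordering of $V_1$ closes the argument. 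Strong consistency for the proper variants follows by applying the same argument to the reversed order, which is again lexicographic in the reversed component orders. The bound $f(G_1 \bullet G_2) \leq |V_1| f(G_2)$ for $f \in \{\compthin, \comppthin\}$ uses the same ordering with the finer partition into the sets $\{u_1\} \times V_2^j$, each complete whenever $V_2^j$ is complete in $G_2$, and for which consistency is immediate since all vertices of a class share the first coordinate.

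For the lower bound when $G_2$ is not complete, let $K$ be a maximum clique of $G_1$. The induced subgraph on $K \times V_2$ equals $K_{\omega(G_1)} \bullet G_2$, which by definition of the lexicographical product unfolds into the join of $\omega(G_1)$ disjoint copies of $G_2$. Monotonicity of $f$ on induced subgraphs then reduces the bound to iterating Theorem~\ref{thm:join2} for $f = \thin$ (using that no iterated join of copies of a non-complete $G_2$ is complete, so the hypothesis is preserved at each step) and Theorem~\ref{thm:join-ind} for $f \in \{\indthin, \indpthin\}$ (where no completeness hypothesis is needed). I expect the main technical point to be the case analysis in the second paragraph: the independence of the classes $V_1^i$, rather than merely thinness of $G_1$, is precisely what excludes the problematic subcase $y_1 = z_1$ with $x_1 \neq y_1$, in which one would otherwise need an unprovable edge $y_2 z_2 \in E_2$.
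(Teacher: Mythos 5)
Your proposal is correct and takes essentially the same route as the paper's own proof: iterating Theorem~\ref{thm:lexv} for complete $G_2$, the lexicographic order with the product partition $V_1^i\times V_2^j$ (where, exactly as in the paper, independence of the $V_1$-classes is what kills the subcase $x_1\neq y_1=z_1$), the column partition $\{v_i\}\times V_2^j$ for the complete variants, and the reduction of the lower bound to an iterated join via Theorems~\ref{thm:join2} and~\ref{thm:join-ind}. Two harmless glosses: independence of the product classes only holds (and is only needed) when both factor partitions are independent, and for the complete-thinness partition the within-column triples still rely on the consistency of $G_2$'s representation rather than merely on the shared first coordinate, which is immediate from the setup.
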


\begin{proof}
If $G_2$ is complete, we can iteratively apply
Theorem~\ref{thm:lexv}, since $G_1 \bullet G_2 = ((\dots((G_1
\bullet_{v_1} G_2) \bullet_{v_2} G_2) \dots ) \bullet_{v_{n_1}}
G_2)$, with {$\{v_1,\dots, v_{n_1}\} = V_1$}. By induction in
$n_1$, $f(G_1 \bullet G_2) = f(G_1)$, for $f \in \{\thin,$
$\pthin,$ $\compthin,$ $\comppthin\}$.


So, let $G_1$ and $G_2$ be two graphs, such that $f'(G_1) = t_1$
and $f(G_2) = t_2$, and assume $G_2$ is not complete. In $G_1
\bullet G_2$, consider $V_1 \times V_2$ lexicographically ordered
with respect to the {defined} orderings of $V_1$ and $V_2$.

Consider first $(f,f') \in \{(\thin,\indthin),$
$(\indthin,\indthin),$ $(\pthin,\indpthin),$
$(\indpthin,\indpthin)\}$, and the partition $\{V^{i,j}\}_{1\leq i
\leq t_1,\ 1 \leq j \leq t_2}$ such that $V^{i,j} = \{(v,w) : v
\in V_1^i, w \in V_2^j\}$ for each $1\leq i \leq t_1$, $1 \leq j
\leq t_2$. Since the partition of $V_1$ consists of independent
sets, vertices $(v,w)$ and $(v',w)$ in the same partition are not
adjacent for $v \neq v'$, and if furthermore the partition of
$V_2$ consists of independent sets, the same property holds for
the defined partition of $V_1 \times V_2$ for $G_1 \bullet G_2$.

We will show now that this ordering and partition of $V_1 \times
V_2$ are consistent (resp. strongly consistent, when $f, f'$ are
proper). Let $(v_p,w_i), (v_q,w_j), (v_r,w_{\ell})$ be three
vertices appearing in that ordering in $V_1 \times V_2$.

\emph{Case 1: $p = q = r$.} In this case, $i < j < \ell$. Suppose
first that $(v_p,w_i), (v_q,w_j)$ belong to the same class, i.e.,
$w_i, w_j$ belong to the same class in $G_2$. Vertices $(v_p,w_i)$
and $(v_r,w_{\ell})$ are adjacent in $G_1 \bullet G_2$ if and only
if $w_i w_{\ell} \in E_2$. Since the order and partition of $G_2$
are consistent, $w_j w_{\ell} \in E_2$, so $(v_q,w_j)$ and
$(v_r,w_{\ell})$ are adjacent in $G_1 \bullet G_2$, as required.
If $f$ and $f'$ are proper, the proof for strongly consistence is
symmetric.

\emph{Case 2: $p = q < r$.} Suppose first that $(v_p,w_i),
(v_q,w_j)$ belong to the same class, i.e., $w_i, w_j$ belong to
the same class in $G_2$. Vertices $(v_p,w_i)$ and $(v_r,w_{\ell})$
are adjacent in $G_1 \bullet G_2$ if and only if $v_p v_r \in
E_1$. Since $p = q$, $v_q v_r \in E_1$, so $(v_q,w_j)$ and
$(v_r,w_{\ell})$ are adjacent in $G_1 \bullet G_2$, as required.
Suppose now $f$ and $f'$ are proper, and {$(v_q, w_j), (v_r,
w_{\ell})$} belong to the same class, and, in particular, $v_q,
v_r$ belong to the same class in $G_1$ thus they are not adjacent.
Since $p = q$, $(v_p,w_i)$ and $(v_r, w_{\ell})$ are not adjacent
in $G_1 \bullet G_2$.

\emph{Case 3: $p < q = r$.} Suppose first that $(v_p,w_i),
(v_q,w_j)$ belong to the same class, and, in particular, $v_p,
v_q$ belong to the same class in $G_1$. Thus, they are not
adjacent. Since $q = r$, $(v_p,w_i)$ and $(v_r, w_{\ell})$ are not
adjacent in $G_1 \bullet G_2$. Suppose now $f$ and $f'$ are
proper, and $(v_q, w_j), (v_r, w_{\ell})$ belong to the same
class, i.e., $w_i, w_j$ belong to the same class in $G_2$.
Vertices $(v_p,w_i)$ and $(v_r,w_{\ell})$ are adjacent in $G_1
\bullet G_2$ if and only if $v_p v_r \in E_1$. Since $q = r$, $v_p
v_q \in E_1$, so $(v_p,w_i)$ and $(v_q,w_j)$ are adjacent in $G_1
\bullet G_2$, as required.

\emph{Case 4: $p < q < r$.} Suppose first that $(v_p,w_i),
(v_q,w_j)$ belong to the same class, and, in particular, $v_p,
v_q$ belong to the same class in $G_1$. Since the ordering an the
partition of $G_1$ are consistent, if $(v_p,w_i), (v_r, w_{\ell})$
are adjacent in $G_1 \bullet G_2$, in particular $v_p, v_r$ are
adjacent in $G_1$, thus $v_q, v_r$ are adjacent in $G_1$ and
$(v_q,w_j), (v_r, w_{\ell})$, as required. If $f$ and $f'$ are
proper, the proof for strongly consistence is symmetric.

Consider $f \in \{\compthin,$ $\comppthin\}$, and the partition
$\{V^{i,j}\}_{1\leq i \leq n_1,\ 1 \leq j \leq t_2}$ such that
$V^{i,j} = \{(v_i,w) : w \in V_2^j\}$ for each $1\leq i \leq n_1$,
$1 \leq j \leq t_2$. Since the partition of $V_2$ consists of
complete sets, the same property holds for the defined partition
of $V_1 \times V_2$ for $G_1 \bullet G_2$.

We will show now that this ordering and partition of $V_1 \times
V_2$ are consistent (resp. strongly consistent). Let $(v_p,w_i),
(v_q,w_j), (v_r,w_{\ell})$ be three vertices appearing in that
ordering in $V_1 \times V_2$.

\emph{Case 1: $p = q = r$.} In this case, $i < j < \ell$. Suppose
first that $(v_p,w_i), (v_q,w_j)$ belong to the same class, i.e.,
$w_i, w_j$ belong to the same class in $G_2$. Vertices $(v_p,w_i)$
and $(v_r,w_{\ell})$ are adjacent in $G_1 \bullet G_2$ if and only
if $w_i w_{\ell} \in E_2$. Since the order and partition of $G_2$
are consistent, $w_j w_{\ell} \in E_2$, so $(v_q,w_j)$ and
$(v_r,w_{\ell})$ are adjacent in $G_1 \bullet G_2$, as required.
If $f = \comppthin$, the proof for strongly consistence is
symmetric.

\emph{Case 2: $p = q < r$.} Suppose first that $(v_p,w_i),
(v_q,w_j)$ belong to the same class, i.e., $w_i, w_j$ belong to
the same class in $G_2$. Vertices $(v_p,w_i)$ and $(v_r,w_{\ell})$
are adjacent in $G_1 \bullet G_2$ if and only if $v_p v_r \in
E_1$. Since $p = q$, $v_q v_r \in E_1$, so $(v_q,w_j)$ and
$(v_r,w_{\ell})$ are adjacent in $G_1 \bullet G_2$, as required.
No further restriction has to be satisfied if $f = \comppthin$,
since by definition of the classes $(v_q, w_j), (v_r, w_{\ell})$
belong to different classes.

\emph{Case 3: $p < q = r$.} No restriction has to be satisfied for
consistence, as $(v_p,w_i), (v_q,w_j)$ belong to different
classes. If $f = \comppthin$, suppose that $(v_q, w_j), (v_r,
w_{\ell})$ belong to the same class, i.e., $w_j, w_{\ell}$ belong
to the same class in $G_2$. Vertices $(v_p,w_i)$ and
$(v_r,w_{\ell})$ are adjacent in $G_1 \bullet G_2$ if and only if
$v_p v_r \in E_1$. Since $q = r$, $v_p v_q \in E_1$, so
$(v_p,w_i)$ and $(v_q,w_j)$ are adjacent in $G_1 \bullet G_2$, as
required.

\emph{Case 4: $p < q < r$.} In this case, the three vertices are
in different classes, so no restriction has to be satisfied.


To prove the lower bound when $G_2$ is not complete, notice that
$K_r \bullet G_2$ is isomorphic to $(((G_2 \vee G_2) \vee G_2)
\dots \vee G_2)$ ($r$ times). By Theorems~\ref{thm:join2}
and~\ref{thm:join-ind}, $\omega(G_1)f(G_2) \leq f(G_1 \bullet
G_2)$, for $f \in \{\thin,\indthin,\indpthin\}$.
\end{proof}

\begin{corollary}\label{thm:lexomega}
Let $G_1$ and $G_2$ be graphs. If $G_2$ is not complete, then
$\thin(G_1 \bullet G_2) \geq \omega(G_1)$.
\end{corollary}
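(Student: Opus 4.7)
The plan is to observe that the corollary is an immediate consequence of Theorem~\ref{thm:lex} together with the trivial fact that thinness is at least one for any nonempty graph. Specifically, Theorem~\ref{thm:lex} guarantees that if $G_2$ is not complete, then $\thin(G_1 \bullet G_2) \geq \omega(G_1)\,\thin(G_2)$. Since $\thin(G_2) \geq 1$ whenever $G_2$ has at least one vertex, multiplying the inequality through by this lower bound collapses the product to just $\omega(G_1)$.

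Concretely, the first step is to invoke Theorem~\ref{thm:lex} with the parameter choice $f = \thin$, which is one of the three parameters for which the lower bound $\omega(G_1)\,f(G_2) \leq f(G_1 \bullet G_2)$ is established there. The second step is simply to note that $\thin(G_2) \geq 1$. Chaining these gives $\thin(G_1 \bullet G_2) \geq \omega(G_1)\cdot \thin(G_2) \geq \omega(G_1)\cdot 1 = \omega(G_1)$, which is exactly the claim.

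There is no real obstacle here, as all the substantive work has already been done in the proof of Theorem~\ref{thm:lex} (via the identification $K_r \bullet G_2 \cong (((G_2 \vee G_2) \vee G_2) \dots \vee G_2)$ and iterated application of Theorem~\ref{thm:join2}). The hypothesis that $G_2$ is not complete is essential only because it is the hypothesis under which the lower bound in Theorem~\ref{thm:lex} is stated; when $G_2$ is complete, the lexicographical product collapses nicely and the bound $\thin(G_1 \bullet G_2) = \thin(G_1)$ supersedes the weaker $\omega(G_1)$ bound anyway.
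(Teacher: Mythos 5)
Your derivation is correct and matches the paper's intent: the corollary is stated immediately after Theorem~\ref{thm:lex} precisely because it follows from the lower bound $\omega(G_1)\thin(G_2) \leq \thin(G_1 \bullet G_2)$ (valid when $G_2$ is not complete) together with $\thin(G_2) \geq 1$. Nothing more is needed.
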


Notice that $K_n \bullet 2K_1 = \overline{tK_2}$. So, we have the
following corollary of Theorem~\ref{thm:tK2}.

\begin{corollary}\label{thm:nblex}
There is no function $f: \mathbb{R}^2 \to \mathbb{R}$ such that
$\thin(G_1 \bullet G_2) \leq f(\comppthin(G_1),|V(G_2)|)$ for any
pair of graphs $G_1$, $G_2$.
\end{corollary}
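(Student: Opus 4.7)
The plan is to exhibit a family of pairs $(G_1,G_2)$ on which $\comppthin(G_1)$ and $|V(G_2)|$ stay bounded while $\thin(G_1\bullet G_2)$ grows without bound. The remark placed just before the statement points to the right witnesses: take $G_1 = K_n$ for arbitrary $n \geq 1$, and $G_2 = 2K_1$.

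The first step is to verify the identification $K_n \bullet 2K_1 \cong \overline{nK_2}$ directly from the definition of the lexicographical product. Two vertices $(u_1,u_2)$ and $(v_1,v_2)$ of $K_n \bullet 2K_1$ are adjacent iff $u_1 v_1 \in E(K_n)$ (the other clause cannot apply because $2K_1$ has no edges), i.e., iff $u_1 \neq v_1$. Hence the only non-adjacent pairs are $\{(v_i,a),(v_i,b)\}$, one for each vertex $v_i$ of $K_n$, which is exactly $\overline{nK_2}$. By Theorem~\ref{thm:tK2} we then have $\thin(K_n \bullet 2K_1) = \thin(\overline{nK_2}) = n$. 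For the other side of the inequality, I would observe that $\comppthin(K_n) = 1$, since $V(K_n)$ is itself complete and any vertex ordering is strongly consistent with the trivial one-class partition, and trivially $|V(2K_1)| = 2$.

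Finally, suppose for contradiction that a function $f$ as in the statement exists. Then for every $n \geq 1$,
\[
n \;=\; \thin(K_n \bullet 2K_1) \;\leq\; f(\comppthin(K_n),\,|V(2K_1)|) \;=\; f(1,2),
\]
which bounds $n$ by the fixed real number $f(1,2)$, a contradiction. I do not anticipate any genuine obstacle here: all the content lies in choosing the right witness family $\{(K_n, 2K_1)\}_{n\geq 1}$ and invoking Theorem~\ref{thm:tK2}.
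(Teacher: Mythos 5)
Your proposal is correct and follows exactly the paper's route: the paper derives the corollary from the observation $K_n \bullet 2K_1 = \overline{nK_2}$ together with Theorem~\ref{thm:tK2}, while noting $\comppthin(K_n)=1$ and $|V(2K_1)|=2$. You merely spell out the isomorphism check and the final contradiction, which the paper leaves implicit.
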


The non existence of bounds in terms of other parameters can be
deduced from diagram in Figure~\ref{fig:hasse}.

\begin{corollary}
Let $G_1$ be a co-comparability graph. If $G_2$ is complete, then
$\thin(G_1 \bullet G_2) = \thin(G_1)$, and if not, then $\thin(G_1
\bullet G_2) = \omega(G_1)\thin(G_2)$.
\end{corollary}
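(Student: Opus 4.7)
The plan is to observe that both parts of this corollary are essentially immediate consequences of Theorem~\ref{thm:lex} combined with structural facts about co-comparability graphs, so no new construction is needed.

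For the first case, when $G_2$ is complete, Theorem~\ref{thm:lex} already states that $f(G_1 \bullet G_2) = f(G_1)$ for $f = \thin$ and \emph{every} graph $G_1$, in particular for co-comparability ones. So I would simply quote this and move on.

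For the second case, when $G_2$ is not complete, the strategy is to sandwich $\thin(G_1 \bullet G_2)$ between matching upper and lower bounds from Theorem~\ref{thm:lex}. The upper bound available for this choice of $f$ is $\thin(G_1 \bullet G_2) \leq \indthin(G_1)\,\thin(G_2)$, while the lower bound (which needs $G_2$ non-complete) is $\omega(G_1)\,\thin(G_2) \leq \thin(G_1 \bullet G_2)$. These two agree if and only if $\indthin(G_1) = \omega(G_1)$, which is exactly the content of Remark~\ref{rem:co-comp-ind-thin} for co-comparability graphs: one has $\indthin(G_1) = \chi(G_1)$ there, and since co-comparability graphs are perfect (cited in the excerpt from~\cite{Meyn-co-comp}), $\chi(G_1) = \omega(G_1)$.

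There is no real obstacle here: the proof is a two-line chain of inequalities sandwiching $\thin(G_1 \bullet G_2)$ between $\omega(G_1)\thin(G_2)$ and $\indthin(G_1)\thin(G_2)$, and then invoking the co-comparability/perfectness identity to collapse the sandwich. The only thing to be careful about is to cite the correct fragment of Theorem~\ref{thm:lex}, namely the pair $(f,f') = (\thin, \indthin)$ for the upper bound and $f = \thin$ for the lower bound, and to recall explicitly why $\indthin(G_1) = \omega(G_1)$ in the co-comparability setting.
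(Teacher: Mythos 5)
Your proposal is correct and is precisely the argument the paper intends (the corollary is stated without a written proof, as an immediate consequence of Theorem~\ref{thm:lex}): the complete case is the first assertion of that theorem, and the non-complete case sandwiches $\thin(G_1 \bullet G_2)$ between $\omega(G_1)\thin(G_2)$ and $\indthin(G_1)\thin(G_2)$, which coincide since $\indthin(G_1)=\chi(G_1)=\omega(G_1)$ for co-comparability graphs by Remark~\ref{rem:co-comp-ind-thin} and perfectness. Nothing is missing.
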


\begin{theorem}\label{thm:pthin-indpthin}
Let $G$ be a graph and $t \geq 3$, $q \geq 1$. Then, $\pthin(G
\bullet tK_1) = \indpthin(G)$, and $\pthin((G \bullet tK_1) \vee
qK_1) = \pthin((G \bullet tK_1) \vee K_q) = \indpthin(G)+1$.
\end{theorem}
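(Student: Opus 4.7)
The plan is to prove matching upper and lower bounds, with the lower bounds being the substantive content. For the upper bounds, Theorem~\ref{thm:lex} applied with $(f,f')=(\pthin,\indpthin)$ and $\pthin(tK_1)=1$ yields $\pthin(G\bullet tK_1)\leq\indpthin(G)$; Theorem~\ref{thm:join} together with $\pthin(qK_1)=\pthin(K_q)=1$ then gives $\pthin((G\bullet tK_1)\vee X)\leq\indpthin(G)+1$ for $X\in\{qK_1,K_q\}$.

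For the lower bound of the first equality, I would start from a proper $k$-thin representation of $G\bullet tK_1$ and build from it an independent proper $k$-thin representation of $G$ via a ``middle-copy'' device: for each $v\in V(G)$, let $v^{(1)}<v^{(2)}<\dots<v^{(t)}$ be the copies of $v$ listed according to the given ordering; assign $v$ to the class of $v^{(2)}$ and order $V(G)$ by $v<'w$ iff $v^{(2)}<w^{(2)}$. Each class so obtained is independent in $G$: if $u\sim v$ in $G$ and $u^{(2)},v^{(2)}$ share a class with $u^{(2)}<v^{(2)}$, then consistency applied to the triple $u^{(2)}<v^{(2)}<v^{(3)}$ together with the edge $u^{(2)}v^{(3)}$ (all copies of $u$ are adjacent to all copies of $v$ in $G\bullet tK_1$) would force the impossible $v^{(2)}\sim v^{(3)}$. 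Strong consistency of the new representation is verified by analogous case analyses on the triples $u^{(2)}<v^{(2)}<w^{(3)}$ and $u^{(1)}<v^{(2)}<w^{(2)}$, each invoking one of the two halves of strong consistency of the original representation.

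For the lower bound of the second equality, let $H'=(G\bullet tK_1)\vee X$ and consider a proper $k$-thin representation of $H'$. Fix any $u\in X$ (which exists since $q\geq 1$). The decisive claim is that the class of $u$ contains no middle copy $v^{(2)}$: if $u<v^{(2)}$ were in the same class, consistency applied to $u<v^{(2)}<v^{(3)}$ with the edge $uv^{(3)}$ would force $v^{(2)}\sim v^{(3)}$; and if $v^{(2)}<u$ were in the same class, strong consistency applied to $v^{(1)}<v^{(2)}<u$ with the edge $uv^{(1)}$ would force $v^{(1)}\sim v^{(2)}$; both conclusions are impossible. Hence the middle copies occupy at most $k-1$ of the $k$ classes, and applying the middle-copy construction from the first part to the representation restricted to $G\bullet tK_1$ produces an independent proper thin representation of $G$ using at most $k-1$ classes. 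This yields $\indpthin(G)\leq k-1$, the desired lower bound.

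The main obstacle, and the reason the hypothesis $t\geq 3$ appears, is the choice of representative copy: taking $v^{(2)}$ guarantees the simultaneous existence of a strictly smaller copy $v^{(1)}$ and a strictly larger copy $v^{(3)}$, and different subcases of both the strong-consistency verification and the class-disjointness claim require invoking one or the other. For $t=2$ only one of the two directions would be available and the argument would break down.
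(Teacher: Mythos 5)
Your proof is correct and takes essentially the same approach as the paper: the paper also sends each $v$ to its middle copy $v^2$ and uses exactly your triples ($v^2<w^2<w^3$ for the independence/adjacency constraint, and $u<v^2<v^3$, $v^1<v^2<u$ for the claim about the join vertex), only phrasing the transfer through the incompatibility graphs $\tilde{G}_{<}$, $\tilde{G'}_{<}$, $\tilde{G''}_{<}$ and Remark~\ref{rem:co-comp-ind-thin} rather than by rebuilding the ordering and partition directly. The only cosmetic difference is that you handle general $t\geq 3$, $q\geq 1$ in one stroke, whereas the paper first reduces to $t=3$, $q=1$ via induced subgraphs.
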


\begin{proof}
The upper bounds are a consequence of Theorems~\ref{thm:lex}
and~\ref{thm:join}.

For the lower bounds, we will prove the statement for $t = 3$ and
$q = 1$, since $(G \bullet 3K_1)$ (resp. $(G \bullet 3K_1) \vee
K_1$) is an induced subgraph of $(G \bullet tK_1)$ (resp. $(G
\bullet tK_1) \vee qK_1$ and $(G \bullet tK_1) \vee K_q$). Let $G'
= (G \bullet 3K_1)$ and $G'' = G' \vee K_1$. Let $V(G') = \{v_i^1
< v_i^2 < v_i^3 : v_i \in V(G)\}$, and $V(G'') = V(G') \cup
\{u\}$. Consider an ordering of the vertices of $G''$, and let $<$
be the vertex order of $V(G)$ induced by the order of
$\{v_i^2\}_{v_i \in V(G)}$. We will show the following three
statements, that are enough to prove the theorem: if $vw \in
E(\tilde{G}_{<})$ then $v^2w^2 \in E(\tilde{G'}_{<})$; if $vw \in
E(G)$ then $v^2w^2 \in E(\tilde{G'}_{<})$; for any $v \in V(G)$,
$v^2u \in E(\tilde{G''}_{<})$.

First, let $v < w$ be adjacent in $\tilde{G}_{<}$. Then either
there is a vertex $z$ such that $v < w < z$, $vz \in E(G)$ and $wz
\not \in E(G)$, or there is a vertex $z$ such that $z < v < w$,
$zw \in E(G)$ and $zv \not \in E(G)$. In either case, the same
holds for $v^2, w^2, z^2$, so $v^2w^2 \in E(\tilde{G'}_{<})$.

Next, let $v < w$ be adjacent in $G$. Then $v^2 < w^2 < w^3$,
$v^2w^3 \in E(G')$ and $w^2w^3 \not \in E(G')$, so $v^2w^2 \in
E(\tilde{G'}_{<})$.

Finally, for $v \in V(G)$, if $u < v^2$, then $u < v^2 < v^3$,
$uv^3 \in E(G')$ and $v^2v^3 \not \in E(G')$, so $uv^2 \in
E(\tilde{G'}_{<})$. If $v^2 < u$, then $v^1 < v^2 < u$, $v^1u \in
E(G')$ and $v^1v^2 \not \in E(G')$, so $uv^2 \in
E(\tilde{G'}_{<})$.

By Remark~\ref{rem:co-comp-ind-thin}, it holds that $\pthin(G')
\geq \indpthin(G)$, and that $\pthin(G'') \geq \indpthin(G)+1$.
\end{proof}

Theorem~\ref{thm:pthin-indpthin} implies that if recognizing
proper $k$-independent-thin graphs is NP-complete, then
recognizing proper $k$-thin graphs is NP-complete (both with $k$
as a parameter and with constant $k$).

\subsection{Cartesian product}

The \emph{Cartesian product} $G_1 \ \square \ G_2$ is a graph
whose vertex set is the Cartesian product $V_1 \times V_2$, and
such that two vertices $(u_1,u_2)$ and $(v_1,v_2)$ are adjacent in
$G_1 \ \square \  G_2$ if and only if either $u_1 = v_1$ and $u_2$
is adjacent to $v_2$ in $G_2$, or $u_2 = v_2$ and $u_1$ is
adjacent to $v_1$ in $G_1$.

The following result was proved in~\cite{B-D-thinness}. We include
the proof in order to make some remarks about it. Most of the
proofs for other graph products are structurally similar to this
one.

\begin{theorem}\label{thm:cart}\cite{B-D-thinness}
 Let $G_1$ and $G_2$ be graphs. Then, for $f \in \{\thin,\pthin\}$, {$f(G_1 \ \square \  G_2) \leq f(G_1)|V(G_2)|$}.
\end{theorem}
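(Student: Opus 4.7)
The plan is to build an ordering and partition of $V_1 \times V_2$ out of the implicit (strongly) consistent ordering and partition of $V_1$ together with an arbitrary ordering of $V_2$, in such a way that consistency for $G_1 \ \square \ G_2$ reduces to consistency for $G_1$. Concretely, let $t_1 = f(G_1)$ with implicit ordering $v_1 < \dots < v_{n_1}$ and partition $(V_1^1,\dots,V_1^{t_1})$, and let $w_1,\dots,w_{n_2}$ be an arbitrary ordering of $V_2$. I order $V_1 \times V_2$ lexicographically with the $V_1$-coordinate as primary key and the $V_2$-coordinate as secondary key, and I partition $V_1 \times V_2$ into the $t_1 n_2$ classes $V^{i,j} = \{(v,w_j) : v \in V_1^i\}$, for $1 \leq i \leq t_1$ and $1 \leq j \leq n_2$.

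The key structural observation is that every edge of $G_1 \ \square \ G_2$ requires one coordinate to be equal on its endpoints, and that two vertices in the same class $V^{i,j}$ share their second coordinate. Thus, in checking consistency for a triple $(v_p,w_i) < (v_q,w_j) < (v_r,w_\ell)$ whose first two vertices lie in a common class $V^{k,i}$, I have $i=j$ and $v_p,v_q \in V_1^k$; by the lex ordering, this forces $p < q$ in the $V_1$-order. Among the two possible types of edge $(v_p,w_i)(v_r,w_\ell)$, the case $v_p = v_r$ contradicts the ordering (it would force $p = r$ and $q > r = p$, contradicting $(v_q,w_i) < (v_r,w_\ell)$), so the edge must come from $w_\ell = w_i$ and $v_p v_r \in E_1$. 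Then $p<q<r$ with $v_p,v_q$ in the same $V_1$-class and $v_pv_r \in E_1$, and consistency of the ordering/partition of $G_1$ yields $v_qv_r \in E_1$, hence $(v_q,w_i)(v_r,w_i)$ is an edge of $G_1 \ \square \ G_2$.

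For the proper case I need strong consistency, which requires a symmetric check: if $(v_p,w_i) < (v_q,w_j) < (v_r,w_\ell)$ with the last two in a common class and $(v_p,w_i)(v_r,w_\ell) \in E(G_1 \ \square \ G_2)$, then $(v_p,w_i)(v_q,w_j)$ must be an edge. The same type of case split applies: the ``vertical'' edge case $v_p = v_r$ contradicts the order, so $w_\ell = w_j = w_i$ and $v_pv_r \in E_1$; strong consistency of $G_1$ then gives $v_pv_q \in E_1$ and hence the required edge.

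I do not expect a real obstacle here; the whole argument is a careful bookkeeping of lex order and the structure of Cartesian edges. The only delicate point is noticing that, for the ``impossible'' subcase of the edge, the lex ordering genuinely rules out $v_p = v_r$, which is what prevents the bound from depending on $\thin(G_2)$: the only surviving edges between classes ``come from $G_1$'' and inherit the (strong) consistency of $G_1$ automatically.
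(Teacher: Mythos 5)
Your proposal is correct and essentially the same as the paper's proof: it uses the identical construction (lexicographic order on $V_1 \times V_2$ with the $G_1$-coordinate as primary key, and the classes $V^{i,j} = \{(v,w_j) : v \in V_1^i\}$) and the same verification that (strong) consistency reduces to that of $G_1$, since same-class pairs share the second coordinate and the relevant cross edges must be of the form $w_i = w_\ell$, $v_pv_r \in E_1$. You merely compress the paper's four-case analysis on $p,q,r$ into the observation that the ``vertical'' edge case is ruled out by the ordering, which is a harmless reorganization.
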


\begin{proof}
Let $G_1$ be a $k$-thin (resp. proper $k$-thin) graph. Consider
$V_1 \times V_2$ lexicographically ordered with respect to the
{defined} orderings of $V_1$ and $V_2$. Consider now the partition
$\{V^{i,j}\}_{1\leq i \leq k,\ 1 \leq j \leq n_2}$ such that
$V^{i,j} = \{(v,w_j) : v \in V_1^i\}$ for each $1\leq i \leq k$,
$1 \leq j \leq n_2$. We will show that this ordering and partition
of $V_1 \times V_2$ are consistent (resp. strongly consistent).
Let $(v_p,w_i), (v_q,w_j), (v_r,w_{\ell})$ be three vertices
appearing in that ordering in $V_1 \times V_2$.

\emph{Case 1: $p = q = r$.} In this case, the three vertices are
in different classes, so no restriction has to be satisfied.

\emph{Case 2: $p = q < r$.} In this case, $(v_p,w_i)$ and
$(v_q,w_j)$ are in different classes. So suppose $G_1$ is proper
$k$-thin and $(v_q,w_j), (v_r,w_{\ell})$ belong to the same class,
i.e., $j=\ell$. Vertices $(v_p,w_i)$ and $(v_r,w_{\ell})$ are
adjacent in $G_1 \ \square \ G_2$ if and only if $i = \ell$ and
$v_pv_r \in E_1$. But then $(v_p,w_i)=(v_q,w_j)$, a contradiction.

\emph{Case 3: $p < q = r$.} In this case, $(v_q,w_j)$ and
$(v_r,w_{\ell})$ are in different classes. So suppose $G_1$ is
$k$-thin (resp. proper $k$-thin) and $(v_p,w_i), (v_q,w_j)$ belong
to the same class, i.e., $i=j$. Vertices $(v_p,w_i)$ and
$(v_r,w_{\ell})$ are adjacent in $G_1 \ \square \ G_2$ if and only
if $i = \ell$ and $v_pv_r \in E_1$. But then
$(v_r,w_{\ell})=(v_q,w_j)$, a contradiction.

\emph{Case 4: $p < q < r$.} Suppose first $G_1$ is $k$-thin (resp.
proper $k$-thin) and $(v_p,w_i), (v_q,w_j)$ belong to the same
class, i.e., $i=j$ and $v_p$, $v_q$ belong to the same class in
$G_1$. Vertices $(v_p,w_i)$ and $(v_r,w_{\ell})$ are adjacent in
$G_1 \ \square \ G_2$ if and only if $i = \ell$ and $v_pv_r \in
E_1$. But then $j=\ell$ and since the ordering and the partition
are consistent (resp. strongly consistent) in $G_1$, $v_rv_q \in
E_1$ and so $(v_r,w_{\ell})$ and $(v_q,w_j)$ are adjacent in $G_1
\ \square \ G_2$. Now suppose that $G_1$ is proper $k$-thin and
$(v_q,w_j), (v_r,w_{\ell})$ belong to the same class, i.e.,
$j=\ell$. Vertices $(v_p,w_i)$ and $(v_r,w_{\ell})$ are adjacent
in $G_1 \ \square \ G_2$ if and only if $i = \ell$ and $v_pv_r \in
E_1$. But then $i=j$ and since the ordering and the partition are
strongly consistent in $G_1$, $v_pv_q \in E_1$ and so $(v_p,w_i)$
and $(v_q,w_j)$ are adjacent in $G_1 \ \square \ G_2$.
\end{proof}

\begin{remark}\label{rem:cart}
Notice that if the partition of $G_1$ consists of independent sets
(respectively, complete sets), the partition defined for $G_1 \
\square \ G_2$ consists also of independent sets (respectively,
complete sets). So, $f(G_1 \ \square \ G_2) \leq f(G_1)|V(G_2)|$,
for $f \in \{\thin, \pthin, \indthin, \compthin, \indpthin,
\comppthin\}$.
\end{remark}

These results can be strengthened by replacing $|V(G_2)|$ by the
size of the largest connected component of $G_2$, by
Theorem~\ref{thm:union} and since $G \ \square \ (H \cup H') = (G
\ \square \ H) \cup (G \ \square \ H')$.

On the negative side, since $P_r$ has independent proper
thinness~2 (with the order given by the definition of path), but
$P_r \ \square \ P_r = GR_r$, we have the following corollary of
Corollary~\ref{cor:grid}.

\begin{corollary}\label{cor:nbcart}
There is no function $f: \mathbb{R}^2 \to \mathbb{R}$ such that
$\thin(G_1 \ \square \ G_2) \leq f(\indpthin(G_1),\indpthin(G_2))$
for any pair of graphs $G_1$, $G_2$.
\end{corollary}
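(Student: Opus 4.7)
The plan is to exhibit a single family of pairs $(G_1,G_2)$ on which $\indpthin$ stays bounded while $\thin(G_1 \ \square \ G_2)$ grows without bound. The natural candidate, as suggested by the sentence preceding the statement, is $G_1 = G_2 = P_r$ for arbitrarily large $r$: if I can show $\indpthin(P_r) \leq 2$ for every $r$, then any hypothetical $f$ would force $\thin(P_r \ \square \ P_r) \leq f(2,2)$, and it remains to contradict this by appealing to the already-established lower bound on the thinness of grids.

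First I would verify $\indpthin(P_r) \leq 2$. Take the ordering $v_1 < v_2 < \dots < v_r$ along the path, and partition the vertices into two classes by the parity of their index. Both classes are independent, since in $P_r$ adjacency only holds between consecutive vertices. Strong consistency is vacuous for essentially the same reason: for any triple of indices $i<j<k$, the relation $v_iv_k\in E(P_r)$ would force $k=i+1$, leaving no room for $j$. Thus neither implication in the definition of (strong) consistency can have its premise satisfied, and the partition and ordering witness $\indpthin(P_r) \leq 2$.

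Next I would note the standard identification $P_r \ \square \ P_r = GR_r$, which is immediate from the definitions of Cartesian product and grid. Corollary~\ref{cor:grid} then gives
\[
\thin(GR_r) \;\geq\; r/4.
\]

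Finally, suppose for contradiction that a function $f:\mathbb{R}^2 \to \mathbb{R}$ as in the statement exists. Then for every $r \geq 2$,
\[
\tfrac{r}{4} \;\leq\; \thin(P_r \ \square \ P_r) \;\leq\; f(\indpthin(P_r),\indpthin(P_r)) \;\leq\; f(2,2),
\]
which is absurd for $r > 4f(2,2)$. The only step with any actual content is the (vacuous) strong consistency check for $P_r$; I do not expect any genuine obstacle.
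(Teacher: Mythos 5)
Your proposal is correct and is essentially the paper's own argument: the paper also uses $\indpthin(P_r)=2$ (with the path order, where consistency is vacuous), the identification $P_r \ \square \ P_r = GR_r$, and the lower bound $\thin(GR_r)\geq r/4$ from Corollary~\ref{cor:grid} to rule out any such function $f$. The vacuous strong-consistency check you spell out is fine and is exactly what the paper's parenthetical ``with the order given by the definition of path'' is alluding to.
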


\begin{lemma}\label{lem:Knsq}
For $n \geq 1$, $\thin(K_n \ \square \ K_n) = n$.
\end{lemma}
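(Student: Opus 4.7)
The upper bound $\thin(K_n \ \square \ K_n) \leq n$ is immediate from Theorem~\ref{thm:cart} applied with $G_1 = G_2 = K_n$, since $\thin(K_n) = 1$ and $|V(K_n)| = n$. So the real task is the matching lower bound $\thin(K_n \ \square \ K_n) \geq n$.

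My plan is to obtain the lower bound from Theorem~\ref{thm:degree} with $k = n-1$. This requires verifying that for every pair of distinct vertices $u,v$ of the rook graph $K_n \ \square \ K_n$, one has $|N(u) \setminus N[v]| \geq n-1$. I will split into two geometric cases for $u = (i,j)$ relative to $v$: (i) $v$ shares a row or a column with $u$, and (ii) $v$ lies in a different row and a different column from $u$. In case (i), say $v = (i,j')$, the $n-1$ column-neighbors $\{(\ell,j):\ell\neq i\}$ of $u$ avoid both the row and the column of $v$, hence lie entirely in $N(u)\setminus N[v]$, giving $|N(u)\setminus N[v]| = n-1$; this is the tight case. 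In case (ii), a quick count shows $N[v] \cap N(u) = \{(i,j'),(i',j)\}$, so $|N(u)\setminus N[v]| = 2n-4$, which is at least $n-1$ whenever $n \geq 3$. Therefore Theorem~\ref{thm:degree} yields $\thin(K_n \ \square \ K_n) \geq n$ for $n \geq 3$.

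The remaining small cases are straightforward: for $n=1$ the graph is $K_1$ with thinness $1$, and for $n=2$ the graph $K_2 \ \square \ K_2$ is isomorphic to $C_4 = \overline{2K_2}$, whose thinness is $2$ by Theorem~\ref{thm:tK2}.

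The only real obstacle is the case analysis computing $|N(u)\setminus N[v]|$; it is routine but must be carried out carefully, as the same-row/same-column case is the bottleneck that pins the constant in Theorem~\ref{thm:degree} to exactly $k = n-1$ (a weaker choice of pair invariant, e.g.\ the common-neighbor bound of Corollary~\ref{cor:k-reg}, would only give $\thin \geq n$ for $n \geq 4$, which is why I prefer Theorem~\ref{thm:degree} together with a separate check for $n=2$).
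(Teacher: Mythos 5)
Your proof is correct, and it rests on the same two pillars as the paper's: Theorem~\ref{thm:cart} for the upper bound and the degree-type lower bound machinery of Section~\ref{sec:families} for the matching bound $\thin(K_n \ \square \ K_n) \geq n$. The difference is in how that machinery is invoked. The paper applies Corollary~\ref{cor:k-reg}: the rook's graph is $(2n-2)$-regular and (it claims) any two vertices have at most $n-2$ common neighbors, giving $\thin \geq n$ in one line. You instead apply Theorem~\ref{thm:degree} directly, checking $|N(u)\setminus N[v]| \geq n-1$ by splitting into the same-row/column case (exactly $n-1$) and the ``diagonal'' case ($2n-4$), and you handle $n=2$ separately via $K_2 \ \square \ K_2 = C_4 = \overline{2K_2}$ and Theorem~\ref{thm:tK2}. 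Your route is slightly more careful: as you note, the common-neighbor bound $|N(u)\cap N(v)| \leq n-2$ used in the paper fails for diagonal pairs when $n \leq 3$ (they have $2$ common neighbors), so the paper's argument as written only covers $n \geq 4$ and silently leaves the small cases to inspection, whereas your case analysis plus the explicit $n=2$ check covers all $n$. The trade-off is that the paper's version is shorter and fits the template it reuses for Lemmas~\ref{lem:KnKnn} and~\ref{lem:Knsqbox}, while yours pins down the tight constant in Theorem~\ref{thm:degree} and makes the small cases airtight.
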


\begin{proof} For $n \geq 1$, $K_n \ \square \ K_n$ is $(2n-2)$-regular, and
for any pair of vertices $u$, $v$, $|N(u) \cap N(v)| \leq n-2$. By
Corollary~\ref{cor:k-reg}, $\thin(K_n \ \square \ K_n) \geq n$. By
Theorem~\ref{thm:cart}, {$\thin(K_n \ \square \ K_n) \leq n$}.
\end{proof}

\begin{corollary}\label{cor:nbcart2}
There is no function $f: \mathbb{R}^2 \to \mathbb{R}$ such that
$\thin(G_1 \ \square \ G_2) \leq
f(\comppthin(G_1),\comppthin(G_2))$ for any pair of graphs $G_1$,
$G_2$.
\end{corollary}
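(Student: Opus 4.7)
The plan is to exploit Lemma~\ref{lem:Knsq}, which already shows that $\thin(K_n \ \square \ K_n) = n$ is unbounded as $n$ grows. To conclude the non-existence of the claimed function $f$, I would pair this with the trivial observation that $\comppthin(K_n) = 1$ for every $n \geq 1$: the single part $V(K_n)$ is complete, and the (strong) consistency condition is vacuously satisfied on a complete graph under any vertex ordering, since every two vertices are adjacent. Hence $K_n$ is proper $1$-complete-thin.

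Having both pieces, I would set $G_1 = G_2 = K_n$ and argue by contradiction: if such a function $f$ existed, then for every $n \geq 1$ we would have
\[
n \;=\; \thin(K_n \ \square \ K_n) \;\leq\; f(\comppthin(K_n),\comppthin(K_n)) \;=\; f(1,1),
\]
contradicting the fact that $f(1,1)$ is a single real number. The main (only) obstacle is making sure that $\comppthin(K_n)=1$ is really immediate from the definition, which it is, so the proof reduces to invoking Lemma~\ref{lem:Knsq} and spelling out this one-line contradiction.
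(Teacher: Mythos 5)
Your argument is correct and is exactly the paper's intended one: the corollary is placed immediately after Lemma~\ref{lem:Knsq} precisely because $\comppthin(K_n)=1$ while $\thin(K_n \ \square \ K_n)=n$ is unbounded, so no function $f(\comppthin(G_1),\comppthin(G_2))$ can bound it. Nothing is missing; the observation that $\comppthin(K_n)=1$ is indeed immediate from the definition.
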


\begin{lemma}\label{lem:KnKnn}
For $n \geq 1$, $\thin(K_n \ \square \ K_{n,n}) \geq n-1$.
\end{lemma}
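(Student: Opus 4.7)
The plan is to apply Corollary~\ref{cor:k-reg} directly. Set $G = K_n \ \square \ K_{n,n}$ and denote the two parts of $K_{n,n}$ by $A=\{a_1,\dots,a_n\}$ and $B=\{b_1,\dots,b_n\}$, so that a typical vertex of $G$ has the form $(i,a_j)$ or $(i,b_j)$ with $1\leq i,j\leq n$.

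First I would observe that $G$ is $(2n-1)$-regular. Indeed, a vertex $(i,x)$ with $x\in V(K_{n,n})$ has exactly $n-1$ neighbors in its ``column'' (the vertices $(i',x)$ for $i'\neq i$, coming from the $K_n$-factor) and exactly $n$ neighbors in its ``row'' (the vertices $(i,y)$ for $y$ a neighbor of $x$ in $K_{n,n}$, and each of $a_j,b_j$ has degree $n$ in $K_{n,n}$). So $\delta(G)=2n-1$.

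Second, I would bound $|N(u)\cap N(v)|$ by case analysis on the relative position of $u$ and $v$. If $u$ and $v$ share both coordinates they are equal; otherwise the interesting subcases are: (i) same column and same side of $K_{n,n}$, say $u=(i,a_j)$, $v=(i,a_{j'})$ with $j\neq j'$ — then the row-neighbors are the common set $\{(i,b_{j''}):j''\in[n]\}$ of size $n$ and the column-neighbors are disjoint, giving $|N(u)\cap N(v)|=n$; (ii) same column but different sides, $u=(i,a_j)$, $v=(i,b_{j'})$ — a straightforward check shows the intersection is empty; (iii) same row (same second coordinate) but different columns — the common neighbors are the $n-2$ remaining column-vertices; (iv) different rows and different columns — the intersection is at most a constant number (at most $2$). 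In every case $|N(u)\cap N(v)|\leq n$.

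Finally, for $n\geq 2$ we have $n<2n-1$, so Corollary~\ref{cor:k-reg} applies with $d=2n-1$ and $c=n$ and yields $\thin(G)\geq d-c=n-1$. For $n=1$ the bound $n-1=0$ is trivial. The argument has no real obstacle; the only slightly tedious part is verifying that the maximum $|N(u)\cap N(v)|$ is exactly $n$, which is the value that makes the resulting lower bound $n-1$ rather than something smaller.
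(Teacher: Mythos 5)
Your proposal is correct and follows exactly the paper's argument: establish that $K_n \ \square \ K_{n,n}$ is $(2n-1)$-regular, bound common neighborhoods by $|N(u)\cap N(v)|\leq n$, and apply Corollary~\ref{cor:k-reg}. The paper simply states these two facts without the explicit case analysis, which you have verified correctly.
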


\begin{proof} For $n \geq 1$, $K_n \ \square \ K_{n,n}$ is $(2n-1)$-regular, and
for any pair of vertices $u$, $v$, $|N(u) \cap N(v)| \leq n$. By
Corollary~\ref{cor:k-reg}, $\thin(K_n \ \square \ K_{n,n}) \geq n-
1$.
\end{proof}

\begin{corollary}\label{cor:nbcart3}
There is no function $f: \mathbb{R}^2 \to \mathbb{R}$ such that
$\thin(G_1 \ \square \ G_2) \leq
f(\comppthin(G_1),\indpthin(G_2))$ for any pair of graphs $G_1$,
$G_2$.
\end{corollary}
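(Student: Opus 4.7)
The plan is to exhibit a family of pairs $(G_1,G_2)$ for which $\thin(G_1 \ \square \ G_2)$ grows without bound while both $\comppthin(G_1)$ and $\indpthin(G_2)$ stay constant. Lemma~\ref{lem:KnKnn} points directly at the candidate family $G_1=K_n$, $G_2=K_{n,n}$, since $\thin(K_n \ \square \ K_{n,n}) \geq n-1$.

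First I would observe that $\comppthin(K_n)=1$: taking the single-class partition $\{V(K_n)\}$ (which is a complete set) together with any vertex ordering, every required consistency implication is automatic because all edges of $K_n$ are present. This matches the pattern used in Corollaries~\ref{cor:nbcart} and~\ref{cor:nbcart2}.

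Next, I would show that $\indpthin(K_{n,n})\leq 2$ uniformly in $n$. Let $A$ and $B$ be the two sides of $K_{n,n}$; take the ordering that lists all of $A$ first (in any internal order) and then all of $B$, and the partition $\{A,B\}$ into independent sets. To verify strong consistency, consider a triple $x<y<z$ with $xz\in E$. Since $A$ and $B$ are independent, both $x,y$ in the same class or $y,z$ in the same class forces the third vertex to lie in the opposite part, and then the required edge exists because every $A$–$B$ pair is an edge of $K_{n,n}$; the remaining cases are vacuous. Hence $\indpthin(K_{n,n})=2$.

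The conclusion then follows immediately: if a function $f$ as in the statement existed, we would have $\thin(K_n \ \square \ K_{n,n}) \leq f(\comppthin(K_n),\indpthin(K_{n,n})) = f(1,2)$ for every $n$, contradicting $\thin(K_n \ \square \ K_{n,n}) \geq n-1$ from Lemma~\ref{lem:KnKnn}. There is no real obstacle here; the only point that needs care is the strong-consistency check for $K_{n,n}$, which is a short case analysis on which of $x,y,z$ lies in $A$ and which in $B$.
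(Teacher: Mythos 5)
Your proposal is correct and is exactly the paper's (implicit) argument: it derives the corollary from Lemma~\ref{lem:KnKnn} with the family $G_1=K_n$, $G_2=K_{n,n}$, noting $\comppthin(K_n)=1$ and $\indpthin(K_{n,n})=2$. Your explicit strong-consistency check for $K_{n,n}$ is a fine addition (alternatively, $K_{n,n}$ is a co-comparability graph, so $\indpthin(K_{n,n})=\chi(K_{n,n})=2$ by Remark~\ref{rem:co-comp-ind-thin}).
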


The non existence of bounds in terms of other parameters can be
deduced from diagram in Figure~\ref{fig:hasse}.

Further consequences of the examples above are the following.

\begin{corollary}\label{cor:lowercart}
Given two connected graphs $G_1$ and $G_2$, $(\min\{\diam(G_1),$
$\diam(G_2)\}+1)/4 \leq (\min\{\lip(G_1),\lip(G_2)\}+1)/4 \leq
\thin(G_1 \ \square \ G_2)$.
\end{corollary}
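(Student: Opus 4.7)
The plan is to reduce the lower bound to the grid case, which has already been established in Corollary~\ref{cor:grid}. The strategy has two parts, one easy and one just slightly more involved.

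For the first inequality, I would simply invoke the fact that for any connected graph $G$, $\diam(G)\leq \lip(G)$. This follows because between two vertices at distance $\diam(G)$ any shortest path has $\diam(G)+1$ vertices and must be induced (a chord would contradict shortestness), giving an induced path of length $\diam(G)$. Hence dividing by $4$ preserves the inequality, yielding the first bound.

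For the second inequality, set $r=\min\{\lip(G_1),\lip(G_2)\}+1$. By definition of $\lip$, each $G_i$ contains an induced copy of $P_r$. The key observation is that the Cartesian product respects induced subgraphs: if $H_i$ is induced in $G_i$, then $H_1\,\square\,H_2$ is induced in $G_1\,\square\,G_2$, since the adjacency rule of $\square$ depends only on adjacencies and equalities within the respective factors, and these are preserved when restricting to $V(H_1)\times V(H_2)$. Applying this with $H_1=H_2=P_r$, and using the identity $P_r\,\square\,P_r = GR_r$ already noted in the paper, we get that $GR_r$ is an induced subgraph of $G_1\,\square\,G_2$.

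To conclude, I would appeal to the monotonicity of thinness under induced subgraphs: any ordering and consistent partition of a host graph restrict to an ordering and consistent partition of any induced subgraph, so $\thin(G_1\,\square\,G_2)\geq \thin(GR_r)$. By Corollary~\ref{cor:grid}, $\thin(GR_r)\geq r/4 = (\min\{\lip(G_1),\lip(G_2)\}+1)/4$, which yields the desired inequality. I do not anticipate any real obstacle here; the only step that deserves a sentence of justification is that Cartesian products commute with taking induced subgraphs, and this is an immediate consequence of the definition of $\square$.
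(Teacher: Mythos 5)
Your proposal is correct and follows essentially the same route the paper intends: deduce $\diam\leq\lip$ from the fact that shortest paths are induced, embed $P_r\,\square\,P_r=GR_r$ (with $r=\min\{\lip(G_1),\lip(G_2)\}+1$) as an induced subgraph of $G_1\,\square\,G_2$, and apply Corollary~\ref{cor:grid} together with the monotonicity of thinness under induced subgraphs. No gaps worth noting.
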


\begin{corollary}\label{cor:lowercart2}
Given two graphs $G_1$ and $G_2$, $\min\{\omega(G_1),\omega(G_2)\}
\leq $ $\thin(G_1 \ \square \ G_2)$.
\end{corollary}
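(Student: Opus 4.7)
The plan is to reduce the general statement to the special case already handled by Lemma~\ref{lem:Knsq}. Set $k = \min\{\omega(G_1), \omega(G_2)\}$. By definition, each $G_i$ contains a clique $A_i \subseteq V(G_i)$ of size $k$. I would consider the set $A_1 \times A_2 \subseteq V(G_1 \ \square \ G_2)$ and verify that it induces a copy of $K_k \ \square \ K_k$. This follows directly from the edge rule of the Cartesian product: for two distinct vertices $(u_1, u_2), (v_1, v_2) \in A_1 \times A_2$, they are adjacent in $G_1 \ \square \ G_2$ iff exactly one coordinate agrees and the other pair is adjacent, and since $A_1, A_2$ are cliques, this happens exactly when exactly one coordinate agrees, matching the adjacency of $K_k \ \square \ K_k$.

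The second ingredient is the monotonicity of thinness under induced subgraphs. This is immediate from the definition: given any ordering and partition of $V(G)$ witnessing $\thin(G)=t$, restricting them to any $S \subseteq V(G)$ yields an ordering and partition of $S$ into at most $t$ classes that remains consistent, since the consistency condition only involves triples $(r,s,t)$ with $r<s<t$ inside $S$ and edges of $G[S]$. Hence $\thin(G[S]) \leq \thin(G)$ for every induced subgraph $G[S]$.

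Combining the two observations with Lemma~\ref{lem:Knsq} gives
\[
\thin(G_1 \ \square \ G_2) \;\geq\; \thin\bigl((G_1 \ \square \ G_2)[A_1 \times A_2]\bigr) \;=\; \thin(K_k \ \square \ K_k) \;=\; k,
\]
which is the desired inequality. I do not anticipate any real obstacle here: once the induced-subgraph identification is made explicit, the corollary is essentially a one-line consequence of Lemma~\ref{lem:Knsq}, and the only thing to double-check is that the Cartesian product of two cliques appears cleanly as an induced subgraph (which it does because no ``diagonal'' edges are introduced by the product operation).
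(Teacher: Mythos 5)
Your proof is correct and matches the paper's intended argument: the paper states this corollary as an immediate consequence of Lemma~\ref{lem:Knsq}, implicitly using exactly your observations that $K_k \ \square \ K_k$ (with $k=\min\{\omega(G_1),\omega(G_2)\}$) appears as the subgraph induced by the product of two $k$-cliques and that thinness is monotone under induced subgraphs. Nothing is missing; your explicit verification that no diagonal edges arise and that consistency restricts to induced subgraphs is just the detail the paper leaves unstated.
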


\subsection{Tensor or direct or categorical product}

The \emph{tensor product} or \emph{direct product} or
\emph{categorical product} or \emph{Kronecker product} $G_1 \times
G_2$ is a graph whose vertex set is the Cartesian product $V_1
\times V_2$, and such that two vertices $(u_1,u_2)$ and
$(v_1,v_2)$ are adjacent in $G_1 \times G_2$ if and only if
 $u_1$ is adjacent to $v_1$ in $G_1$ and $u_2$ is adjacent to
 $v_2$ in $G_2$.

\begin{theorem}\label{thm:direct}
Let $G_1$ and $G_2$ be graphs. Then {$f(G_1 \times G_2) \leq
\indf(G_1 \times G_2) \leq \indf(G_1)|V(G_2)| \leq
f(G_1)\chi(G_1)|V(G_2)|$}, for $f \in \{\thin, \pthin\}$.
\end{theorem}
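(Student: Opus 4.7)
The plan is to prove the three inequalities separately, with the middle one being the main content. The first inequality $f(G_1 \times G_2) \leq \indf(G_1 \times G_2)$ is immediate because any $k$-independent-thin (resp. proper) representation is in particular a $k$-thin (resp. proper) representation, so the minimum over the latter class can only be smaller. The last inequality $\indf(G_1)|V(G_2)| \leq f(G_1)\chi(G_1)|V(G_2)|$ is immediate from Remark~\ref{rem:co-comp-ind-thin}, which gives $\opindthin(G) \leq \chi(G)\opthin(G)$.

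For the main inequality $\indf(G_1 \times G_2) \leq \indf(G_1)|V(G_2)|$ the approach mirrors the proof of Theorem~\ref{thm:cart}. Let $G_1$ be $k$-independent-thin (resp.\ proper), with implicit ordering $v_1,\dots,v_{n_1}$ and independent partition $(V_1^1,\dots,V_1^k)$ consistent (resp.\ strongly consistent) with the ordering, and let $w_1,\dots,w_{n_2}$ be any ordering of $V_2$. Order $V_1\times V_2$ lexicographically, and define the partition $\{V^{i,j}\}_{1\le i\le k,\, 1\le j\le n_2}$ by $V^{i,j}=\{(v,w_j):v\in V_1^i\}$. Observe first that each $V^{i,j}$ is independent in $G_1\times G_2$: any two of its vertices share the second coordinate $w_j$, and adjacency in the tensor product requires both coordinates to be adjacent in their respective factors, which fails since $w_jw_j\notin E_2$.

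It then remains to verify that this ordering and partition are consistent (resp.\ strongly consistent), which I would do by a four-case analysis on a triple $(v_p,w_i)<(v_q,w_j)<(v_r,w_\ell)$ exactly as in Theorem~\ref{thm:cart}. In Case~1 ($p=q=r$) the three vertices sit in pairwise different classes, so nothing is to be checked; in Cases~2 ($p=q<r$) and~3 ($p<q=r$) at least one of the two pairs of interest lies in distinct classes, and in the remaining proper subcase one uses that $v_p=v_q$ or $v_q=v_r$ together with the fact that $V_1^i$ is independent to rule out the forbidden adjacency. The meaningful case is Case~4 ($p<q<r$): if $(v_p,w_i)$ and $(v_q,w_j)$ share a class then $i=j$ and $v_p,v_q$ lie in the same $V_1$-class, and the adjacency $(v_p,w_i)(v_r,w_\ell)\in E(G_1\times G_2)$ gives $v_pv_r\in E_1$ together with $w_iw_\ell\in E_2$; consistency of the $G_1$-representation yields $v_qv_r\in E_1$, and since $j=i$ we conclude $(v_q,w_j)(v_r,w_\ell)\in E(G_1\times G_2)$. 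The symmetric argument using strong consistency in $G_1$ handles the proper case when $(v_q,w_j)$ and $(v_r,w_\ell)$ share a class.

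The only mild subtlety I anticipate is bookkeeping the independence of the classes alongside consistency; using the tensor-product definition together with the absence of loops handles this cleanly, and the case analysis is routine once the partition is set up as above. The number of classes is $k\cdot n_2=\indf(G_1)|V(G_2)|$, which gives the desired bound.
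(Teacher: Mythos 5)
Your proposal is correct and follows essentially the same route as the paper's proof: the same lexicographic ordering and partition $V^{i,j}=\{(v,w_j):v\in V_1^i\}$, the same observation that each class is independent because the shared second coordinate cannot be self-adjacent, the same four-case consistency check (with Cases 2--3 resolved by ruling out the adjacency via independence of the $V_1$-classes), and the same appeal to Remark~\ref{rem:co-comp-ind-thin} for the final inequality. The only nitpick is that in Case 3 the subcase you dispatch this way is the ordinary consistency check for $(v_p,w_i),(v_q,w_j)$, not a ``proper'' subcase, but your argument covers it correctly.
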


This result can be strengthened by replacing $|V(G_2)|$ by the
size of the largest connected component of $G_2$, by
Theorem~\ref{thm:union-ind} and since $G \times (H \cup H') = (G
\times H) \cup (G \times H')$.

Since $\comppthin(K_n) = 1$, $|K_2|=2$, and $K_n \times K_2 =
CR_n$, we have the following consequence of
Corollary~\ref{cor:crown}.

\begin{corollary}\label{cor:nbdirect}
There is no function $f: \mathbb{R}^2 \to \mathbb{R}$ such that
$\thin(G_1 \times G_2) \leq f(\comppthin(G_1),|V(G_2)|)$ for any
pair of graphs $G_1$, $G_2$.
\end{corollary}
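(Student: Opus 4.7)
The plan is to exhibit a family of pairs $(G_1,G_2)$ on which $\comppthin(G_1)$ and $|V(G_2)|$ stay bounded (in fact constant) while $\thin(G_1\times G_2)$ grows without bound, which directly refutes the existence of any such bounding function $f$. The sentence immediately preceding the statement already isolates the right family: take $G_1 = K_n$ and $G_2 = K_2$, so that the two parameters on the right-hand side are fixed at $\comppthin(K_n)=1$ and $|V(K_2)|=2$, while $n$ is free to grow.

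I would first dispatch the easy bookkeeping. The value $\comppthin(K_n)=1$ is immediate: putting all of $V(K_n)$ into a single complete class together with any vertex ordering is vacuously (strongly) consistent, since the implication in the definition of (proper) thinness places no real constraint inside a clique. Next I would verify the identification $K_n\times K_2 \cong CR_n$: writing $V(K_n\times K_2)=\{(i,a):1\le i\le n,\ a\in\{0,1\}\}$, the tensor-product rule makes $(i,a)$ adjacent to $(j,b)$ exactly when $i\ne j$ and $a\ne b$, which is precisely the complement of a perfect matching inside $K_{n,n}$, i.e.\ the crown $CR_n$.

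With these two facts in hand, Corollary~\ref{cor:crown} yields $\thin(K_n\times K_2)=\thin(CR_n)\ge n/2$, so $\thin(K_n\times K_2)$ is unbounded as $n\to\infty$. If there existed a function $f$ with $\thin(G_1\times G_2)\le f(\comppthin(G_1),|V(G_2)|)$ for every pair $(G_1,G_2)$, then in particular $\thin(K_n\times K_2)\le f(1,2)$ for all $n$, a fixed constant, contradicting the lower bound. This is exactly the kind of negative result already obtained for other products in the paper (cf.\ Corollaries~\ref{cor:nbcart}--\ref{cor:nbcart3}), and no step is really an obstacle here: the only mild subtlety is the combinatorial identification $K_n\times K_2 \cong CR_n$, and once that is in place the conclusion follows mechanically from Corollary~\ref{cor:crown}.
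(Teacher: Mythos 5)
Your proposal is correct and follows exactly the paper's argument: the paper also takes $G_1=K_n$, $G_2=K_2$, uses $\comppthin(K_n)=1$, $|V(K_2)|=2$, the identification $K_n\times K_2=CR_n$, and the lower bound $\thin(CR_n)\geq n/2$ from Corollary~\ref{cor:crown}. Your verification of the crown-graph identification and of $\comppthin(K_n)=1$ just makes explicit what the paper leaves implicit.
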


The graph $P_{2r-1} \times P_{2r-1}$ contains $GR_r$ as an induced
subgraph. Since $\indpthin(P_{2r-1}) = 2$, we have also the
following.

\begin{corollary}\label{cor:nbdirect2}
There is no function $f: \mathbb{R}^2 \to \mathbb{R}$ such that
$\thin(G_1 \times G_2) \leq f(\indpthin(G_1),\indpthin(G_2))$ for
any pair of graphs $G_1$, $G_2$.
\end{corollary}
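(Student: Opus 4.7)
The plan is to exhibit an explicit family of graph pairs on which the thinness of the direct product grows without bound while the independent proper thinness of each factor stays constant, thereby ruling out any such function $f$. The natural candidates are furnished by the sentence immediately preceding the corollary: take $G_1 = G_2 = P_{2r-1}$ for each integer $r \geq 2$, and argue that the induced grid inside their direct product forces $\thin(P_{2r-1} \times P_{2r-1})$ to be unbounded in $r$.

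First I would pin down the two ingredients. On the factor side, $P_{2r-1}$ is a co-comparability graph with $\chi(P_{2r-1}) = 2$, so by Remark~\ref{rem:co-comp-ind-thin} we have $\indpthin(P_{2r-1}) = 2$ for every $r \geq 2$. On the product side, the paper has already asserted that $P_{2r-1} \times P_{2r-1}$ contains $GR_r$ as an induced subgraph; to double-check, the map $\phi(k,\ell) = (k+\ell-1,\,k-\ell+r)$ for $1 \leq k, \ell \leq r$ sends a grid step $(\pm 1, 0)$ or $(0, \pm 1)$ to a difference of the form $(\pm 1, \pm 1)$ in $P_{2r-1}\times P_{2r-1}$, matching exactly the adjacencies of the direct product, and conversely any direct-product adjacency between images forces a grid-adjacency between preimages.

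Next I would combine this with monotonicity of thinness under induced subgraphs, which is immediate from the definition (restricting a consistent ordering and partition of a graph $G$ to any induced subgraph yields a consistent ordering and partition of the subgraph, using at most the same number of classes). Together with Corollary~\ref{cor:grid}, this gives
\[
\thin(P_{2r-1} \times P_{2r-1}) \;\geq\; \thin(GR_r) \;\geq\; r/4.
\]
Now assume for contradiction that a function $f:\mathbb{R}^2 \to \mathbb{R}$ as in the statement existed. Applying it to $G_1 = G_2 = P_{2r-1}$ would yield $\thin(P_{2r-1}\times P_{2r-1}) \leq f(2,2)$ for every $r \geq 2$, contradicting the lower bound $r/4$ as soon as $r > 4f(2,2)$.

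There is no real obstacle in this argument: the nontrivial content (independent proper thinness of a path, existence of the induced grid, and the grid lower bound) is all either a one-line consequence of earlier results or was asserted by the author in the sentence preceding the corollary. The only point worth writing out carefully is the induced-subgraph embedding $\phi$, since it underlies the $r/4$ lower bound on the product; everything else is straightforward bookkeeping of the inequality chain.
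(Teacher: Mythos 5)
Your proposal is correct and follows essentially the same route as the paper, which derives the corollary from the induced copy of $GR_r$ inside $P_{2r-1}\times P_{2r-1}$, the bound $\thin(GR_r)\geq r/4$ of Corollary~\ref{cor:grid}, and $\indpthin(P_{2r-1})=2$. Your explicit embedding $\phi(k,\ell)=(k+\ell-1,\,k-\ell+r)$ checks out (both forward and backward directions of the adjacency correspondence), and it merely makes precise what the paper asserts without proof in the sentence preceding the corollary.
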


The non existence of bounds in terms of other parameters can be
deduced from diagram in Figure~\ref{fig:hasse}.

Further consequences of the examples above are the following.

\begin{corollary}\label{cor:lowerdirect}
Given two graphs $G_1$ and $G_2$, if $G_2$ has at least one edge,
then $\omega(G_1)/2 \leq \thin(G_1 \times G_2)$.
\end{corollary}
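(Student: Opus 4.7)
The plan is to reduce the statement to the crown graph lower bound already established in Corollary~\ref{cor:crown}, mirroring the construction used just above in Corollary~\ref{cor:nbdirect}. Specifically, I will exhibit an induced copy of a crown graph $CR_{\omega(G_1)}$ inside $G_1 \times G_2$ and then appeal to the fact that the thinness parameter is monotone under induced subgraphs.

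Concretely, let $K \subseteq V_1$ be a maximum clique in $G_1$, so $|K| = \omega(G_1)$, and let $uv$ be any edge of $G_2$ (which exists by hypothesis). First I would consider the vertex subset $S = K \times \{u,v\} \subseteq V_1 \times V_2$ and compute the subgraph of $G_1 \times G_2$ it induces: by the definition of the tensor product, two vertices $(a,x),(b,y) \in S$ are adjacent if and only if $ab \in E_1$ and $xy \in E_2$. Since any two distinct vertices of $K$ are adjacent in $G_1$ and since $xy \in E_2$ inside $\{u,v\}$ happens precisely when $x \neq y$, this induced subgraph is exactly $K_{\omega(G_1)} \times K_2$, which is $CR_{\omega(G_1)}$ by the identity noted before Corollary~\ref{cor:nbdirect}.

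Next, I would invoke induced-subgraph monotonicity of thinness: given any consistent ordering and $k$-partition of $G_1 \times G_2$, restricting both to $S$ yields a consistent ordering and (at most) $k$-partition of $(G_1 \times G_2)[S]$, so $\thin((G_1 \times G_2)[S]) \leq \thin(G_1 \times G_2)$. Combining with Corollary~\ref{cor:crown}, which gives $\thin(CR_{\omega(G_1)}) \geq \omega(G_1)/2$, the desired inequality follows.

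There is no real obstacle here; the only point that deserves a brief explicit line is the verification that the induced subgraph on $K \times \{u,v\}$ is genuinely $CR_{\omega(G_1)}$ (equivalently, that non-edges in the crown correspond to the two ``diagonal'' non-edges of each $K_2$-fiber under the tensor rule), and a one-sentence reminder that thinness does not increase when passing to an induced subgraph.
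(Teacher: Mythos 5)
Your proof is correct and is exactly the argument the paper intends: the corollary is stated there as a consequence of the identity $K_n \times K_2 = CR_n$ (used for Corollary~\ref{cor:nbdirect}) together with Corollary~\ref{cor:crown}, via an induced copy of $CR_{\omega(G_1)}$ on a maximum clique of $G_1$ times an edge of $G_2$ and the induced-subgraph monotonicity of thinness. Nothing essential differs from the paper's (implicit) reasoning.
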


\begin{corollary}\label{cor:lowerdirect2}
Given two connected graphs $G_1$ and $G_2$, $(\min\{\diam(G_1),$
$\diam(G_2)\}+1)/8 \leq (\min\{\lip(G_1),\lip(G_2)\}+1)/8 \leq
\thin(G_1 \times G_2)$.
\end{corollary}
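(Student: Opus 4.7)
The plan is to mimic the strategy used in Corollary~\ref{cor:lowerdirect} (building an induced substructure with large thinness inside $G_1 \times G_2$), but this time using induced grids rather than induced crowns. First I would dispose of the easy left-hand inequality: in any connected graph, a shortest path realizing the diameter is necessarily an induced path of the same length, so $\diam(G) \leq \lip(G)$ for $G$ connected, and the left inequality follows immediately.

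For the main inequality, let $\ell = \min\{\lip(G_1), \lip(G_2)\}$, and pick induced paths $P_1 \subseteq G_1$ and $P_2 \subseteq G_2$, each with $\ell+1$ vertices. Since both are induced, the direct product $P_1 \times P_2$ appears as an induced subgraph of $G_1 \times G_2$, and thinness is monotone under induced subgraphs (an ordering and partition witnessing $k$-thinness of $G_1 \times G_2$ restricted to any induced subgraph is still consistent). So it suffices to lower-bound the thinness of $P_{\ell+1} \times P_{\ell+1}$.

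Now I would invoke the observation already used in the paper just before Corollary~\ref{cor:nbdirect2}: $P_{2r-1} \times P_{2r-1}$ contains the grid $GR_r$ as an induced subgraph. Choosing the largest $r$ with $2r-1 \leq \ell+1$, namely $r = \lfloor \ell/2\rfloor + 1$, gives an induced $GR_r$ inside $P_{\ell+1} \times P_{\ell+1}$, hence inside $G_1 \times G_2$. Applying Corollary~\ref{cor:grid}, $\thin(G_1 \times G_2) \geq \thin(GR_r) \geq r/4$.

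The only remaining step is a small arithmetic check: for $\ell$ even, $r = \ell/2+1$, so $r/4 = (\ell+2)/8 \geq (\ell+1)/8$; for $\ell$ odd, $r = (\ell+1)/2$, so $r/4 = (\ell+1)/8$. In both cases $\thin(G_1 \times G_2) \geq (\ell+1)/8$, as desired. The main (very mild) obstacle is just making sure this floor/ceiling arithmetic is tight enough to recover exactly the stated constant $1/8$; conceptually the proof is a one-line reduction to the already-established Corollary~\ref{cor:grid} through the induced-subgraph embedding $GR_r \hookrightarrow P_{2r-1} \times P_{2r-1} \hookrightarrow G_1 \times G_2$.
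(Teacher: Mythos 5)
Your proof is correct and is exactly the argument the paper intends: note $\diam\leq\lip$ for connected graphs via a shortest (hence induced) diametral path, embed induced paths so that $GR_r$ appears as an induced subgraph of $P_{2r-1}\times P_{2r-1}\subseteq G_1\times G_2$, and apply Corollary~\ref{cor:grid} with the floor arithmetic you carried out (the only unstated triviality is that for $r=1$ the bound $\thin\geq r/4$ holds anyway since thinness is at least $1$). No gaps.
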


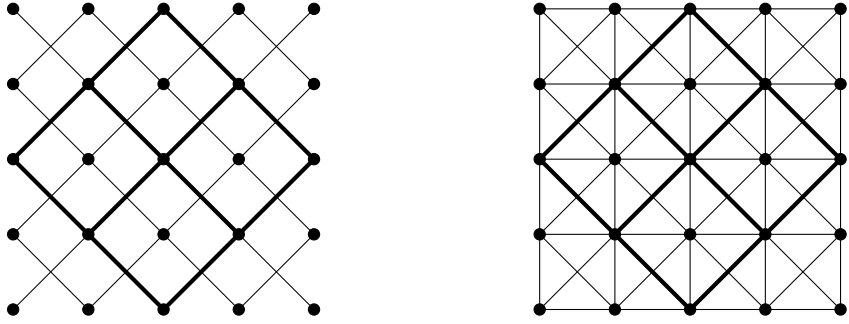
\begin{figure}
    \begin{center}
    \begin{tikzpicture}
    \vertex{0}{0}{v00};
    \vertex{0}{1}{v01};
    \vertex{0}{2}{v02};
    \vertex{0}{3}{v03};
    \vertex{0}{4}{v04};

    \vertex{1}{0}{v10};
    \vertex{1}{1}{v11};
    \vertex{1}{2}{v12};
    \vertex{1}{3}{v13};
    \vertex{1}{4}{v14};

    \vertex{2}{0}{v20};
    \vertex{2}{1}{v21};
    \vertex{2}{2}{v22};
    \vertex{2}{3}{v23};
    \vertex{2}{4}{v24};

    \vertex{3}{0}{v30};
    \vertex{3}{1}{v31};
    \vertex{3}{2}{v32};
    \vertex{3}{3}{v33};
    \vertex{3}{4}{v34};

    \vertex{4}{0}{v40};
    \vertex{4}{1}{v41};
    \vertex{4}{2}{v42};
    \vertex{4}{3}{v43};
    \vertex{4}{4}{v44};

    \edge{v00}{v11};
    \edge{v01}{v12};
    \edge[ultra thick]{v02}{v13};
    \edge{v03}{v14};
    \edge{v01}{v10};
    \edge[ultra thick]{v02}{v11};
    \edge{v03}{v12};
    \edge{v04}{v13};

    \edge{v10}{v21};
    \edge[ultra thick]{v11}{v22};
    \edge{v12}{v23};
    \edge[ultra thick]{v13}{v24};
    \edge[ultra thick]{v11}{v20};
    \edge{v12}{v21};
    \edge[ultra thick]{v13}{v22};
    \edge{v14}{v23};

    \edge[ultra thick]{v20}{v31};
    \edge{v21}{v32};
    \edge[ultra thick]{v22}{v33};
    \edge{v23}{v34};
    \edge{v21}{v30};
    \edge[ultra thick]{v22}{v31};
    \edge{v23}{v32};
    \edge[ultra thick]{v24}{v33};

    \edge{v30}{v41};
    \edge[ultra thick]{v31}{v42};
    \edge{v32}{v43};
    \edge{v33}{v44};
    \edge{v31}{v40};
    \edge{v32}{v41};
    \edge[ultra thick]{v33}{v42};
    \edge{v34}{v43};

    \vertex{7}{0}{w00};
    \vertex{7}{1}{w01};
    \vertex{7}{2}{w02};
    \vertex{7}{3}{w03};
    \vertex{7}{4}{w04};

    \vertex{8}{0}{w10};
    \vertex{8}{1}{w11};
    \vertex{8}{2}{w12};
    \vertex{8}{3}{w13};
    \vertex{8}{4}{w14};

    \vertex{9}{0}{w20};
    \vertex{9}{1}{w21};
    \vertex{9}{2}{w22};
    \vertex{9}{3}{w23};
    \vertex{9}{4}{w24};

    \vertex{10}{0}{w30};
    \vertex{10}{1}{w31};
    \vertex{10}{2}{w32};
    \vertex{10}{3}{w33};
    \vertex{10}{4}{w34};

    \vertex{11}{0}{w40};
    \vertex{11}{1}{w41};
    \vertex{11}{2}{w42};
    \vertex{11}{3}{w43};
    \vertex{11}{4}{w44};

    \edge{w00}{w11};
    \edge{w01}{w12};
    \edge[ultra thick]{w02}{w13};
    \edge{w03}{w14};
    \edge{w01}{w10};
    \edge[ultra thick]{w02}{w11};
    \edge{w03}{w12};
    \edge{w04}{w13};

    \edge{w10}{w21};
    \edge[ultra thick]{w11}{w22};
    \edge{w12}{w23};
    \edge[ultra thick]{w13}{w24};
    \edge[ultra thick]{w11}{w20};
    \edge{w12}{w21};
    \edge[ultra thick]{w13}{w22};
    \edge{w14}{w23};

    \edge[ultra thick]{w20}{w31};
    \edge{w21}{w32};
    \edge[ultra thick]{w22}{w33};
    \edge{w23}{w34};
    \edge{w21}{w30};
    \edge[ultra thick]{w22}{w31};
    \edge{w23}{w32};
    \edge[ultra thick]{w24}{w33};

    \edge{w30}{w41};
    \edge[ultra thick]{w31}{w42};
    \edge{w32}{w43};
    \edge{w33}{w44};
    \edge{w31}{w40};
    \edge{w32}{w41};
    \edge[ultra thick]{w33}{w42};
    \edge{w34}{w43};

    \edge{w00}{w01};
    \edge{w01}{w02};
    \edge{w02}{w03};
    \edge{w03}{w04};

    \edge{w10}{w11};
    \edge{w11}{w12};
    \edge{w12}{w13};
    \edge{w13}{w14};

    \edge{w20}{w21};
    \edge{w21}{w22};
    \edge{w22}{w23};
    \edge{w23}{w24};

    \edge{w30}{w31};
    \edge{w31}{w32};
    \edge{w32}{w33};
    \edge{w33}{w34};

    \edge{w40}{w41};
    \edge{w41}{w42};
    \edge{w42}{w43};
    \edge{w43}{w44};

    \edge{w00}{w10};
    \edge{w10}{w20};
    \edge{w20}{w30};
    \edge{w30}{w40};

    \edge{w01}{w11};
    \edge{w11}{w21};
    \edge{w21}{w31};
    \edge{w31}{w41};

    \edge{w02}{w12};
    \edge{w12}{w22};
    \edge{w22}{w32};
    \edge{w32}{w42};

    \edge{w03}{w13};
    \edge{w13}{w23};
    \edge{w23}{w33};
    \edge{w33}{w43};

    \edge{w04}{w14};
    \edge{w14}{w24};
    \edge{w24}{w34};
    \edge{w34}{w44};
    \end{tikzpicture}
    \end{center}
    \caption{The  $(r \times r)$-grid $GR_r$ as an induced subgraph of  $P_{2r-1} \times P_{2r-1}$ and of  $P_{2r-1} \boxtimes P_{2r-1}$.}
\end{figure}

\subsection{Strong or normal product}

The \emph{strong product} (also known as \emph{normal product})
$G_1 \boxtimes G_2$ is a graph whose vertex set is the Cartesian
product $V_1 \times V_2$, and such that two vertices $(u_1,u_2)$
and $(v_1,v_2)$ are adjacent in $G_1 \boxtimes G_2$ if and only if
they are adjacent either in $G_1 \ \square \  G_2$ or in $G_1
\times G_2$.

\begin{theorem}\label{thm:strong}
Let $G_1$ and $G_2$ be graphs. Then $f(G_1 \boxtimes G_2) \leq
f(G_1)|V(G_2)|$ for $f \in \{\thin, \pthin, \compthin, \comppthin,
\indthin, \indpthin\}$. Moreover, if $G_2$ is complete, then
$f(G_1 \boxtimes G_2) = f(G_1)$ for $f \in \{\thin, \pthin,
\compthin, \comppthin\}$.
\end{theorem}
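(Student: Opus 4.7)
The plan is to mirror the proof of Theorem~\ref{thm:cart}. Let $t_1 = f(G_1)$, fix an ordering $v_1, \ldots, v_{n_1}$ of $V_1$ and a partition $(V_1^1, \ldots, V_1^{t_1})$ witnessing $f(G_1) = t_1$ for the variant of thinness at hand, and fix an arbitrary ordering $w_1, \ldots, w_{n_2}$ of $V_2$. I order $V_1 \times V_2$ lexicographically and use the partition $V^{i,j} = \{(v, w_j) : v \in V_1^i\}$, for $1 \leq i \leq t_1$ and $1 \leq j \leq n_2$. Two vertices in the same $V^{i,j}$ share their second coordinate, so their adjacency in $G_1 \boxtimes G_2$ is governed solely by the Cartesian-type rule $v \sim v'$ in $G_1$ (the extra ``both coordinates adjacent'' rule would demand $w_j \sim w_j$, forbidden without loops). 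Hence $V^{i,j}$ is independent (respectively complete) in $G_1 \boxtimes G_2$ iff $V_1^i$ is in $G_1$, so the same construction simultaneously handles the thin, independent-thin, and complete-thin variants (and their proper counterparts).

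The bulk of the work is verifying (strong) consistency of this ordering and partition. For three vertices $(v_p, w_i) < (v_q, w_j) < (v_r, w_\ell)$ I split on the relations among $p, q, r$ into the same four cases as in Theorem~\ref{thm:cart}. The Cartesian-type adjacency is handled exactly as there; the only genuinely new element is the extra strong-product rule ``$v_p \sim v_r$ and $w_i \sim w_\ell$''. In the mixed cases $p = q < r$ or $p < q = r$, the relevant class-equality either contradicts the lexicographic order (so there is nothing to check) or forces the two relevant second-coordinate indices to coincide, in which case $w_i \sim w_\ell$ immediately yields the required Cartesian-type edge to the third vertex. In the case $p < q < r$, the (strong) consistency of the chosen partition of $G_1$ transports $v_p \sim v_r$ to $v_q \sim v_r$ (or, in the strong direction, to $v_p \sim v_q$), while the second-coordinate condition $w_i \sim w_\ell$ is preserved because class-equality equates the two second-coordinate indices. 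This establishes $f(G_1 \boxtimes G_2) \leq f(G_1)|V(G_2)|$ for all six parameters.

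For the ``moreover'' clause, the key observation is that $G_1 \boxtimes K_{n_2} = G_1 \bullet K_{n_2}$: when $G_2$ is complete, the strong-product rule ``$u_1 \sim v_1$ and $u_2 \sim v_2$'' becomes ``$u_1 \sim v_1$ and $u_2 \neq v_2$'', which is already subsumed by the condition $u_1 \sim v_1$ present in the lexicographic product. Theorem~\ref{thm:lex} then yields $f(G_1 \boxtimes K_{n_2}) = f(G_1 \bullet K_{n_2}) = f(G_1)$ for $f \in \{\thin, \pthin, \compthin, \comppthin\}$, so equality follows at once (a matching lower bound is automatic since $G_1$ is an induced subgraph of $G_1 \boxtimes G_2$). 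The main obstacle is the bookkeeping in the four-case consistency analysis, but each subcase follows the Cartesian template with at most one additional adjacency possibility to dispose of.
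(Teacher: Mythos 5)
Your proposal is correct and follows essentially the same route as the paper's proof: the same lexicographic ordering of $V_1 \times V_2$ with the partition $V^{i,j} = \{(v,w_j) : v \in V_1^i\}$, the same four-case consistency analysis extending the Cartesian-product argument to the extra ``both coordinates adjacent'' rule, and the same observation that $G_1 \boxtimes G_2 = G_1 \bullet G_2$ when $G_2$ is complete, invoking Theorem~\ref{thm:lex} for the equality.
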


This result can be strengthened for $f \in \{\thin, \pthin,
\indthin, \indpthin\}$, replacing $|V(G_2)|$ by the size of the
largest connected component of $G_2$, by Theorems~\ref{thm:union}
and~\ref{thm:union-ind}, and since $G \boxtimes (H \cup H') = (G
\boxtimes H) \cup (G \boxtimes H')$.

The graph $P_{2r-1} \boxtimes P_{2r-1}$ contains $GR_r$ as an
induced subgraph. Since $\indpthin(P_{2r-1}) = 2$, we have the
following corollary of Corollary~\ref{cor:grid}.

\begin{corollary}\label{cor:nbstrong}
There is no function $f: \mathbb{R}^2 \to \mathbb{R}$ such that
$\thin(G_1 \boxtimes G_2) \leq f(\indpthin(G_1),\indpthin(G_2))$
for any pair of graphs $G_1$, $G_2$.
\end{corollary}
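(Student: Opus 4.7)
The plan is to exhibit, along the same lines as Corollary~\ref{cor:nbdirect2}, a family of pairs of graphs whose independent proper thinness stays bounded but whose strong product has unbounded thinness. The natural candidates are $G_1 = G_2 = P_{2r-1}$, for $r \to \infty$.

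First I would confirm that $\indpthin(P_{2r-1}) = 2$ for every $r \geq 2$. Since a path is an interval graph, and thus a co-comparability graph, Remark~\ref{rem:co-comp-ind-thin} immediately gives $\indpthin(P_{2r-1}) = \chi(P_{2r-1}) = 2$.

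The main step is to verify that $GR_r$ embeds as an induced subgraph of $P_{2r-1} \boxtimes P_{2r-1}$, as suggested by the figure. I would use the ``$45^{\circ}$ rotation'' embedding $\varphi(i,j) = (i+j-2,\; i-j+r-1)$ for $(i,j) \in \{1,\dots,r\}^2$, which is clearly injective and lands in $\{0,\dots,2r-2\}^2$. Writing $a = i-i'$, $b = j-j'$ and using the integer identity $\max(|a+b|, |a-b|) = |a|+|b|$, one sees that the $\ell_\infty$-distance between $\varphi(i,j)$ and $\varphi(i',j')$ equals the $\ell_1$-distance between $(i,j)$ and $(i',j')$. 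Since edges in $P_{2r-1} \boxtimes P_{2r-1}$ are exactly the pairs at $\ell_\infty$-distance~$1$, and edges in $GR_r$ are exactly the pairs at $\ell_1$-distance~$1$, the induced subgraph on $\varphi(\{1,\dots,r\}^2)$ is precisely $GR_r$. This is the only step that requires an explicit (but short) verification, and is the most substantive part of the argument.

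Combining the two ingredients finishes the proof. Thinness is trivially monotone under taking induced subgraphs (restrict the ordering and partition), so $\thin(P_{2r-1} \boxtimes P_{2r-1}) \geq \thin(GR_r) \geq r/4$ by Corollary~\ref{cor:grid}. Hence $\thin(P_{2r-1} \boxtimes P_{2r-1}) \to \infty$ with $r$, while $\indpthin$ of each factor remains~$2$; therefore no function $f \colon \mathbb{R}^2 \to \mathbb{R}$ can satisfy $\thin(G_1 \boxtimes G_2) \leq f(\indpthin(G_1), \indpthin(G_2))$ for all pairs of graphs.
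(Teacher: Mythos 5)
Your proposal is correct and follows essentially the same route as the paper: the paper also takes $G_1=G_2=P_{2r-1}$, uses $\indpthin(P_{2r-1})=2$, observes (via the pictured diagonal embedding you make explicit) that $GR_r$ is an induced subgraph of $P_{2r-1}\boxtimes P_{2r-1}$, and concludes with the lower bound $\thin(GR_r)\geq r/4$ of Corollary~\ref{cor:grid}. Your explicit verification of the $45^{\circ}$ embedding and the use of Remark~\ref{rem:co-comp-ind-thin} for the factor bound are just fuller write-ups of steps the paper leaves to the reader.
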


\begin{lemma}\label{lem:Knsqbox}
For $n \geq 2$, $\thin((K_n \ \square \ K_2) \boxtimes (K_n \
\square \ K_2)) \geq n+2$.
\end{lemma}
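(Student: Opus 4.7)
The plan is to apply Corollary~\ref{cor:k-reg} to $G = (K_n \ \square \ K_2) \boxtimes (K_n \ \square \ K_2)$. Writing $H = K_n \ \square \ K_2$, I would first observe that $H$ is $n$-regular on $2n$ vertices, and then use the standard strong-product identity $N_{A \boxtimes B}[(x_1,x_2)] = N_A[x_1] \times N_B[x_2]$ to conclude that $\delta(G) = |N_H[x]|^2 - 1 = (n+1)^2 - 1 = n^2+2n$. Thus it suffices to establish that $|N_G(u) \cap N_G(v)| \leq n^2+n-2$ for every pair of distinct $u, v \in V(G)$, since then Corollary~\ref{cor:k-reg} will deliver $\thin(G) \geq (n^2+2n) - (n^2+n-2) = n+2$.

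To bound these intersections I would use the identity on closed neighborhoods in strong products,
\[
|N_G[u] \cap N_G[v]| \;=\; |N_H[u_1] \cap N_H[v_1]| \cdot |N_H[u_2] \cap N_H[v_2]|,
\]
reducing the task to understanding closed-neighborhood intersections in the small graph $H$. Classifying the edges of $H$ as ``clique edges'' (inside one of the two $K_n$-layers) and ``rung edges'' (joining corresponding vertices of the two layers), I would verify by direct inspection that for distinct $x, y \in V(H)$ we have $|N_H[x] \cap N_H[y]| = n$ whenever $xy$ is a clique edge, and $|N_H[x] \cap N_H[y]| = 2$ in every other case (rung edge or non-edge), while $|N_H[x]| = n+1$.

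A case analysis on the four possible types of each coordinate-pair $(u_i, v_i)$ (equal, clique edge, rung edge, non-edge), combined with the multiplicative formula above, then shows that $|N_G[u] \cap N_G[v]|$ is maximized over $u \neq v$ by the value $n(n+1)$, attained exactly when one coordinate-pair is equal and the other is a clique edge; such a pair is adjacent in $G$, so $|N_G(u) \cap N_G(v)| = n(n+1) - 2 = n^2+n-2$, and Corollary~\ref{cor:k-reg} concludes.

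The main obstacle is the bookkeeping that checks every combination of types and confirms the maximum: in particular one must verify that the competing ``diagonal'' non-adjacent configurations (one coordinate-pair equal and the other a non-edge, contributing $|N_G(u) \cap N_G(v)| = 2(n+1)$) do not exceed $n^2+n-2$ in the regime where the lemma is applied, together with the symmetric configurations where the roles of the two factors are swapped.
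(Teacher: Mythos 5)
Your approach is essentially the paper's: its proof consists of asserting that $(K_n \ \square \ K_2) \boxtimes (K_n \ \square \ K_2)$ is $(n^2+2n)$-regular and that every pair of vertices has at most $n^2+n-2$ common neighbours, and then invoking Corollary~\ref{cor:k-reg}. Your multiplicative identity for closed neighbourhoods in the strong product, together with the classification of pairs in $H=K_n \ \square \ K_2$ (equal vertices contribute $n+1$, clique edges $n$, rung edges and non-edges $2$), is precisely the verification the paper leaves implicit, and those individual computations are correct.

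The check you defer in your last paragraph, however, is not a formality: it fails at the boundary of the stated range. For a pair $u\neq v$ with equal first coordinates and second coordinates forming a non-edge of $H$, the vertices are non-adjacent in the strong product, so neither $u$ nor $v$ is removed from the closed intersection and they have exactly $2(n+1)=2n+2$ common neighbours. The inequality $2n+2\leq n^2+n-2$ is equivalent to $n^2-n-4\geq 0$, which holds only for $n\geq 3$; at $n=2$ one gets $6>4$, the asserted bound $|N(u)\cap N(v)|\leq n^2+n-2$ is false, and Corollary~\ref{cor:k-reg} then only yields a lower bound of $8-6=2$ rather than $n+2=4$. So your argument establishes the lemma only for $n\geq 3$. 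To be fair, the paper's own one-line proof has exactly the same defect, and the gap is harmless for the purpose the lemma serves (Corollary~\ref{cor:nbstrong2} only needs large $n$); but since the lemma is stated for all $n\geq 2$, the verification you postpone ``in the regime where the lemma is applied'' does break at $n=2$, so you should either restrict to $n\geq 3$ or handle $n=2$ by a separate argument.
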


\begin{proof} For $n \geq 2$, $(K_n \ \square \ K_2) \boxtimes (K_n \
\square \ K_2)$ is $(n^2+2n)$-regular, and for any pair of
vertices $u$, $v$, $|N(u) \cap N(v)| \leq n^2+n-2$. By
Corollary~\ref{cor:k-reg}, $\thin((K_n \ \square \ K_2) \boxtimes
(K_n \ \square \ K_2)) \geq n+2$.
\end{proof}

Remark~\ref{rem:cart} implies $\comppthin(K_n \ \square \ K_2) =
2$ for $n \geq 2$, since the graph contains an induced cycle of
length four, which is not an interval graph. So we have also the
following.

\begin{corollary}\label{cor:nbstrong2}
There is no function $f: \mathbb{R}^2 \to \mathbb{R}$ such that
$\thin(G_1 \boxtimes G_2) \leq f(\comppthin(G_1),\comppthin(G_2))$
for any pair of graphs $G_1$, $G_2$.
\end{corollary}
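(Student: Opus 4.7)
The plan is to derive a contradiction by exhibiting a family of graph pairs whose complete proper thinness stays uniformly bounded while the thinness of their strong product grows without bound. Concretely, I would take $G_1^{(n)} = G_2^{(n)} = K_n \ \square \ K_2$ for $n \geq 2$, a family already singled out in the two preceding results, and argue that it obstructs the existence of any such $f$.

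The first step is to invoke the statement recorded immediately before the corollary, namely that $\comppthin(K_n \ \square \ K_2) = 2$ for every $n \geq 2$. This bound is essentially free: the upper bound $\comppthin(K_n \ \square \ K_2) \leq \comppthin(K_n)\cdot|V(K_2)| = 2$ follows from Remark~\ref{rem:cart} together with $\comppthin(K_n) = 1$, while the matching lower bound holds because $K_n \ \square \ K_2$ contains an induced $C_4$, hence is not an interval graph, so its thinness (and therefore its complete proper thinness, via Figure~\ref{fig:hasse}) is at least $2$.

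The second step is to combine this with Lemma~\ref{lem:Knsqbox}, which gives $\thin((K_n \ \square \ K_2) \boxtimes (K_n \ \square \ K_2)) \geq n+2$. Assuming for contradiction that a function $f$ as in the statement existed, we would obtain $f(2,2) \geq n+2$ for every $n \geq 2$, which is impossible since $f(2,2)$ is a fixed real number. This contradiction establishes the corollary. There is no genuine obstacle here: both ingredients are already in place, and only an impossibility argument remains to be assembled; the only care needed is to cite the correct direction of each inequality so that the pair $(\comppthin(G_1),\comppthin(G_2)) = (2,2)$ remains constant while $\thin(G_1 \boxtimes G_2)$ is driven to infinity.
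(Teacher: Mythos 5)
Your proposal is correct and matches the paper's own argument: the paper derives this corollary from Lemma~\ref{lem:Knsqbox} together with the observation that Remark~\ref{rem:cart} gives $\comppthin(K_n \ \square \ K_2) = 2$ (the induced $C_4$ providing the lower bound), exactly as you do. The concluding impossibility argument, keeping $(\comppthin(G_1),\comppthin(G_2))=(2,2)$ fixed while $\thin(G_1 \boxtimes G_2)\geq n+2$ grows, is the intended one.
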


The non existence of bounds in terms of other parameters can be
deduced from diagram in Figure~\ref{fig:hasse}.

A further consequence of the example used for
Corollary~\ref{cor:nbstrong} is the following.

\begin{corollary}\label{cor:lowerstrong}
Given two connected graphs $G_1$ and $G_2$, $(\min\{\diam(G_1),$
$\diam(G_2)\}+1)/8 \leq (\min\{\lip(G_1),\lip(G_2)\}+1)/8 \leq
\thin(G_1 \boxtimes G_2)$.
\end{corollary}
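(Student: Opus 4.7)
The plan is to combine three ingredients already available in the paper: (i) monotonicity of thinness under induced subgraphs, (ii) the observation (depicted in the figure just before this subsection) that $GR_r$ is an induced subgraph of $P_{2r-1}\boxtimes P_{2r-1}$, and (iii) the grid lower bound $\thin(GR_r)\geq r/4$ from Corollary~\ref{cor:grid}.

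First I would dispatch the outer inequality $\min\{\diam(G_1),\diam(G_2)\}\leq \min\{\lip(G_1),\lip(G_2)\}$ by the standard remark that in a connected graph any shortest path between two vertices realizing the diameter is induced, so $\lip(G)\geq \diam(G)$ holds for every connected $G$.

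For the main inequality, set $\ell=\min\{\lip(G_1),\lip(G_2)\}$. Then each of $G_1$ and $G_2$ contains an induced copy of $P_{\ell+1}$, and since the strong product restricted to the vertex set of an induced subgraph is precisely the strong product of those subgraphs, $G_1\boxtimes G_2$ contains $P_{\ell+1}\boxtimes P_{\ell+1}$ as an induced subgraph. Now choose $r=\lfloor(\ell+2)/2\rfloor$, so that $2r-1\leq \ell+1$ and in either parity of $\ell$ one has $r\geq (\ell+1)/2$. Using the embedding illustrated in the figure, $GR_r$ sits as an induced subgraph of $P_{2r-1}\boxtimes P_{2r-1}$, hence also of $P_{\ell+1}\boxtimes P_{\ell+1}$, and therefore of $G_1\boxtimes G_2$.

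Finally, thinness is monotone under induced subgraphs (a consistent ordering and partition of $G$ restrict to a consistent ordering and partition of any induced subgraph), so applying Corollary~\ref{cor:grid} gives
\[
\thin(G_1\boxtimes G_2)\;\geq\;\thin(GR_r)\;\geq\;\tfrac{r}{4}\;\geq\;\tfrac{\ell+1}{8},
\]
which is the desired bound. There is essentially no obstacle here: the argument is a direct packaging of an induced-subgraph embedding into Corollary~\ref{cor:grid}, mirroring Corollaries~\ref{cor:lowercart} and~\ref{cor:lowerdirect2}; the only point requiring a moment's care is the choice of $r$ so that the $r/4$ grid bound translates cleanly into the $(\ell+1)/8$ bound across both parities of $\ell$.
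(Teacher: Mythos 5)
Your argument is correct and is exactly the route the paper intends: it invokes the example used for Corollary~\ref{cor:nbstrong} (the induced copy of $GR_r$ inside $P_{2r-1}\boxtimes P_{2r-1}$), monotonicity of thinness under induced subgraphs, Corollary~\ref{cor:grid}, and $\diam\leq\lip$ via shortest paths being induced. The only cosmetic point is the degenerate case $r=1$ (i.e.\ $\ell\leq 1$), where Corollary~\ref{cor:grid} is stated for $r\geq 2$, but there the bound $(\ell+1)/8<1\leq\thin(G_1\boxtimes G_2)$ is trivial.
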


\subsection{Co-normal or disjunctive product}

The \emph{co-normal product} or \emph{disjunctive product} $G_1
\ast G_2$ is a graph whose vertex set is the Cartesian product
$V_1 \times V_2$, and such that two vertices $(u_1,u_2)$ and
$(v_1,v_2)$ are adjacent in $G_1 \ast G_2$ if and only if either
$u_1$ is adjacent to $v_1$ in $G_1$ or $u_2$ is adjacent to $v_2$
in $G_2$.

Notice that $\overline{G \ast H} = \overline{G} \boxtimes
\overline{H}$.

\begin{theorem}\label{thm:conorm}
Let $G_1$ and $G_2$ be graphs. Then $f(G_1 \ast G_2) \leq
\indf(G_1 \ast G_2) \leq \indf(G_1)|V(G_2)| \leq
f(G_1)\chi(G_1)|V(G_2)|$, for $f \in \{\thin, \pthin\}$.
\end{theorem}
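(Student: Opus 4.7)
The chain of three inequalities splits naturally into one routine observation, one invocation of Remark~\ref{rem:co-comp-ind-thin}, and one explicit construction. The leftmost inequality $f(G_1 \ast G_2) \leq \indf(G_1 \ast G_2)$ is immediate from the Hasse diagram of Figure~\ref{fig:hasse}. The rightmost inequality $\indf(G_1)|V(G_2)| \leq f(G_1)\chi(G_1)|V(G_2)|$ follows after multiplying the bound $\indf(G_1) \leq f(G_1)\chi(G_1)$ noted in Remark~\ref{rem:co-comp-ind-thin} by $|V(G_2)|$. So the only content to be proved is the middle inequality $\indf(G_1 \ast G_2) \leq \indf(G_1)|V(G_2)|$.

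For the middle inequality my plan is to mimic the construction in the proof of Theorem~\ref{thm:cart}. I take an ordering $v_1, \ldots, v_{n_1}$ of $V_1$ together with an independent partition $(V_1^1, \ldots, V_1^{t_1})$ realizing $\indf(G_1) = t_1$ (strongly consistent if $f = \pthin$), and an arbitrary ordering $w_1, \ldots, w_{n_2}$ of $V_2$. I order $V_1 \times V_2$ lexicographically with the $V_1$-coordinate as primary key, and use the $t_1 n_2$ classes $V^{i,j} = \{(v,w_j) : v \in V_1^i\}$. Each class is independent in $G_1 \ast G_2$, because two distinct vertices $(v,w_j)$ and $(v',w_j)$ with $v,v' \in V_1^i$ would need $vv' \in E_1$ (ruled out by independence of $V_1^i$) or a loop on $w_j$ to be adjacent.

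What remains is verifying (strong) consistency, and I would do so by a case analysis on three vertices $(v_p,w_i) < (v_q,w_j) < (v_r,w_\ell)$ according to which of $p=q$, $q=r$, or $p<q<r$ holds. In the cases involving an equality of $V_1$-indices, one of the two same-class vertices shares its $V_1$-coordinate with a neighbor, so the adjacency relation in $G_1 \ast G_2$ collapses to a condition on the $V_2$-coordinate alone; here the independence of $V_1^a$ is crucial, as it eliminates the $v$-adjacency term that the co-normal product would otherwise allow. In the case $p<q<r$, adjacency of the outer pair coming from $V_2$ transfers trivially, while adjacency coming from $V_1$ transfers via the (strong) consistency of the $V_1$-ordering. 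The main obstacle is precisely this extra flexibility of the co-normal product compared to the Cartesian product: unlike in Theorem~\ref{thm:cart}, two vertices can be adjacent without sharing any coordinate, so I expect the care needed is to see that the independence of each $V_1^i$ is exactly the structural feature that rules out a $v_p v_q$ edge creating a bad configuration with a separate $w_i w_\ell$ edge.
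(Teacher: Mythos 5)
Your proposal is correct and follows essentially the same route as the paper's proof in~\ref{apdx:sec4:proofs}: the same lexicographic ordering of $V_1\times V_2$ with the $V_1$-coordinate as primary key, the same classes $V^{i,j}=\{(v,w_j): v\in V_1^i\}$, the same case analysis on $p=q$, $q=r$, $p<q<r$ (with independence of the $V_1^i$ killing the $E_1$-term exactly in the shared-coordinate cases), and the same appeal to Remark~\ref{rem:co-comp-ind-thin} for the final inequality. Carrying out the four cases explicitly would make it identical to the published argument.
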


Since $\comppthin(K_t) = 1$, $|2K_1|=2$, and $K_t \ast 2K_1 =
\overline{tK_2}$, we have the following corollary of
Theorem~\ref{thm:tK2}.

\begin{corollary}\label{cor:nbconorm}
There is no function $f: \mathbb{R}^2 \to \mathbb{R}$ such that
$\thin(G_1 \ast G_2) \leq f(\comppthin(G_1),|V(G_2)|)$ for any
pair of graphs $G_1$, $G_2$.
\end{corollary}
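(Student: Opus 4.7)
The plan is to exhibit an explicit family of pairs $(G_1, G_2)$ for which $\comppthin(G_1)$ and $|V(G_2)|$ remain bounded by constants while $\thin(G_1 \ast G_2)$ grows without bound. Concretely, for each $t \geq 1$ I would take $G_1 = K_t$ and $G_2 = 2K_1$, and follow the hint given immediately before the corollary statement.

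The first step is to observe that $\comppthin(K_t) = 1$: the entire vertex set of $K_t$ is itself a clique, so placing all vertices in a single class is trivially (strongly) consistent with any ordering. Also $|V(G_2)| = 2$, which is constant in $t$. Hence both arguments to any hypothetical $f$ stay fixed.

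The second step is to identify the product $K_t \ast 2K_1$. By the definition of the co-normal product, two vertices $(u_1,u_2)$ and $(v_1,v_2)$ are adjacent iff $u_1 v_1 \in E(K_t)$ or $u_2 v_2 \in E(2K_1)$. Since $2K_1$ has no edges, adjacency reduces to $u_1 \ne v_1$. Writing $V(2K_1) = \{a,b\}$, the only non-adjacencies are the $t$ pairs $(u,a)(u,b)$ for $u \in V(K_t)$; these form a perfect matching on the $2t$ vertices and all remaining pairs are adjacent. Thus $K_t \ast 2K_1 \cong \overline{tK_2}$.

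Finally I invoke Theorem~\ref{thm:tK2}, which gives $\thin(\overline{tK_2}) = t$. Therefore, for any candidate function $f$, the value $f(\comppthin(K_t), |V(2K_1)|) = f(1,2)$ is a fixed constant, whereas choosing $t > f(1,2)$ produces a pair with $\thin(G_1 \ast G_2) = t > f(1,2)$, a contradiction. I do not anticipate any real technical obstacle here; the only content is recognizing the product as $\overline{tK_2}$ and applying the already-cited Theorem~\ref{thm:tK2}.
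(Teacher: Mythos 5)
Your argument is correct and matches the paper's own reasoning exactly: the paper also takes $G_1 = K_t$, $G_2 = 2K_1$, notes $\comppthin(K_t)=1$ and $|V(2K_1)|=2$, identifies $K_t \ast 2K_1 = \overline{tK_2}$, and concludes via Theorem~\ref{thm:tK2} that $\thin(K_t \ast 2K_1)=t$ is unbounded. Your added verification that the only non-adjacencies are the pairs $(u,a)(u,b)$ is a fine (if routine) justification of that identification.
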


Consider the graph $tK_2 \ast tK_2$. It is $(4t-1)$-regular, and
for every pair of vertices $u,v$, it holds $|N(u) \cap N(v)| \leq
2t+1$. By Corollary~\ref{cor:k-reg}, $\thin(tK_2 \ast tK_2) \geq
2t-2$. Since $\indpthin(tK_2) = 2$, we have also the following
corollary.

\begin{corollary}\label{cor:nbconorm2}
There is no function $f: \mathbb{R}^2 \to \mathbb{R}$ such that
$\thin(G_1 \ast G_2) \leq f(\indpthin(G_1),\indpthin(G_2))$ for
any pair of graphs $G_1$, $G_2$.
\end{corollary}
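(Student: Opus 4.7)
The plan is to exhibit an explicit family $(G_1^{(t)}, G_2^{(t)})$ of pairs of graphs whose co-normal product has thinness growing without bound while $\indpthin(G_1^{(t)})$ and $\indpthin(G_2^{(t)})$ stay bounded by a fixed constant. Once such a family is produced, the non-existence of $f$ follows immediately: any candidate $f$ evaluated at the bounded inputs would give a constant, contradicting the unbounded thinness of the products.

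The natural candidate, suggested by the previous Corollary~\ref{cor:nbconorm}, is $G_1^{(t)} = G_2^{(t)} = tK_2$. First I would verify the cheap side: since $tK_2$ is a disjoint union of edges, each $K_2$ is a trivial proper interval graph and its two vertices form independent singleton classes, so $\indpthin(K_2)=2$. Then by Theorem~\ref{thm:union-ind} (applied inductively, or simply by placing the edges one after another in the ordering and using the two color classes globally), $\indpthin(tK_2)=2$ for every $t\geq 1$.

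The substantive side is to bound $\thin(tK_2 \ast tK_2)$ from below. I would compute the local structure of this product. Each vertex $(u_1,u_2)$ is adjacent to $(v_1,v_2)$ iff $u_1v_1\in E(tK_2)$ or $u_2v_2\in E(tK_2)$. Since each vertex of $tK_2$ has exactly one neighbor and $2t-1$ non-neighbors (including itself) in its $2t$-vertex ambient graph, counting gives $|N((u_1,u_2))| = 1\cdot 2t + 2t\cdot 1 - 1 = 4t-1$, so $tK_2\ast tK_2$ is $(4t-1)$-regular. For any two vertices I would bound $|N(u)\cap N(v)|$ by a case analysis on whether the coordinates of $u,v$ agree, are partners in the matching, or are unrelated; in the worst case this common neighborhood has size at most $2t+1$. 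With $d=4t-1$ and $c=2t+1$, Corollary~\ref{cor:k-reg} yields $\thin(tK_2\ast tK_2)\geq d-c = 2t-2$, which tends to infinity.

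The main obstacle is the clean calculation of $|N(u)\cap N(v)|\leq 2t+1$, because it requires being careful in the case when $u$ and $v$ share one coordinate (the common neighbors then come from two distinct mechanisms, and one must avoid double counting). Once this is handled, the two bounds fit together: for every $t$, $\indpthin(G_1^{(t)})=\indpthin(G_2^{(t)})=2$ but $\thin(G_1^{(t)}\ast G_2^{(t)})\geq 2t-2\to\infty$, so no function $f:\mathbb{R}^2\to\mathbb{R}$ can satisfy the proposed inequality.
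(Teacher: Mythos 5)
Your proposal is correct and follows essentially the same route as the paper: the paper also takes the family $tK_2 \ast tK_2$, notes it is $(4t-1)$-regular with $|N(u)\cap N(v)|\leq 2t+1$, applies Corollary~\ref{cor:k-reg} to get $\thin(tK_2\ast tK_2)\geq 2t-2$, and uses $\indpthin(tK_2)=2$. The only difference is that you flag the common-neighborhood case analysis explicitly, which the paper asserts without detail.
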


The non existence of bounds in terms of other parameters can be
deduced from diagram in Figure~\ref{fig:hasse}.

Further consequences of the examples above are the following.

\begin{corollary}\label{cor:lowerconorm}
Given two graphs $G_1$ and $G_2$, if $G_2$ is not complete, then
$\omega(G_1) \leq \thin(G_1 \ast G_2)$.
\end{corollary}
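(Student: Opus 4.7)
The plan is to exhibit an induced subgraph of $G_1 \ast G_2$ whose thinness is exactly $\omega(G_1)$, and then invoke the fact that the thinness of an induced subgraph is a lower bound for the thinness of the whole graph (any ordering/partition that is consistent on the whole graph restricts to a consistent ordering/partition on any induced subgraph).

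First, I would use that $G_2$ is not complete to pick two vertices $a,b \in V_2$ with $ab \notin E_2$, and simultaneously pick a clique $K \subseteq V_1$ of size $\omega(G_1)$. The key observation is the identity already used to obtain Corollary~\ref{cor:nbconorm}, namely that $K_t \ast 2K_1$ is isomorphic to $\overline{tK_2}$: indeed, in $G_1 \ast G_2$ the vertices $(u,a)$ and $(u,b)$ with $u \in K$ are non-adjacent (since $uu \notin E_1$ and $ab \notin E_2$), while all other pairs among $\{(u,a),(u,b) : u \in K\}$ are adjacent, because either the first coordinates are distinct (and hence adjacent in $K$) or the first coordinates coincide. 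Hence the subgraph of $G_1 \ast G_2$ induced by $K \times \{a,b\}$ is isomorphic to $\overline{\omega(G_1) K_2}$.

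Next, by Theorem~\ref{thm:tK2}, $\thin(\overline{\omega(G_1) K_2}) = \omega(G_1)$. Combining with the monotonicity of thinness under induced subgraphs, we obtain
\[
\thin(G_1 \ast G_2) \;\geq\; \thin\!\left(\overline{\omega(G_1) K_2}\right) \;=\; \omega(G_1),
\]
which is the desired bound.

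There is essentially no obstacle here: the corollary is a direct consequence of the $K_t \ast 2K_1 = \overline{tK_2}$ identity already exploited for Corollary~\ref{cor:nbconorm}, together with the exact value of $\thin(\overline{tK_2})$ from Theorem~\ref{thm:tK2}. The only point worth stating carefully is the isomorphism $K_{\omega(G_1)} \ast 2K_1 \cong \overline{\omega(G_1) K_2}$, since this is what forces the hypothesis ``$G_2$ is not complete'' to be used in exactly one place (to guarantee the existence of the non-edge $ab$).
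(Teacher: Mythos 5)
Your proof is correct and follows essentially the same route the paper intends: since $G_2$ is not complete and $G_1$ contains a clique of size $\omega(G_1)$, the product contains $K_{\omega(G_1)} \ast 2K_1 \cong \overline{\omega(G_1)K_2}$ as an induced subgraph, and Theorem~\ref{thm:tK2} together with the heredity of thinness under induced subgraphs gives the bound. This is exactly the example ($K_t \ast 2K_1 = \overline{tK_2}$) the paper uses for Corollary~\ref{cor:nbconorm}, so no further comparison is needed.
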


\begin{corollary}\label{cor:lowerconorm2}
Given two graphs $G_1$ and $G_2$, $2\min\{\mim(G_1),\mim(G_2)\}-2
\leq \thin(G_1 \ast G_2)$.
\end{corollary}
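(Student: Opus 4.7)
The plan is to leverage the explicit computation already carried out in the paragraph preceding the corollary, which established $\thin(tK_2 \ast tK_2) \geq 2t-2$ via Corollary~\ref{cor:k-reg}, together with the hereditary behavior of thinness under induced subgraphs. Setting $m = \min\{\mim(G_1), \mim(G_2)\}$, I would begin by noting that by the very definition of a maximum induced matching, each $G_i$ contains a copy of $\mim(G_i) K_2$, and hence of $mK_2$, as an induced subgraph.

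Next, I would verify directly from the definition of the co-normal product that if $H_i$ is an induced subgraph of $G_i$ for $i = 1,2$, then $H_1 \ast H_2$ is an induced subgraph of $G_1 \ast G_2$. This is immediate: adjacency of $(u_1,u_2)$ and $(v_1,v_2)$ in the product depends only on the adjacency status of $u_1v_1$ in $G_1$ and of $u_2v_2$ in $G_2$, and this status is preserved when restricting each factor to an induced subgraph. Hence $mK_2 \ast mK_2$ is an induced subgraph of $G_1 \ast G_2$.

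Finally, I would invoke the fact that thinness is monotone under taking induced subgraphs: any ordering and partition of $V(G)$ consistent in the sense of Section~\ref{sec:thin} restricts to an ordering and partition of any vertex subset $S$ that is consistent in the induced subgraph $G[S]$, since induced adjacency agrees with adjacency in $G$ for vertex triples in $S$. Combining these steps yields
\[
\thin(G_1 \ast G_2) \;\geq\; \thin(mK_2 \ast mK_2) \;\geq\; 2m - 2,
\]
where the second inequality is the bound from the paragraph above applied with $t = m$.

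The argument is entirely routine once the two underlying observations (hereditary monotonicity of thinness, and compatibility of $\ast$ with induced subgraphs) are recorded. I do not anticipate any real obstacle; the main point is simply to lift the specific example $tK_2 \ast tK_2$ to arbitrary factors by exhibiting it as an induced subgraph determined by induced matchings in $G_1$ and $G_2$.
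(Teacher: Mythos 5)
Your proposal is correct and follows exactly the route the paper intends: the corollary is stated as a ``further consequence'' of the preceding computation $\thin(tK_2 \ast tK_2) \geq 2t-2$, lifted to general factors by taking the induced copies of $mK_2$ given by maximum induced matchings, using that the co-normal product of induced subgraphs is an induced subgraph of the product and that thinness is hereditary. Your write-up just makes explicit the details the paper leaves implicit, so there is nothing to add.
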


\subsection{Modular product}

The \emph{modular product} $G_1 \diamond G_2$ is a graph whose
vertex set is the Cartesian product $V_1 \times V_2$, and such
that two vertices $(u_1,u_2)$ and $(v_1,v_2)$ are adjacent in $G_1
\diamond G_2$ if and only if either $u_1$ is adjacent to $v_1$ in
$G_1$ and $u_2$ is adjacent to $v_2$ in $G_2$, or $u_1$ is
nonadjacent to $v_1$ in $G_1$ and $u_2$ is nonadjacent to $v_2$ in
$G_2$.

Notice that $K_n \diamond K_2 = CR_n$ and $tK_2 \diamond K_1 =
\overline{tK_2}$, so we have the following.

\begin{corollary}\label{cor:nbmodular}
There is no function $f: \mathbb{R}^2 \to \mathbb{R}$ such that
$\thin(G_1 \diamond G_2) \leq f(h(G_1),|G_2|)$, for $h \in
\{\comppthin,\indpthin\}$ for any pair of graphs $G_1$, $G_2$.
\end{corollary}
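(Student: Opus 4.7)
My plan is to prove the two cases of the corollary separately by exhibiting, for each choice of $h$, an explicit sequence of graph pairs $(G_1, G_2)$ where $h(G_1)$ and $|V(G_2)|$ remain bounded while $\thin(G_1 \diamond G_2)$ grows without bound. The two identities mentioned just before the statement, namely $K_n \diamond K_2 = CR_n$ and $tK_2 \diamond K_1 = \overline{tK_2}$, point directly at which families to use.

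For the case $h = \comppthin$, I plan to take $G_1 = K_n$ and $G_2 = K_2$. Since $K_n$ is complete, the trivial one-part partition into a single clique is strongly consistent with any ordering, so $\comppthin(K_n) = 1$, and clearly $|V(K_2)| = 2$; both parameters are constant in $n$. On the other hand $G_1 \diamond G_2 = CR_n$, and Corollary~\ref{cor:crown} gives $\thin(CR_n) \geq n/2$, which is unbounded. Hence no function $f(\comppthin(G_1), |V(G_2)|)$ can upper bound $\thin(G_1 \diamond G_2)$.

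For the case $h = \indpthin$, I plan to take $G_1 = tK_2$ and $G_2 = K_1$. The graph $tK_2$ is an interval graph and in particular a co-comparability graph, so by Remark~\ref{rem:co-comp-ind-thin} we have $\indpthin(tK_2) = \chi(tK_2) = 2$, and $|V(K_1)| = 1$; again both parameters are constant in $t$. But $G_1 \diamond G_2 = \overline{tK_2}$, and Theorem~\ref{thm:tK2} gives $\thin(\overline{tK_2}) = t$, which is unbounded. So again no function $f(\indpthin(G_1), |V(G_2)|)$ can upper bound $\thin(G_1 \diamond G_2)$.

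Putting these two families together yields the corollary. There is no real obstacle: the only small thing to double-check is the routine verification of the two identities $K_n \diamond K_2 = CR_n$ and $tK_2 \diamond K_1 = \overline{tK_2}$ directly from the definition of the modular product (for the first, matching each vertex of $K_n$ with its $K_2$-copy produces exactly the removed perfect matching of the crown; for the second, with $|V(G_2)| = 1$ the adjacency rule reduces to non-adjacency in $G_1$). Both are immediate, so the proof is essentially a citation of the previously established bounds together with the two examples.
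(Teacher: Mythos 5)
Your proposal is correct and follows exactly the paper's argument: the paper derives the corollary from the same two identities $K_n \diamond K_2 = CR_n$ and $tK_2 \diamond K_1 = \overline{tK_2}$, using Corollary~\ref{cor:crown} for the $\comppthin$ case (with $\comppthin(K_n)=1$) and Theorem~\ref{thm:tK2} for the $\indpthin$ case (with $\indpthin(tK_2)=2$). Nothing is missing; your extra remark verifying the two product identities is the same routine check the paper leaves implicit.
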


The non existence of bounds in terms of other parameters can be
deduced from diagram in Figure~\ref{fig:hasse}.

Further consequences of the examples above are the following.

\begin{corollary}\label{cor:lowermodular}
Given two graphs $G_1$ and $G_2$, if $G_2$ has at least one edge,
then $\omega(G_1)/2 \leq \thin(G_1 \diamond G_2)$.
\end{corollary}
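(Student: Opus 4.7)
The plan is to exhibit a large crown graph as an induced subgraph of $G_1 \diamond G_2$ and then invoke Corollary~\ref{cor:crown} together with the monotonicity of thinness under induced subgraphs. The example $K_n \diamond K_2 = CR_n$, already used to derive Corollary~\ref{cor:nbmodular}, is precisely the tool we need.

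First, I would observe that the modular product is compatible with induced subgraphs: if $V_1' \subseteq V_1$ and $V_2' \subseteq V_2$, then the subgraph of $G_1 \diamond G_2$ induced by $V_1' \times V_2'$ coincides with $G_1[V_1'] \diamond G_2[V_2']$. This is immediate from the definition, since adjacency of $(u_1,u_2)$ and $(v_1,v_2)$ in $G_1 \diamond G_2$ depends only on whether $u_1v_1$ is an edge of $G_1$ and whether $u_2v_2$ is an edge of $G_2$, and these conditions are preserved when passing to induced subgraphs.

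Next, let $K \subseteq V_1$ be a maximum clique in $G_1$, so $|K| = \omega(G_1)$ and $G_1[K] = K_{\omega(G_1)}$. Using the hypothesis that $G_2$ has at least one edge, choose $w_1w_2 \in E_2$, so that $G_2[\{w_1,w_2\}] = K_2$. By the remark above, the subgraph of $G_1 \diamond G_2$ induced by $K \times \{w_1,w_2\}$ is isomorphic to $K_{\omega(G_1)} \diamond K_2$, which is $CR_{\omega(G_1)}$ as noted just before Corollary~\ref{cor:nbmodular}.

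Finally, since thinness is monotone under induced subgraphs (any ordering and partition consistent with the larger graph restrict to one consistent with the induced subgraph), Corollary~\ref{cor:crown} gives
\[
\thin(G_1 \diamond G_2) \;\geq\; \thin\bigl(CR_{\omega(G_1)}\bigr) \;\geq\; \omega(G_1)/2,
\]
which is the desired bound. There is no real obstacle here: the argument is essentially a reuse of the same construction that produced Corollary~\ref{cor:nbmodular}, with the only mild care being the verification that the induced subgraph on $K \times \{w_1,w_2\}$ really is a crown (in particular, that $(u,w_1)$ and $(u,w_2)$ are nonadjacent because $u$ is ``nonadjacent'' to itself in $G_1$ while $w_1w_2 \in E_2$ rules out the other clause of the modular adjacency).
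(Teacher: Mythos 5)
Your proof is correct and follows exactly the route the paper intends: the corollary is stated there as a direct consequence of the identity $K_n \diamond K_2 = CR_n$ together with Corollary~\ref{cor:crown}, and your argument just makes explicit the (valid) observations that the modular product restricts to induced subgraphs, so $G_1 \diamond G_2$ contains $CR_{\omega(G_1)}$ induced on $K \times \{w_1,w_2\}$, and that thinness is monotone under taking induced subgraphs.
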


\begin{corollary}\label{cor:lowermodular2}
Given two graphs $G_1$ and $G_2$, $\mim(G_1) \leq \thin(G_1
\diamond G_2)$.
\end{corollary}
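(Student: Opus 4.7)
The plan is to exhibit a convenient induced subgraph of $G_1 \diamond G_2$ whose thinness has already been pinned down, and then invoke monotonicity of thinness under taking induced subgraphs (which follows immediately by restricting a consistent ordering and partition to the subset of vertices).

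First, I would set $t = \mim(G_1)$ and fix an induced matching $\{a_1b_1, a_2b_2, \ldots, a_tb_t\}$ in $G_1$, together with an arbitrary vertex $w \in V_2$. The key observation is that, for any two distinct vertices $u_1, v_1 \in V_1$, the vertices $(u_1,w)$ and $(v_1,w)$ are adjacent in $G_1 \diamond G_2$ precisely when $u_1v_1 \in E_1$ and $ww \in E_2$, or $u_1v_1 \notin E_1$ and $ww \notin E_2$; since $G_2$ is loopless, $ww \notin E_2$, so adjacency in the modular product reduces to $u_1v_1 \notin E_1$. Thus the induced subgraph of $G_1 \diamond G_2$ on the $2t$ vertices $\{(a_i,w),(b_i,w) : 1 \leq i \leq t\}$ is isomorphic to $\overline{G_1[\{a_i,b_i\}_i]} = \overline{tK_2}$, because the induced matching condition forces $a_ib_j \in E_1$ iff $i=j$ and $a_ia_j, b_ib_j \notin E_1$ for $i \neq j$.

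Next, I would apply Theorem~\ref{thm:tK2} to conclude that this induced subgraph has thinness exactly $t$. Finally, since thinness is clearly monotone under induced subgraphs, I would deduce $\thin(G_1 \diamond G_2) \geq t = \mim(G_1)$, completing the argument.

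There is essentially no obstacle here: the whole content is the identification of a ``frozen'' copy of $\overline{tK_2}$ inside $G_1 \diamond G_2$ obtained by fixing the second coordinate, which turns the modular product adjacency rule into the complementary adjacency rule in $G_1$. The only point that deserves a brief sanity check is that the proof does not require any hypothesis on $G_2$ other than the existence of at least one vertex (implicit in our assumption that $G_2$ is a graph), since the absence of loops in $G_2$ is exactly what makes the complementation work.
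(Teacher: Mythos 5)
Your proof is correct and follows essentially the same route as the paper: the paper obtains this corollary from the observation that $tK_2 \diamond K_1 = \overline{tK_2}$ (so an induced matching of size $t$ in $G_1$ together with any single vertex of $G_2$ yields an induced $\overline{tK_2}$ in $G_1 \diamond G_2$), combined with Theorem~\ref{thm:tK2} and monotonicity of thinness under induced subgraphs, which is exactly your argument.
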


\subsection{Homomorphic product}

The \emph{homomorphic product} $G_1 \ltimes
G_2$~\cite{M-R-hom-prod} is a graph whose vertex set is the
Cartesian product $V_1 \times V_2$, and such that two vertices
$(u_1,u_2)$ and $(v_1,v_2)$ are adjacent in $G_1 \ltimes G_2$ if
and only if either $u_1 = v_1$ or $u_1$ is adjacent to $v_1$ in
$G_1$ and $u_2$ is nonadjacent to $v_2$ in $G_2$. It is not
necessarily commutative.

\begin{theorem}\label{thm:homo}
Let $G_1$ and $G_2$ be graphs. Then $f(G_1 \ltimes G_2) \leq
f(G_1)|V(G_2)|$ for $f \in \{\thin, \pthin, \compthin, \comppthin,
\indthin, \indpthin\}$.
\end{theorem}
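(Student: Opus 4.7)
The plan is to adapt the lexicographic construction used in the proof of Theorem~\ref{thm:cart} for the Cartesian product, together with the observation in Remark~\ref{rem:cart} that this construction preserves the independent/complete character of the classes. Fix a representation of $G_1$ witnessing $f(G_1)=k$: an ordering $v_1,\dots,v_{n_1}$ of $V_1$ and a partition $(V_1^1,\dots,V_1^k)$ that is consistent (strongly consistent for a proper variant) with the ordering, and whose parts are independent (resp.\ complete) when $f$ is an independent (resp.\ complete) variant. Order $V_1\times V_2$ lexicographically with $V_1$ as the major coordinate and the fixed ordering of $V_2$ as the minor one, and form the $k\cdot n_2$ classes $V^{i,j}=\{(v,w_j):v\in V_1^i\}$ for $1\le i\le k$, $1\le j\le n_2$.

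First I would check that each $V^{i,j}$ inherits the required structural property. Two distinct vertices $(v,w_j),(v',w_j)\in V^{i,j}$ are adjacent in $G_1\ltimes G_2$ exactly when $vv'\in E_1$: the $u_1=v_1$ clause is inapplicable since $v\ne v'$, and $w_j\not\sim w_j$ is vacuous. Hence $V^{i,j}$ is independent (resp.\ complete) whenever $V_1^i$ is. Then I would verify (strong) consistency by the standard casework on three ordered vertices $(v_p,w_i)<(v_q,w_j)<(v_r,w_\ell)$ split according to the equalities among $p,q,r$. The crucial simplification is that each column $\{v_p\}\times V_2$ is a clique of $G_1\ltimes G_2$, again by the $u_1=v_1$ clause, so whenever two of the three $G_1$-coordinates coincide, the pair that needs to be shown adjacent is automatically an edge, trivializing every boundary subcase uniformly across all six variants.

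In the remaining generic subcase $p<q<r$, the edge $(v_p,w_i)(v_r,w_\ell)$ forces both $v_pv_r\in E_1$ and $w_iw_\ell\notin E_2$; since same-class membership of either $\{(v_p,w_i),(v_q,w_j)\}$ or $\{(v_q,w_j),(v_r,w_\ell)\}$ forces the corresponding second coordinates to coincide, the $G_2$-nonadjacency is transferred verbatim, and the (strong) consistency of the $G_1$-partition promotes $v_pv_r\in E_1$ to $v_qv_r\in E_1$ (resp.\ to $v_pv_q\in E_1$), yielding the required edge. The main obstacle I anticipate is juggling the two clauses of the adjacency rule of $G_1\ltimes G_2$ simultaneously, but the column-clique observation absorbs all the subcases where columns are involved, so only the ``bulk'' case remains, and there the argument reduces cleanly to the original (strong) consistency of the $G_1$-representation.
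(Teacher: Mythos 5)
Your proposal is correct and follows essentially the same route as the paper's proof: the same lexicographic ordering with $V_1$ as the major coordinate, the same classes $V^{i,j}=V_1^i\times\{w_j\}$ (with independence/completeness preserved by looplessness of $G_2$), and the same case analysis, where your column-clique observation is exactly how the paper dispatches the cases $p=q<r$ and $p<q=r$ and the generic case $p<q<r$ is handled identically by transferring $G_2$-nonadjacency and invoking (strong) consistency in $G_1$.
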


Let $G_1$ be isomorphic to $K_2$, such that $V(G_1) =
\{v_1,v_2\}$, and $G_2$ be isomorphic to $tK_2$,  for some $t \geq
1$, such that $V(G_2) = \{w_1,z_1,\dots,w_t,z_t\}$ and $E(G_2) =
\{w_iz_i : 1 \leq i \leq t\}$. In $G_1 \ltimes G_2$, the vertices
$\{(v_1,w_i) : 1 \leq i \leq t\} \cup \{(v_2,z_i) : 1 \leq i \leq
t\}$ induce the graph $\overline{tK_2}$. In other words, $K_2
\ltimes tK_2$ has $\overline{tK_2}$ as induced subgraph.

Now let $G_3$ be isomorphic to $K_t \ \square \ K_2$, for some $t
\geq 1$, such that $V(G_3) = \{w_1,z_1,\dots,w_t,z_t\}$, $\{w_1,
\dots, w_t\}$ is a clique, $\{z_1, \dots, z_t\}$ is a clique, and
$w_i$ is adjacent to $z_j$ if and only if $i = j$. In $G_1 \ltimes
G_3$, the vertices $\{(v_1,w_i) : 1 \leq i \leq t\} \cup
\{(v_2,z_i) : 1 \leq i \leq t\}$ induce the graph
$\overline{tK_2}$. In other words, $K_2 \ltimes (K_t \ \square \
K_2)$ has $\overline{tK_2}$ as induced subgraph.

Since $\indpthin(tK_2)=2$ and $\comppthin(K_n \ \square \ K_2)=2$
(Remark~\ref{rem:cart}), we have the following.

\begin{corollary}\label{cor:nbhomo}
There is no function $f: \mathbb{R}^2 \to \mathbb{R}$ such that
$\thin(G_1 \ltimes G_2) \leq f(|G_1|,h(G_2))$, for $h \in
\{\indpthin,\comppthin\}$ for any pair of graphs $G_1$, $G_2$.
\end{corollary}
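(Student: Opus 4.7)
The plan is a contradiction argument that plugs the two explicit subgraph constructions given just before the statement into Theorem~\ref{thm:tK2}. Assume such a function $f$ exists. For the case $h = \indpthin$, I would fix $G_1 = K_2$ so that $|G_1| = 2$ and let $G_2 = tK_2$ for an arbitrary $t \ge 1$; the excerpt already records $\indpthin(tK_2) = 2$, so both coordinates of $f$ are bounded by the constant $2$ independently of $t$. Since the preceding paragraph exhibits $\overline{tK_2}$ as an induced subgraph of $K_2 \ltimes tK_2$, and thinness is clearly non-increasing on induced subgraphs (any ordering/partition that is consistent for the larger graph restricts to one that is consistent for the induced subgraph), Theorem~\ref{thm:tK2} gives $\thin(K_2 \ltimes tK_2) \ge \thin(\overline{tK_2}) = t$. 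Choosing $t > f(2,2)$ then contradicts the assumed bound.

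For the case $h = \comppthin$, I would reuse the second construction in exactly the same manner: take $G_1 = K_2$ and $G_2 = K_t \ \square \ K_2$, recall (as already stated in the excerpt, via Remark~\ref{rem:cart} together with the presence of an induced $C_4$) that $\comppthin(K_t \ \square \ K_2) = 2$ for $t \ge 2$, observe that the construction in the preceding paragraph again exhibits $\overline{tK_2}$ as an induced subgraph of $K_2 \ltimes (K_t \ \square \ K_2)$, and invoke Theorem~\ref{thm:tK2} once more to force $\thin \ge t$ while $|G_1|$ and $\comppthin(G_2)$ stay bounded by $2$.

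There is no real obstacle: all the heavy lifting has already been done in the two explicit induced-subgraph constructions given immediately above the statement, so the corollary reduces to a bookkeeping step saying ``because the thinness of the induced subgraph $\overline{tK_2}$ is unbounded in $t$, so is the thinness of the homomorphic product, while both $|G_1|$ and $h(G_2)$ remain bounded by the constant $2$.'' The only minor care needed is to record explicitly the monotonicity of $\thin$ under induced subgraphs, which is immediate from the definition of $k$-thinness.
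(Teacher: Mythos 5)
Your argument is correct and is essentially the paper's own: the corollary is deduced exactly from the two preceding constructions ($\overline{tK_2}$ induced in $K_2 \ltimes tK_2$ and in $K_2 \ltimes (K_t \ \square \ K_2)$), together with $\indpthin(tK_2)=2$, $\comppthin(K_t \ \square \ K_2)=2$, Theorem~\ref{thm:tK2}, and monotonicity of thinness under induced subgraphs. Nothing is missing; making the monotonicity step explicit is a reasonable bookkeeping addition.
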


The non existence of bounds in terms of other parameters can be
deduced from diagram in Figure~\ref{fig:hasse}.

Another consequence of the example above is the following.

\begin{corollary}\label{cor:lowerhomo}
Given two graphs $G_1$ and $G_2$, if $G_1$ has at least one edge,
then $\mim(G_2) \leq \thin(G_1 \ltimes G_2)$.
\end{corollary}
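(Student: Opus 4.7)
The plan is to reduce to the examples constructed just before Corollary~\ref{cor:nbhomo}, which already exhibit the pattern we need, and then invoke Theorem~\ref{thm:tK2}. The key observation is that thinness is hereditary under induced subgraphs (as is clear from the definition, since an ordering and partition of $V(G)$ restrict to a valid ordering and partition of any $V(H) \subseteq V(G)$), so it suffices to locate $\overline{mK_2}$ as an induced subgraph of $G_1 \ltimes G_2$, where $m = \mim(G_2)$.

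First I would pick an edge $v_1v_2 \in E(G_1)$, which exists by hypothesis, and an induced matching of size $m$ in $G_2$, say $w_1z_1, \ldots, w_mz_m$, with the property that $w_iz_j \in E(G_2)$ if and only if $i = j$. Then I would consider the set
\[ S = \{(v_1, w_i) : 1 \leq i \leq m\} \cup \{(v_2, z_i) : 1 \leq i \leq m\} \]
in $G_1 \ltimes G_2$ and compute its induced subgraph directly from the definition of $\ltimes$. Within each of the two ``layers'' $\{v_1\} \times \{w_1,\ldots,w_m\}$ and $\{v_2\} \times \{z_1,\ldots,z_m\}$, the first coordinates coincide, so every pair is adjacent in $G_1 \ltimes G_2$; hence each layer is a clique of size $m$. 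Between the layers, the first coordinates form the edge $v_1v_2 \in E(G_1)$, so $(v_1, w_i)$ and $(v_2, z_j)$ are adjacent precisely when $w_iz_j \notin E(G_2)$, that is, precisely when $i \neq j$.

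Therefore $G_1 \ltimes G_2[S]$ is the graph on $2m$ vertices which is complete except for the $m$ non-edges $(v_1,w_i)(v_2,z_i)$, namely $\overline{mK_2}$. By Theorem~\ref{thm:tK2}, $\thin(\overline{mK_2}) = m$, and since thinness does not decrease when passing to supergraphs that contain $\overline{mK_2}$ as an induced subgraph, I conclude $\thin(G_1 \ltimes G_2) \geq m = \mim(G_2)$.

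There is no real obstacle here; the only step that requires attention is the verification that the non-edges among $S$ are exactly the ``diagonal'' pairs, which uses both directions of the induced-matching condition (so that $w_iz_j$ is a non-edge of $G_2$ precisely when $i \neq j$, not merely when $i = j$ is excluded). This is what makes $\mim$, rather than just the size of a matching, the right quantity in the bound.
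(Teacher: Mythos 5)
Your proof is correct and follows essentially the same route as the paper: it takes an edge $v_1v_2$ of $G_1$ and an induced matching $w_1z_1,\dots,w_mz_m$ of $G_2$, observes (exactly as in the example $K_2 \ltimes tK_2$ given before Corollary~\ref{cor:nbhomo}) that the vertices $\{(v_1,w_i)\} \cup \{(v_2,z_i)\}$ induce $\overline{mK_2}$ in $G_1 \ltimes G_2$, and then applies Theorem~\ref{thm:tK2} together with the heredity of thinness under induced subgraphs. Your explicit verification of the adjacencies, including that the induced-matching condition is what forces the non-edges to be precisely the diagonal pairs, matches the paper's intended argument.
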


\subsection{Hom-product}

The \emph{hom-product} $G_1 \circ G_2$~\cite{Bacik-phd-hom-prod}
is a graph whose vertex set is the Cartesian product $V_1 \times
V_2$, and such that two vertices $(u_1,u_2)$ and $(v_1,v_2)$ are
adjacent in $G_1 \circ G_2$ if and only if $u_1 \neq v_1$ and
either $u_1$ is nonadjacent to $v_1$ in $G_1$ or $u_2$ is adjacent
to $v_2$ in $G_2$. It is not necessarily commutative, indeed, $G_1
\circ G_2 = \overline{G_1 \ltimes G_2}$.

\begin{theorem}\label{thm:hom}
Let $G_1$ and $G_2$ be graphs. Then {$f(G_1 \circ G_2) \leq
\indf(G_1 \circ G_2) \leq |V(G_1)|\indf(G_2)$}, for $f \in
\{\thin, \pthin\}$.
\end{theorem}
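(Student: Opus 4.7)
The first inequality $f(G_1 \circ G_2) \leq \indf(G_1 \circ G_2)$ is immediate from the definitions, since every partition witnessing independent (proper) thinness is in particular a partition witnessing (proper) thinness. The substance of the proof lies in establishing $\indf(G_1 \circ G_2) \leq |V(G_1)|\indf(G_2)$.

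The plan is to construct an explicit ordering and partition of $V_1 \times V_2$ with $n_1 t_2$ classes, where $t_2 = \indf(G_2)$, witnessing independent (proper) thinness. Let $w_1,\dots,w_{n_2}$ and $V_2^1,\dots,V_2^{t_2}$ be the implicit ordering and independent partition of $V_2$. Define the classes $V^{i,j} = \{(v_i, w) : w \in V_2^j\}$ for $1 \leq i \leq n_1$ and $1 \leq j \leq t_2$, and order $V_1 \times V_2$ lexicographically \emph{with the $V_2$-coordinate as the primary key} and the $V_1$-coordinate as the secondary key. Each class $V^{i,j}$ is automatically independent in $G_1 \circ G_2$ because by the definition of the hom-product two adjacent vertices must differ in their first coordinate, and all vertices of $V^{i,j}$ share $v_i$.

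Next I would verify (strong) consistency by the standard case analysis on a triple $(v_p,w_i) < (v_q,w_j) < (v_r,w_\ell)$, whose crucial feature (thanks to the choice of primary key) is that it forces $i \leq j \leq \ell$ in $G_2$'s ordering. For the consistency case, assume $(v_p,w_i)$ and $(v_q,w_j)$ are in the same class, so $p=q$ and $w_i,w_j$ lie in a common (independent) class of $G_2$, forcing $i<j$; assume moreover that $(v_p,w_i)$ and $(v_r,w_\ell)$ are adjacent. If $v_p \not\sim v_r$ in $G_1$, then $v_q = v_p \not\sim v_r$ and adjacency of $(v_q,w_j)$ with $(v_r,w_\ell)$ follows directly. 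Otherwise we must have $w_i \sim w_\ell$ in $G_2$, in which case the independence of $V_2^j$ excludes $j=\ell$, so $i<j<\ell$ and consistency of $G_2$'s partition yields $w_j \sim w_\ell$, giving the desired adjacency. For the proper case, the symmetric situation in which $(v_q,w_j)$ and $(v_r,w_\ell)$ share a class is handled in exactly parallel fashion, now invoking strong consistency of the $\indpthin$-partition of $G_2$ on the triple $w_i < w_j < w_\ell$.

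The main obstacle, and the reason the theorem is stated with $|V(G_1)|$ rather than some thinness parameter of $G_1$, is that there is no structure on $G_1$ that can help when the ``$v_p \sim v_r$'' alternative of the adjacency definition is in force; one has to fall back on $G_2$-adjacency. The subtle point is therefore the choice of ordering on $V_1 \times V_2$: the $V_2$-first lexicographic order is what guarantees the monotonicity $i \leq j \leq \ell$ and so lets one apply (strong) consistency of $G_2$'s partition in the only case that truly matters. The naive $V_1$-first ordering would leave $\ell$ unconstrained relative to $i,j$ and the argument would fail.
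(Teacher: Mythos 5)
Your proposal is correct and follows essentially the same route as the paper's proof: the same partition $V^{i,j}=\{(v_i,w): w\in V_2^j\}$, the same lexicographic order with the $V_2$-coordinate as primary key, independence of classes from the requirement $u_1\neq v_1$ in the hom-product, and a case analysis that falls back on the independence and (strong) consistency of $G_2$'s partition exactly as the paper does. The only difference is organizational (you split by which adjacency alternative holds rather than by the cases $i=j=\ell$, $i=j<\ell$, etc., and you leave the immediate check $q\neq r$, resp.\ $p\neq q$, implicit), which does not affect correctness.
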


\begin{proof}
Let $G_2$ be a $k$-independent-thin (resp. proper
$k$-independent-thin) graph. Consider $V_1 \times V_2$
lexicographically ordered with respect to the {defined} orderings
of $V_2$ and $V_1$. Consider now the partition $\{V^{i,j}\}_{1\leq
i \leq n_1,\ 1 \leq j \leq k}$ such that $V^{i,j} = \{(v_i,w) : w
\in V_2^j\}$ for each $1\leq i \leq n_1$, $1 \leq j \leq k$. By
definition of the hom-product, each $V^{i,j}$ is an independent
set.

We will show now that this ordering and partition of $V_1 \times
V_2$ are consistent (resp. strongly consistent). Let $(v_p,w_i),
(v_q,w_j), (v_r,w_{\ell})$ be three vertices appearing in that
ordering in $V_1 \times V_2$.

\emph{Case 1: $i = j = \ell$.} In this case, the three vertices
are in different classes, so no restriction has to be satisfied.

\emph{Case 2: $i = j < \ell$.} In this case, $(v_p,w_i)$ and
$(v_q,w_j)$ are in different classes. So, suppose $G_2$ is proper
$k$-independent-thin and $(v_q,w_j), (v_r,w_{\ell})$ belong to the
same class, i.e., $q=r$ and $w_i = w_j$ and $w_{\ell}$ belong to
the same class in $G_2$. In particular, since the classes are
independent sets, $w_i w_{\ell} \not \in E_2$. Vertices
$(v_p,w_i)$ and $(v_r,w_{\ell})$ are adjacent in $G_1 \circ G_2$
if and only if $p\neq r$ and either $v_p v_r \not \in E_1$ or $w_i
w_{\ell} \in E_2$. So, assume $p \neq r$. Since $q=r$, $p \neq q$,
and since the graph is loopless and $i = j$, $w_i w_j \not \in
E_2$. So $(v_p,w_i)$ is adjacent to $(v_q,w_j)$ in $G_1 \circ
G_2$, as required.

\emph{Case 3: $i < j = \ell$.} In this case, $(v_q,w_j)$ and
$(v_r,w_{\ell})$ are in different classes. So suppose $G_2$ is
$k$-independent-thin (resp. proper $k$-independent-thin) and
$(v_p,w_i), (v_q,w_j)$ belong to the same class, i.e., $p=q$ and
$w_i$ and $w_j = w_{\ell}$ belong to the same class in $G_2$. In
particular, since the classes are independent sets, $w_i w_j \not
\in E_2$. Vertices $(v_p,w_i)$ and $(v_r,w_{\ell})$ are adjacent
in $G_1 \circ G_2$ if and only if $p \neq r$ and either $v_p v_r
\not \in E_1$ or $w_i w_{\ell} \in E_2$. So, assume $p \neq r$.
Since $p=q$, $q \neq r$, and since the graph is loopless and $j =
\ell$, $w_j w_{\ell} \not \in E_2$. So $(v_q,w_j)$ is adjacent to
$(v_r,w_{\ell})$ in $G_1 \circ G_2$, as required.

\emph{Case 4: $i < j < \ell$.} Suppose first that $G_2$ is
$k$-independent-thin (resp. proper $k$-independent-thin) and
$(v_p,w_i), (v_q,w_j)$ belong to the same class, i.e., $p=q$ and
$w_i$, $w_j$ belong to the same class in $G_2$. Since the classes
are independent, $w_i$ and $w_j$ are not adjacent. Vertices
$(v_p,w_i)$ and $(v_r,w_{\ell})$ are adjacent in $G_1 \circ G_2$
if and only if $p \neq r$ and either $v_p v_r \not \in E_1$ or
$w_i w_{\ell} \in E_2$. So, assume $p \neq r$, and since $p=q$, $q
\neq r$ too. If $v_p v_r \not \in E_1$, then since $p=q$, $v_q v_r
\not \in E_1$, and $(v_q,w_j)$ is adjacent to $(v_r,w_{\ell})$ in
$G_1 \circ G_2$, as required. If $w_i w_{\ell} \in E_2$, since
$w_i$, $w_j$ belong to the same class in $G_2$ and the partition
of $V_2$ is (strongly) consistent with the ordering, {$w_j
w_{\ell} \in E_2$}, and $(v_q,w_j)$ is adjacent to
$(v_r,w_{\ell})$ in $G_1 \circ G_2$, as required.

Now suppose that $G_1$ is proper $k$-thin and $(v_q,w_j),
(v_r,w_{\ell})$ belong to the same class, i.e., $q = r$ and $w_j$,
$w_{\ell}$ belong to the same class in $G_2$. Vertices $(v_p,w_i)$
and $(v_r,w_{\ell})$ are adjacent in $G_1 \circ G_2$ if and only
if $p \neq r$ and either $v_p v_r \not \in E_1$ or $w_i w_{\ell}
\in E_2$. So, assume $p \neq r$, and since $q = r$, $p \neq q$
too. If $v_p v_r \not \in E_1$, then since $q = r$, $v_p v_q \not
\in E_1$, and $(v_p,w_i)$ and $(v_q,w_j)$ are adjacent in $G_1
\circ G_2$, as required. If $w_i w_{\ell} \in E_2$, since $w_j$,
$w_{\ell}$ belong to the same class in $G_2$ and the partition of
$V_2$ is strongly consistent with the ordering, $w_i w_j \in E_2$,
and $(v_p,w_i)$ is adjacent to $(v_q,w_j)$ in $G_1 \circ G_2$, as
required.
\end{proof}

Notice that $K_2 \circ K_n = CR_n$. Also, $G \circ K_1 =
\overline{G}$, so, by taking $G = tK_2$ and $G = K_n \ \square \
K_2$, we have the following.

\begin{corollary}\label{cor:nbhom}
There is no function $f: \mathbb{R}^2 \to \mathbb{R}$ such that
$\thin(G_1 \circ G_2) \leq f(|G_1|,\comppthin(G_2))$, and there is
no function $f: \mathbb{R}^2 \to \mathbb{R}$ such that $\thin(G_1
\circ G_2) \leq f(h(G_1),|G_2|)$ for $h \in
\{\indpthin,\comppthin\}$ for any pair of graphs $G_1$, $G_2$.
\end{corollary}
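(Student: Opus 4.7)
My plan is to exhibit, for each of the three claimed non-bounds, an explicit family of pairs of graphs whose hom-product has unbounded thinness while the parameters named in the statement stay bounded by small constants. The two observations preceding the statement, namely $K_2 \circ K_n = CR_n$ and $G \circ K_1 = \overline{G}$, already single out the examples to use.

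For the first non-bound, involving $|V(G_1)|$ and $\comppthin(G_2)$, I would take $G_1 = K_2$ and $G_2 = K_n$: then $|V(G_1)| = 2$ and $\comppthin(K_n) = 1$, since $V(K_n)$ is itself a complete class. The identity $K_2 \circ K_n = CR_n$ together with Corollary~\ref{cor:crown} gives $\thin(G_1 \circ G_2) \geq n/2$, which is unbounded as $n \to \infty$ while both parameters stay constant; this defeats any candidate function $f$.

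For the two non-bounds of the form $f(h(G_1), |V(G_2)|)$, I fix $G_2 = K_1$, so that $|V(G_2)| = 1$ and $G_1 \circ G_2 = \overline{G_1}$ by the second observation, reducing the task to choosing $G_1$ whose complement has large thinness while $h(G_1)$ stays bounded. For $h = \indpthin$, take $G_1 = tK_2$: Theorem~\ref{thm:tK2} gives $\thin(\overline{tK_2}) = t$, while Theorem~\ref{thm:union-ind} applied iteratively together with $\indpthin(K_2) = 2$ yields $\indpthin(tK_2) = 2$. For $h = \comppthin$, take $G_1 = K_n \ \square \ K_2$: a direct inspection of the definitions shows $\overline{K_n \ \square \ K_2} = CR_n$ (the two copies of $K_n$ become two independent sets, and the matching across them disappears, leaving $K_{n,n}$ minus a perfect matching), so Corollary~\ref{cor:crown} again gives thinness at least $n/2$; meanwhile Remark~\ref{rem:cart} yields $\comppthin(K_n \ \square \ K_2) \leq \comppthin(K_n)\cdot|V(K_2)| = 2$, and since $K_n \ \square \ K_2$ is not complete for $n \geq 2$, the lower bound $\comppthin \geq 2$ is automatic. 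The only step requiring genuine care is the identification $\overline{K_n \ \square \ K_2} = CR_n$; everything else is a routine application of previously established bounds, after which each non-existence claim is obtained by letting $n$ or $t$ tend to infinity.
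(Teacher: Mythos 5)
Your proposal is correct and follows essentially the same route as the paper, which derives the corollary from exactly these examples: $K_2 \circ K_n = CR_n$ with Corollary~\ref{cor:crown}, and $G \circ K_1 = \overline{G}$ applied to $G = tK_2$ (via Theorem~\ref{thm:tK2}) and $G = K_n \ \square \ K_2$ (whose complement is $CR_n$, consistent with the paper's definition $CR_n = \overline{K_n \boxminus K_n}$), with $\indpthin(tK_2)=2$ and $\comppthin(K_n \ \square \ K_2)= 2$ by Remark~\ref{rem:cart}. Your explicit verification of $\overline{K_n \ \square \ K_2} = CR_n$ is the only detail the paper leaves implicit, and it is accurate.
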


The non existence of bounds in terms of other parameters can be
deduced from diagram in Figure~\ref{fig:hasse}.

Further consequences of the examples above are the following.

\begin{corollary}\label{cor:lowerhom}
Given two graphs $G_1$ and $G_2$, if $G_1$ has at least one edge,
then $\omega(G_2)/2 \leq \thin(G_1 \circ G_2)$.
\end{corollary}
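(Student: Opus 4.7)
The plan is to deduce this bound from the crown graph example displayed immediately before the corollary. The two key ingredients are: (i) the hom-product behaves well with respect to induced subgraphs, and (ii) thinness is monotone under taking induced subgraphs.

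First I would verify (i): if $H_1$ is an induced subgraph of $G_1$ and $H_2$ is an induced subgraph of $G_2$, then the subgraph of $G_1 \circ G_2$ induced by $V(H_1) \times V(H_2)$ is exactly $H_1 \circ H_2$. This is immediate from the adjacency rule of $\circ$, since $u_1 v_1 \in E(G_1)$ iff $u_1 v_1 \in E(H_1)$ when $u_1, v_1 \in V(H_1)$, and similarly for $G_2$. For (ii), given a $k$-thin representation of a graph, restricting the ordering and the partition to any induced subgraph produces a consistent ordering/partition of that subgraph, hence $\thin$ is monotone under induced subgraphs.

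With these two observations in hand, the argument is short. Since $G_1$ has at least one edge, $K_2$ is an induced subgraph of $G_1$. By definition, $G_2$ has $K_{\omega(G_2)}$ as an induced subgraph. Therefore, by (i), $K_2 \circ K_{\omega(G_2)}$ is an induced subgraph of $G_1 \circ G_2$. By the remark just above the corollary, $K_2 \circ K_n = CR_n$, so $CR_{\omega(G_2)}$ is an induced subgraph of $G_1 \circ G_2$. Combining this with (ii) and Corollary~\ref{cor:crown}, we obtain
\[
\thin(G_1 \circ G_2) \ \geq\ \thin(CR_{\omega(G_2)}) \ \geq\ \omega(G_2)/2.
\]

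There is no real obstacle here; the only step that requires a bit of care is checking (i), which amounts to unwinding the definition of $\circ$ on the restricted vertex set. Everything else is a direct appeal to results already established in the paper.
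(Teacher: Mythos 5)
Your proof is correct and follows exactly the route the paper intends: the corollary is stated as a consequence of the observation $K_2 \circ K_n = CR_n$, combined with the fact that the hom-product of induced subgraphs is an induced subgraph of the hom-product, monotonicity of thinness under induced subgraphs, and the bound $\thin(CR_n) \geq n/2$ of Corollary~\ref{cor:crown}. Your write-up merely makes explicit the verification of the induced-subgraph compatibility of $\circ$, which the paper leaves implicit.
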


\begin{corollary}\label{cor:lowerhom2}
Given two graphs $G_1$ and $G_2$, $\mim(G_1) \leq \thin(G_1 \circ
G_2)$.
\end{corollary}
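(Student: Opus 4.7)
The plan is to exhibit $\overline{mK_2}$, with $m=\mim(G_1)$, as an induced subgraph of $G_1\circ G_2$, and then invoke Theorem~\ref{thm:tK2} together with the fact that thinness is monotone under induced subgraphs (a consistent ordering and partition of $G$ restricts to a consistent ordering and partition of any induced subgraph).

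The key observation is that for any fixed $w\in V_2$, the induced subgraph $(G_1\circ G_2)[V_1\times\{w\}]$ is isomorphic to $\overline{G_1}$. Indeed, by the definition of the hom-product, two vertices $(u_1,w)$ and $(v_1,w)$ are adjacent in $G_1\circ G_2$ iff $u_1\neq v_1$ and either $u_1v_1\notin E_1$ or $ww\in E_2$; since $G_2$ is loopless, $ww\notin E_2$, so the condition reduces to $u_1\neq v_1$ and $u_1v_1\notin E_1$, which is exactly the adjacency in $\overline{G_1}$. This is just the remark, already made right before the statement, that $G\circ K_1=\overline{G}$, applied to one ``slice'' of the product.

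Next, let $\{a_ib_i\}_{i=1}^{m}$ be a maximum induced matching in $G_1$. In $\overline{G_1}$, the $2m$ vertices $a_1,b_1,\dots,a_m,b_m$ induce a graph in which $a_ib_i\notin E(\overline{G_1})$ for each $i$ (since $a_ib_i\in E_1$) and all other pairs are adjacent (since the matching is induced in $G_1$, so no other edges of $G_1$ appear among these vertices); that is, they induce $\overline{mK_2}$.

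Combining these two steps, the set $\{(a_i,w),(b_i,w):1\le i\le m\}$ induces $\overline{mK_2}$ in $G_1\circ G_2$, and hence $\thin(G_1\circ G_2)\ge \thin(\overline{mK_2})=m=\mim(G_1)$ by Theorem~\ref{thm:tK2}. There is no real obstacle here: once the slice-is-$\overline{G_1}$ observation is in place, the rest is a direct consequence of Theorem~\ref{thm:tK2} and heredity of thinness.
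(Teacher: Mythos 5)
Your proposal is correct and matches the paper's (implicit) argument: the paper derives this corollary from the observation $G \circ K_1 = \overline{G}$ applied to a slice $V_1\times\{w\}$, so that an induced matching of size $m$ in $G_1$ yields $\overline{mK_2}$ as an induced subgraph of $G_1 \circ G_2$, and then Theorem~\ref{thm:tK2} together with heredity of thinness under induced subgraphs gives $\thin(G_1\circ G_2)\ge m$. Your write-up simply makes these steps explicit.
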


\section{Conclusion}\label{sec:conc}

In this paper, we give upper bounds for the thinness, complete
thinness, independent thinness and their proper versions for the
union and join of graphs, as well as the lexicographical,
Cartesian, direct, strong, disjunctive, modular, homomorphic and
hom-products of graphs. These bounds are given in terms of the
parameters (depicted in the Hasse diagram of
Figure~\ref{fig:hasse}) of the component graphs, and for each of
the cases, it is proved that no upper bound in terms of lower
parameters (from those in the diagram) of the component graphs
exists. The non existence proofs are based on the determination of
exact values or lower bounds for the thinness of some families of
graphs like complements of matchings, grids, crown graphs,
hypercubes, or products of simple graphs like complete graphs,
stable sets, induced matchings and induced paths. We summarize the
main bounds obtained and the graph families with high thinness in
Table~\ref{table1}.

\begin{table}
\footnotesize
\begin{tabular}{|l|l|}
  \hline
  \textbf{Upper bounds} & \textbf{High thinness}   \\

  \hline


 $\opthin(G_1 \cup G_2) =\max\{\opthin(G_1),\opthin(G_2)\}$ &    \\

 $\opindthin(G_1 \cup G_2) =\max\{\opindthin(G_1),\opindthin(G_2)\}$ &   \\

 $\opcompthin(G_1 \cup G_2) =\opcompthin(G_1)+\opcompthin(G_2)$ &   \\  \hline


 $\thin(G_1^* \vee G_2^*)=\thin(G_1^*)+\thin(G_2^*)$ &    \\

 $\thin(G_1 \vee K_n)=\thin(G_1)$ &    \\

  $\pthin(G_1 \vee G_2) \leq \pthin(G_1)+\pthin(G_2)$ &    \\

 $\opindthin(G_1 \vee G_2)=\opindthin(G_1)+\opindthin(G_2)$ &   \\

 $\opcompthin(G_1 \vee G_2) \leq \opcompthin(G_1)+\opcompthin(G_2)$ &    \\

 $\compthin(G_1 \vee K_n) =\compthin(G_1)$ &    \\  \hline


  $\opthin(G_1 \bullet G_2) \leq \opindthin(G_1)\cdot \opthin(G_2)$ & $K_n \bullet 2K_1$  \\

 $\opthin(G_1 \bullet K_n) =\opthin(G_1)$ &    \\

 $\opindthin(G_1 \bullet G_2) \leq \opindthin(G_1)\cdot \opindthin(G_2)$ &    \\

 $\opcompthin(G_1 \bullet G_2) \leq |V(G_1)|\cdot \opcompthin(G_2)$ &    \\

 $\opcompthin(G_1 \bullet K_n) =\opcompthin(G_1)$ &    \\  \hline


  $\opthin(G_1 \ \square \ G_2) \leq \opthin(G_1) \cdot |V(G_2)|$ & $P_n \ \square \ P_n$  \\

  $\opindthin(G_1 \ \square \ G_2) \leq \opindthin(G_1) \cdot |V(G_2)|$ & $K_n \ \square \ K_n$  \\

  $\opcompthin(G_1 \ \square \ G_2) \leq \opcompthin(G_1) \cdot |V(G_2)|$ & $K_n \ \square \ K_{n,n}$  \\  \hline


$\opthin(G_1 \times G_2) \leq \opindthin(G_1) \cdot |V(G_2)|$ & $K_n \times K_2$ \\

$\opindthin(G_1 \times G_2) \leq \opindthin(G_1) \cdot |V(G_2)|$ &
$P_n \times P_n$
\\ \hline


$\opthin(G_1 \boxtimes G_2) \leq \opthin(G_1)\cdot |V(G_2)|$ & $P_n \boxtimes P_n$  \\

$\opthin(G_1 \boxtimes K_n) = \opthin(G_1)$ & $(K_n \ \square \ K_2) \boxtimes (K_n \ \square \ K_2)$  \\

$\opindthin(G_1 \boxtimes G_2) \leq \opindthin(G_1)\cdot |V(G_2)|$ &    \\

$\opcompthin(G_1 \boxtimes G_2) \leq \opcompthin(G_1)\cdot |V(G_2)|$ &    \\

$\opcompthin(G_1 \boxtimes G_2) = \opcompthin(G_1)$ &    \\ \hline


$\opthin(G_1 \ast G_2) \leq \opindthin(G_1) \cdot |V(G_2)|$ & $K_n \ast 2K_1$ \\

$\opindthin(G_1 \ast G_2) \leq \opindthin(G_1) \cdot |V(G_2)|$ &
$nK_2 \ast nK_2$
\\ \hline


& $K_n \diamond K_2$ \\

& $nK_2 \diamond K_1$ \\ \hline


$\opthin(G_1 \ltimes G_2) \leq \opthin(G_1)\cdot |V(G_2)|$ & $K_2
\ltimes nK_2$ \\

$\opindthin(G_1 \ltimes G_2) \leq \opindthin(G_1)\cdot |V(G_2)|$ &
$K_2
\ltimes (K_n \ \square \ K_2)$ \\

$\opcompthin(G_1 \ltimes G_2) \leq \opcompthin(G_1)\cdot |V(G_2)|$ & \\
\hline


$\opthin(G_1 \circ G_2) \leq |V(G_1)| \cdot \opindthin(G_2)$ & $K_2 \circ K_n$ \\

$\opindthin(G_1 \circ G_2) \leq |V(G_1)| \cdot \opindthin(G_2)$  & $nK_2 \circ K_1$ \\

& $(K_n \ \square \ K_2) \circ K_1$ \\ \hline

\end{tabular}

\caption{We summarize the upper bounds (when needed, graphs with
an asterisk are not complete). We also summarize the families of
graphs with bounded parameters whose product have high thinness,
used to show the nonexistence of bounds in terms of certain
parameters. Recall that the lexicographic, homomorphic, and
hom-product are not necessarily commutative.}\label{table1}
\end{table}

Furthermore, we describe new general lower and upper bounds for
the thinness of graphs, and some lower bounds for the graph
operations in terms of other well known graph invariants like
clique number, maximum induced matching, longest induced path, or
diameter.

Some open problems and possible research directions are:

\begin{itemize}
\item It would be interesting to find tighter bounds, in the case
in which it is possible.

 \item It remains as an open problem the computational
complexity of computing the independent and complete (proper)
thinness for general graphs. For co-comparability graphs, both the
independent thinness and independent proper thinness are exactly
the chromatic number, which can be computed in polynomial
time~\cite{Go-comp2}.

\item Regarding lower and upper bounds in
Section~\ref{sec:families}, can we have some similar results for
proper thinness? Or for the independent and complete versions of
(proper) thinness?

\item Does there exist a graph $G$ such that $\thin(G) >
|V(G)|/2$?

\item Does there exist a co-comparability graph $G$ such that
$\pthin(G) > |V(G)|/2$?
\end{itemize}

\section*{Acknowledgements}
\label{sec:ack}

This work was done when Moys\'es S. Sampaio Jr. was visiting the
University of Buenos Aires, funded by a grant from FAPERJ. The
work was also partially supported by UBACyT Grants
20020170100495BA and 20020160100095BA (Argentina), FAPERJ, CAPES
and CNPq (Brazil), and Programa Regional MATHAMSUD MATH190013.
Carolina L. Gonzalez is partially supported by a CONICET doctoral
fellowship.

We want to thank the anonymous referees for their valuable
suggestions that helped us improving this work.



\begin{thebibliography}{10}

\bibitem{A-H-L-prod}
M.~Ast{\'e}, F.~{Havet}, and C.~{Linhares Sales}.
\newblock {G}rundy number and products of graphs.
\newblock {\em Discrete Mathematics}, 310(9):1482--1490, 2010.

\bibitem{Bacik-phd-hom-prod}
R.~Ba{\v{c}}{\'{\i}}k.
\newblock {\em Structure of Graph Homomorphisms}.
\newblock PhD thesis, Simon Fraser University, Vancouver, 1997.

\bibitem{Mat-product}
T.~{Bartnicki}, B.~Bre{\v s}ar, M.~Kov{\v s}e, Z.~{Miechowicz},
and
  I.~{Peterin}.
\newblock Game chromatic number of {C}artesian product graphs.
\newblock {\em The Electronic Journal of Combinatorics}, 15(1):R72, 2008.

\bibitem{Bentz-thin}
C.~{Bentz}.
\newblock Weighted and locally bounded list-colorings in split graphs,
  cographs, and partial $k$-trees.
\newblock {\em Theoretical Computer Science}, 782:11--29, 2019.

\bibitem{B-D-thinness}
F.~{Bonomo} and D.~{De Estrada}.
\newblock On the thinness and proper thinness of a graph.
\newblock {\em Discrete Applied Mathematics}, 261:78--92, 2019.

\bibitem{B-K-T-V-lagos15-dam}
F.~{Bonomo}, I.~{Koch}, P.~{Torres}, and M.~{Valencia-Pabon}.
\newblock {$k$}-tuple colorings of the {C}artesian product of graphs.
\newblock {\em Discrete Applied Mathematics}, 245:177--182, 2018.

\bibitem{B-M-O-thin-tcs}
F.~{Bonomo}, S.~{Mattia}, and G.~{Oriolo}.
\newblock Bounded coloring of co-comparability graphs and the pickup and
  delivery tour combination problem.
\newblock {\em Theoretical Computer Science}, 412(45):6261--6268, 2011.

\bibitem{C-G-H-L-M-prod}
V.A. {Campos}, A.~Gy{\'a}rf{\'a}s, F.~{Havet}, C.~{Linhares
Sales}, and
  F.~{Maffray}.
\newblock New bounds on the {G}rundy number of products of graphs.
\newblock {\em Journal of Graph Theory}, 71(1):78--88, 2012.

\bibitem{C-I-M-R-box-prod}
S.~{Chandran}, W.~{Imrich}, R.~{Mathew}, and D.~{Rajendraprasad}.
\newblock Boxicity and cubicity of product graphs.
\newblock {\em European Journal of Combinatorics}, 45:100--109, 2015.

\bibitem{C-M-O-thinness-man}
S.~{Chandran}, C.~{Mannino}, and G.~{Oriolo}.
\newblock The indepedent set problem and the thinness of a graph.
\newblock Manuscript, 2007.

\bibitem{C-F-prod}
N.P. {Chiang} and H.L. {Fu}.
\newblock On the achromatic number of the {C}artesian product {$G_1 \times
  G_2$}.
\newblock {\em Australasian Journal of Combinatorics}, 6:111--117, 1992.

\bibitem{Chv-grid}
J.~Chv{\'a}talov{\'a}.
\newblock Optimal labelling of a product of two paths.
\newblock {\em Discrete Mathematics}, 11(3):249--253, 1975.

\bibitem{CorneilLerchsStewart81}
D.~{Corneil}, H.~{Lerchs}, and L.~{Stewart Burlingham}.
\newblock Complement reducible graphs.
\newblock {\em Discrete Applied Mathematics}, 3(3):163--174, 1981.

\bibitem{G-S-lex}
D.~{Geller} and S.~{Stahl}.
\newblock The chromatic number and other functions of the lexicographic
  product.
\newblock {\em Journal of Combinatorial Theory. Series B}, 19(1):87--95, 1975.

\bibitem{Go-comp2}
M.C. {Golumbic}.
\newblock The complexity of comparability graph recognition and coloring.
\newblock {\em Computing}, 18:199--208, 1977.

\bibitem{Har}
F.~{Harary}.
\newblock {\em {G}raph {T}heory}.
\newblock Addison--Wesley, Reading, MA, 1994.

\bibitem{H-R-dom-prod}
B.~{Hartnell} and D.~{Rall}.
\newblock Domination in {Cartesian} products: {Vizing}'s conjecture.
\newblock In T.W. {Haynes}, S.T. {Hedetniemi}, and P.J. {Slater}, editors, {\em
  Domination in Graphs--Advanced Topics}, pages 163--189. Dekker, New York,
  1998.

\bibitem{Hed-thesis}
J.T. {Hedetniemi}.
\newblock {\em Problems in Domination and Graph Products}.
\newblock PhD thesis, Clemson University, 2016.

\bibitem{Hed-conj}
S.~{Hedetniemi}.
\newblock Homomorphisms of graphs and automata.
\newblock Technical Report 03105-44-T, University of Michigan, 1966.

\bibitem{H-M-prod}
P.~{Hell} and D.~{Miller}.
\newblock Achromatic numbers and graph operations.
\newblock {\em Discrete Mathematics}, 108(1--3):297--305, 1992.

\bibitem{I-K-prod}
W.~{Imrich} and S.~Klav{\v{z}}ar.
\newblock {\em Product Graphs: Structure and Recognition}.
\newblock Wiley-Interscience Series in Discrete Mathematics and Optimization.
  Wiley-Interscience, New York, 2000.

\bibitem{I-K-R-Cart}
W.~{Imrich}, S.~Klav{\v{z}}ar, and D.F. {Rall}.
\newblock {\em Graphs and Their {Cartesian} Product}.
\newblock Topics in Graph Theory. A K Peters, Wellesley, MA, 2008.

\bibitem{J-P-prod}
M.~{Jakovac} and I.~{Peterin}.
\newblock On the b-chromatic number of some graph products.
\newblock {\em Studia Scientiarum Mathematicarum Hungarica}, 49(2):156--169,
  2012.

\bibitem{J-T-color}
T.~{Jensen} and B.~{Toft}.
\newblock {\em Graph coloring problems}.
\newblock John Wiley \& Sons, New York, 1995.

\bibitem{Kla-col-prod}
S.~Klav{\v{z}}ar.
\newblock Coloring graph products -- a survey.
\newblock {\em Discrete Mathematics}, 155:135--145, 1996.

\bibitem{K-P-b-col}
I.~{Koch} and I.~{Peterin}.
\newblock The b-chromatic index of direct product of graphs.
\newblock {\em Discrete Applied Mathematics}, 190--191:109--117, 2015.

\bibitem{K-M-prod1}
M.~{Kouider} and M.~Mah{\'e}o.
\newblock Some bounds for the b-chromatic number of a graph.
\newblock {\em Discrete Mathematics}, 256(1--2):267--277, 2002.

\bibitem{K-M-prod2}
M.~{Kouider} and M.~Mah{\'e}o.
\newblock The b-chromatic number of the {C}artesian product of two graphs.
\newblock {\em Studia Scientiarum Mathematicarum Hungarica}, 44(1):49--55,
  2007.

\bibitem{M-R-hom-prod}
L.~Man{\v{c}}inska and D.~{Roberson}.
\newblock Graph homomorphisms for quantum players.
\newblock {\em Journal of Combinatorial Theory. Series B}, 118:228--267, 2012.

\bibitem{M-O-R-C-thinness}
C.~{Mannino}, G.~{Oriolo}, F.~{Ricci}, and S.~{Chandran}.
\newblock The stable set problem and the thinness of a graph.
\newblock {\em Operations Research Letters}, 35:1--9, 2007.

\bibitem{Meyn-co-comp}
H.~Meyniel.
\newblock A new property of critical imperfect graphs and some consequences.
\newblock {\em European Journal of Combinatorics}, 8:313--316, 1987.

\bibitem{Mulder-Qn}
H.M. {Mulder}.
\newblock Interval-regular graphs.
\newblock {\em Discrete Mathematics}, 41:253--269, 1982.

\bibitem{Ola-interval}
S.~{Olariu}.
\newblock An optimal greedy heuristic to color interval graphs.
\newblock {\em Information Processing Letters}, 37:21--25, 1991.

\bibitem{Rob-uig}
F.S. {Roberts}.
\newblock Indifference graphs.
\newblock In F.~{Harary}, editor, {\em Proof {T}echniques in {G}raph {T}heory},
  pages 139--146. Academic Press, New York, 1969.

\bibitem{Sabidussi1964}
G.~{Sabidussi}.
\newblock Vertex-transitive graphs.
\newblock {\em Monatshefte f{\"u}r Mathematik}, 68:426--438, 1964.

\bibitem{Shitov2019}
Y.~{Shitov}.
\newblock Counterexamples to {H}edetniemi's conjecture.
\newblock {\em Annals of Mathematics}, 190(2):663--667, 2019.

\bibitem{V-V-col}
M.~{Valencia-Pabon} and J.~{Vera}.
\newblock Independence and coloring properties of direct product of some
  vertex-transitive graphs.
\newblock {\em Discrete Mathematics}, 306:2275--2281, 2006.

\bibitem{West}
D.~{West}.
\newblock {\em Introduction to Graph Theory}.
\newblock Prentice Hall, 2nd edition, 2000.

\bibitem{Zhu-frac}
X.~{Zhu}.
\newblock The fractional version of {H}edetniemi's conjecture is true.
\newblock {\em European Journal of Combinatorics}, 32:1168--1175, 2011.

\end{thebibliography}

\newpage
\appendix
\section{Omitted Proofs of Section~\ref{sec:thin-and-oper}}\label{apdx:sec4:proofs}

{The operations involved in these theorems are defined over a pair
of graphs $G_1 = (V_1,E_1)$ and $G_2 = (V_2,E_2)$ such that $|V_1|
= n_1$, $|V_2| = n_2$ and $V_1 \cap V_2 = \emptyset$. Besides, for
some of the following proofs, we consider an implicit ordering and
partition for both $V_1$ and $V_2$, as defined next. The ordering
of $V_1$ will be denoted by $v_1,\dots, v_{n_1}$ and that of $V_2$
by $w_1,\dots, w_{n_2}$. Moreover, if the value $t_i$ of some
variation of thinness of $G_i$ (for $i \in \{1,2\}$) is involved
in the bound to be proved, the implicit ordering is one
consistent, according to the specified variation of thinness, with
a partition $(V_i^1, \ldots, V_i^{t_i})$. If, otherwise, only the
cardinality $n_i$ of $V_i$ is involved in the bound, the implicit
ordering is an arbitrary one.
For instance, if $G_1$ is a proper $t_1$-independent-thin graph, and $t_1$ is involved in the bound to be proved, it means that the implicit ordering and partition of $V_1$ are strongly consistent and all the $t_1$ parts of the partition are independent sets.}\\

\textbf{Theorem~\ref{thm:join-comp}.}\emph{ Let $G_1$ and $G_2$ be
graphs. Then $f(G_1 \vee G_2) \leq f(G_1)+f(G_2)$, for $f \in
\{\compthin, \comppthin\}$. Moreover, if $G_2$ is complete, then
\linebreak $\compthin(G_1 \vee G_2) =
\compthin(G_1)$.}\\

\begin{proof}
Let $G_1$ and $G_2$ be two graphs with complete thinness (resp.
complete proper thinness) $t_1$ and $t_2$, respectively.

For $G = G_1 \vee G_2$, define a partition with $t_1+t_2$ complete
sets as the union of the two partitions, and $v_1,\dots,
v_{n_1},w_1,\dots, w_{n_2}$ as an ordering of $V(G)$.

Let $x,y,z$ be three vertices of $V(G)$ such that $x < y < z$, $xz
\in E(G)$, and $x$ and $y$ are in the same class of the partition
of $V(G)$. Then, in particular, $x$ and $y$ both belong either to
$V_1$ or to $V_2$. If $z$ belongs to the same graph, then $yz \in
E(G)$ because the ordering and partition restricted to each of
$G_1$ and $G_2$ are consistent. Otherwise, $z$ is also adjacent to
$y$ by the definition of join.

We have proved that the defined partition and ordering are
consistent, and thus that $\compthin(G_1 \vee G_2) \leq
\compthin(G_1)+\compthin(G_2)$. The proof of the strong
consistency, given the strong consistency of the partition and
ordering of each of $G_1$ and $G_2$, is symmetric and implies
$\comppthin(G_1 \vee G_2) \leq \comppthin(G_1)+\comppthin(G_2)$.

Suppose now that $G_2$ is complete (in particular, $t_2 = 1$).
Since $G_1$ is an induced subgraph of $G_1 \vee G_2$, then
$\compthin(G_1 \vee G_2) \geq \compthin(G_1)$. For $G = G_1 \vee
G_2$, define a partition $V^1,\dots, V^{t_1}$ such that $V^1 =
V_1^1 \cup V_2^1$ and $V^i = V_1^i$ for $i = 2, \dots, t_1$, and
define $v_1,\dots, v_{n_1},w_1,\dots, w_{n_2}$ as an ordering of
the vertices. By definition of join, $V^1$ is a complete set of
$G$, as well as $V^2, \dots, V^{t_1}$.

Let $x,y,z$ be three vertices of $V(G)$ such that $x < y < z$, $xz
\in E(G)$, and $x$ and $y$ are in the same class of the partition
of $V(G)$. If $z$ belongs to $V_2$, then $z$ is also adjacent to
$y$, because it is adjacent to every vertex in $G-z$. If $z$
belongs to $V_1$, then $x$, $y$, and $z$, belong to $V_1$ due to
the definition of the order of the vertices, and thus $yz \in
E(G)$ because the ordering and partition restricted to $G_1$ are
consistent. This proves $\compthin(G_1 \vee G_2) \leq
\compthin(G_1)$, thus in this case $\compthin(G_1 \vee G_2) =
\compthin(G_1)$.
\end{proof}

\textbf{Theorem~\ref{thm:direct}.}\emph{ Let $G_1$ and $G_2$ be
graphs. Then {$f(G_1 \times G_2) \leq \indf(G_1 \times G_2) \leq
\indf(G_1)|V(G_2)| \leq
f(G_1)\chi(G_1)|V(G_2)|$}, for $f \in \{\thin, \pthin\}$.}\\

\begin{proof}
Let $G_1$ be a $k$-independent-thin (resp. proper
$k$-independent-thin) graph.

Consider $V_1\times V_2$ lexicographically ordered with respect to
the {defined} orderings of $V_1$ and $V_2$. Consider now the
partition $\{V^{i,j}\}_{1\leq i \leq k,\ 1 \leq j \leq n_2}$ such
that $V^{i,j} = \{(v,w_j) : v \in V_1^i\}$ for each $1\leq i \leq
k$, $1 \leq j \leq n_2$. Since the graphs considered here are
loopless, each $V^{i,j}$ is an independent set. We will show now
that this ordering and partition of $V_1 \times V_2$ are
consistent (resp. strongly consistent). Let $(v_p,w_i), (v_q,w_j),
(v_r,w_{\ell})$ be three vertices appearing in that ordering in
$V_1 \times V_2$.

\emph{Case 1: $p = q = r$.} In this case, the three vertices are
in different classes, so no restriction has to be satisfied.

\emph{Case 2: $p = q < r$.} In this case, $(v_p,w_i)$ and
$(v_q,w_j)$ are in different classes. So, suppose $G_1$ is proper
$k$-independent-thin and $(v_q,w_j), (v_r,w_{\ell})$ belong to the
same class, i.e., $j=\ell$ and $v_q = v_p$ and $v_r$ belong to the
same class in $G_1$. In particular, since the classes are
independent sets, $v_p v_r \not \in E_1$. Vertices $(v_p,w_i)$ and
$(v_r,w_{\ell})$ are adjacent in $G_1 \times G_2$ if and only if
$w_i w_{\ell} \in E_2$ and $v_p v_r \in E_1$, a contradiction.

\emph{Case 3: $p < q = r$.} In this case, $(v_q,w_j)$ and
$(v_r,w_{\ell})$ are in different classes. So suppose $G_1$ is
$k$-independent-thin (resp. proper $k$-independent-thin) and
$(v_p,w_i), (v_q,w_j)$ belong to the same class, i.e., $i=j$ and
$v_p$ and $v_q = v_r$ belong to the same class in $G_1$. In
particular, since the classes are independent sets, $v_p v_r \not
\in E_1$. Vertices $(v_p,w_i)$ and $(v_r,w_{\ell})$ are adjacent
in $G_1 \times G_2$ if and only if $w_i w_{\ell} \in E_2$ and $v_p
v_r \in E_1$, a contradiction.

\emph{Case 4: $p < q < r$.} Suppose first $G_1$ is
$k$-independent-thin (resp. proper $k$-independent-thin) and
$(v_p,w_i), (v_q,w_j)$ belong to the same class, i.e., $i=j$ and
$v_p$, $v_q$ belong to the same class in $G_1$. Vertices
$(v_p,w_i)$ and $(v_r,w_{\ell})$ are adjacent in $G_1 \times G_2$
if and only if $w_i w_{\ell} \in E_2$ and $v_p v_r \in E_1$. Since
the ordering and the partition are consistent (resp. strongly
consistent) in $G_1$, $v_rv_q \in E_1$ and so $(v_r,w_{\ell})$ and
$(v_q,w_j)$ are adjacent in $G_1 \times G_2$, as required. Now
suppose that $G_1$ is proper $k$-thin and $(v_q,w_j),
(v_r,w_{\ell})$ belong to the same class, i.e., $j=\ell$ and $v_q,
v_r$ belong to the same class in $G_1$. Since the ordering and the
partition are strongly consistent in $G_1$, if $(v_p,w_i)$ and
$(v_r,w_{\ell})$ are adjacent in $G_1 \times G_2$ then $v_pv_q \in
E_1$ and so $(v_p,w_i)$ and $(v_q,w_j)$ are adjacent in $G_1
\times G_2$, as required.

The last inequality is a consequence of
Remark~\ref{rem:co-comp-ind-thin}.
\end{proof}

\textbf{Theorem~\ref{thm:strong}.}\emph{ Let $G_1$ and $G_2$ be
graphs. Then $f(G_1 \boxtimes G_2) \leq f(G_1)|V(G_2)|$ for $f \in
\{\thin, \pthin, \compthin, \comppthin, \indthin, \indpthin\}$.
Moreover, if $G_2$ is complete, then $f(G_1 \boxtimes G_2) =
f(G_1)$ for $f \in \{\thin, \pthin,
\compthin, \comppthin\}$.}\\

\begin{proof}
If $G_2$ is complete then $G_1 \boxtimes G_2 = G_1 \bullet G_2$,
so by Theorem~\ref{thm:lex}, $\thin(G_1 \boxtimes G_2) =
\thin(G_1)$, $\pthin(G_1 \boxtimes G_2) = \pthin(G_1)$,
$\compthin(G_1 \boxtimes G_2) = \compthin(G_1)$, and
$\comppthin(G_1 \boxtimes G_2) = \comppthin(G_1)$.

Let $G_1$ be a $k$-thin (resp. proper $k$-thin) graph. Consider
$V_1 \times V_2$ lexicographically ordered with respect to the
{defined} orderings of $V_1$ and $V_2$. Consider now the partition
$\{V^{i,j}\}_{1\leq i \leq k,\ 1 \leq j \leq n_2}$ such that
$V^{i,j} = \{(v,w_j) : v \in V_1^i\}$ for each $1\leq i \leq k$,
$1 \leq j \leq n_2$.

If the sets in the partition of $V_1$ are furthermore complete or
independent, so are the sets of the partition of $V_1 \times V_2$.

We will show now that this ordering and partition of $V_1 \times
V_2$ are consistent (resp. strongly consistent). Let $(v_p,w_i),
(v_q,w_j), (v_r,w_{\ell})$ be three vertices appearing in that
ordering in $V_1 \times V_2$.

\emph{Case 1: $p = q = r$.} In this case, the three vertices are
in different classes, so no restriction has to be satisfied.

\emph{Case 2: $p = q < r$.} In this case, $(v_p,w_i)$ and
$(v_q,w_j)$ are in different classes. So suppose $G_1$ is proper
$k$-thin and $(v_q,w_j), (v_r,w_{\ell})$ belong to the same class,
i.e., $j=\ell$. Vertices $(v_p,w_i)$ and $(v_r,w_{\ell})$ are
adjacent in $G_1 \boxtimes G_2$ if and only if either $i = \ell$
and $v_pv_r \in E_1$, or $i \neq \ell$, $w_i w_{\ell} \in E_2$,
and $v_p v_r \in E_1$. In the first case, $(v_p,w_i)=(v_q,w_j)$, a
contradiction. In the second case, since $p=q$ and $j=\ell$,
$(v_q,w_j)$ and $(v_r,w_{\ell})$ are adjacent, as required.

\emph{Case 3: $p < q = r$.} In this case, $(v_q,w_j)$ and
$(v_r,w_{\ell})$ are in different classes. So suppose $G_1$ is
$k$-thin (resp. proper $k$-thin) and $(v_p,w_i), (v_q,w_j)$ belong
to the same class, i.e., $i=j$. Vertices $(v_p,w_i)$ and
$(v_r,w_{\ell})$ are adjacent in $G_1 \boxtimes G_2$ if and only
if either $i = \ell$ and $v_pv_r \in E_1$, or $i \neq \ell$, $w_i
w_{\ell} \in E_2$, and $v_p v_r \in E_1$. In the first case,
$(v_r,w_{\ell})=(v_q,w_j)$, a contradiction. In the second case,
since $q=r$ and $i=j$, $(v_q,w_j)$ and {$(v_r,w_{\ell})$} are
adjacent, as required.

\emph{Case 4: $p < q < r$.} Suppose first that $G_1$ is $k$-thin
(resp. proper $k$-thin) and $(v_p,w_i), (v_q,w_j)$ belong to the
same class, i.e., $i=j$ and $v_p$, $v_q$ belong to the same class
in $G_1$. Vertices $(v_p,w_i)$ and $(v_r,w_{\ell})$ are adjacent
in $G_1 \boxtimes G_2$ if and only if either $i = \ell$ and
$v_pv_r \in E_1$, or $i \neq \ell$, $w_i w_{\ell} \in E_2$, and
$v_p v_r \in E_1$. In the first case, $j=\ell$ and since the
ordering and the partition are consistent (resp. strongly
consistent) in $G_1$, $v_rv_q \in E_1$ and so $(v_r,w_{\ell})$ and
$(v_q,w_j)$ are adjacent in $G_1 \boxtimes G_2$. In the second
case, since $v_p$, $v_q$ belong to the same class in $G_1$ and the
order and the partition are consistent, $v_q v_r \in E_1$. Since
$i = j$, $w_j w_{\ell} \in E_2$. So $(v_r,w_{\ell})$ and
$(v_q,w_j)$ are adjacent in $G_1 \boxtimes G_2$, as required.

Now suppose that $G_1$ is proper $k$-thin and $(v_q,w_j),
(v_r,w_{\ell})$ belong to the same class, i.e., $j=\ell$ and
$v_q$, $v_r$ belong to the same class in $G_1$. Vertices
$(v_p,w_i)$ and $(v_r,w_{\ell})$ are adjacent in $G_1 \boxtimes
G_2$ if and only if either $i = \ell$ and $v_pv_r \in E_1$, or $i
\neq \ell$, $w_i w_{\ell} \in E_2$, and $v_p v_r \in E_1$. In the
first case, $i=j$ and since the ordering and the partition are
strongly consistent in $G_1$, $v_pv_q \in E_1$ and so $(v_p,w_i)$
and $(v_q,w_j)$ are adjacent in $G_1 \boxtimes G_2$. In the second
case, since $v_q$, $v_r$ belong to the same class in $G_1$ and the
order and the partition are strongly consistent, $v_p v_q \in
E_1$. Since $j=\ell$, $w_i w_j \in E_2$. So $(v_p,w_i)$ and
$(v_q,w_j)$ are adjacent in $G_1 \boxtimes G_2$, as required.
\end{proof}

\textbf{Theorem~\ref{thm:conorm}.}\emph{ Let $G_1$ and $G_2$ be
graphs. Then $f(G_1 \ast G_2) \leq \indf(G_1 \ast G_2) \leq
\indf(G_1)|V(G_2)| \leq
f(G_1)\chi(G_1)|V(G_2)|$, for $f \in \{\thin, \pthin\}$.}\\

\begin{proof}
Let $G_1$ be a $k$-independent-thin (resp. proper
$k$-independent-thin) graph. Consider $V_1 \times V_2$
lexicographically ordered with respect to the {defined} orderings
of $V_1$ and $V_2$. Consider now the partition $\{V^{i,j}\}_{1\leq
i \leq k,\ 1 \leq j \leq n_2}$ such that $V^{i,j} = \{(v,w_j) : v
\in V_1^i\}$ for each $1\leq i \leq k$, $1 \leq j \leq n_2$. Since
the graphs considered here are loopless, each $V^{i,j}$ is an
independent set.

We will show now that this ordering and partition of $V_1 \times
V_2$ are consistent (resp. strongly consistent). Let $(v_p,w_i),
(v_q,w_j), (v_r,w_{\ell})$ be three vertices appearing in that
ordering in $V_1 \times V_2$.

\emph{Case 1: $p = q = r$.} In this case, the three vertices are
in different classes, so no restriction has to be satisfied.

\emph{Case 2: $p = q < r$.} In this case, $(v_p,w_i)$ and
$(v_q,w_j)$ are in different classes. So suppose $G_1$ is proper
$k$-independent-thin and $(v_q,w_j), (v_r,w_{\ell})$ belong to the
same class, i.e., $j=\ell$ and $v_q = v_p$ and $v_r$ belong to the
same class in $G_1$. In particular, since the classes are
independent sets, $v_p v_r \not \in E_1$. Vertices $(v_p,w_i)$ and
$(v_r,w_{\ell})$ are adjacent in $G_1 \ast G_2$ if and only if
either $v_p v_r \in E_1$ or $w_i w_{\ell} \in E_2$, so, in this
case, if and only if $w_i w_{\ell} \in E_2$. Since $j = \ell$,
$(v_p,w_i)$ is adjacent to $(v_q,w_j)$ in $G_1 \ast G_2$, as
required.

\emph{Case 3: $p < q = r$.} In this case, $(v_q,w_j)$ and
$(v_r,w_{\ell})$ are in different classes. So suppose $G_1$ is
$k$-independent-thin (resp. proper $k$-independent-thin) and
$(v_p,w_i), (v_q,w_j)$ belong to the same class, i.e., $i=j$ and
$v_p$ and $v_q = v_r$ belong to the same class in $G_1$. In
particular, since the classes are independent sets, $v_p v_r \not
\in E_1$. Vertices $(v_p,w_i)$ and $(v_r,w_{\ell})$ are adjacent
in $G_1 \ast G_2$ if and only if either $v_p v_r \in E_1$ or $w_i
w_{\ell} \in E_2$, so, in this case, if and only if $w_i w_{\ell}
\in E_2$. Since $i = j$, $(v_r,w_{\ell})$ is adjacent to
$(v_q,w_j)$ in $G_1 \ast G_2$, as required.

\emph{Case 4: $p < q < r$.} Suppose first that $G_1$ is
$k$-independent-thin (resp. proper $k$-independent-thin) and
$(v_p,w_i), (v_q,w_j)$ belong to the same class, i.e., $i=j$ and
$v_p$, $v_q$ belong to the same class in $G_1$. Vertices
$(v_p,w_i)$ and $(v_r,w_{\ell})$ are adjacent in $G_1 \ast G_2$ if
and only if either $v_p v_r \in E_1$ or $w_i w_{\ell} \in E_2$. In
the first case, since the ordering and the partition are
consistent (resp. strongly consistent) in $G_1$, $v_rv_q \in E_1$
and so $(v_r,w_{\ell})$ and $(v_q,w_j)$ are adjacent in $G_1 \ast
G_2$, as required. In the second case, since $i = j$,
$(v_r,w_{\ell})$ is adjacent to $(v_q,w_j)$ in $G_1 \ast G_2$, as
required.

Now suppose that $G_1$ is proper $k$-thin and $(v_q,w_j),
(v_r,w_{\ell})$ belong to the same class, i.e., $j=\ell$ and
$v_q$, $v_r$ belong to the same class in $G_1$.

Vertices $(v_p,w_i)$ and $(v_r,w_{\ell})$ are adjacent in $G_1
\ast G_2$ if and only if either $v_p v_r \in E_1$ or $w_i w_{\ell}
\in E_2$. In the first case, since the ordering and the partition
are strongly consistent in $G_1$, $v_pv_q \in E_1$ and so
$(v_p,w_i)$ and $(v_q,w_j)$ are adjacent in $G_1 \ast G_2$, as
required. In the second case, since $j = \ell$, $(v_p,w_i)$ is
adjacent to $(v_q,w_j)$ in $G_1 \ast G_2$, as required.

The last inequality is a consequence of
Remark~\ref{rem:co-comp-ind-thin}.
\end{proof}

\textbf{Theorem~\ref{thm:homo}.}\emph{ Let $G_1$ and $G_2$ be
graphs. Then $f(G_1 \ltimes G_2) \leq f(G_1)|V(G_2)|$ for $f \in
\{\thin, \pthin, \compthin, \comppthin,
\indthin, \indpthin\}$.}\\

\begin{proof} Let $G_1$ be a $k$-thin (resp. proper $k$-thin)
graph. Consider $V_1 \times V_2$ lexicographically ordered with
respect to the {defined} orderings of $V_1$ and $V_2$. Consider
now the partition $\{V^{i,j}\}_{1\leq i \leq k,\ 1 \leq j \leq
n_2}$ such that $V^{i,j} = \{(v,w_j) : v \in V_1^i\}$ for each
$1\leq i \leq k$, $1 \leq j \leq n_2$.

If the sets in the partition of $V_1$ are furthermore complete or
independent, the sets of the partition of $V_1 \times V_2$ are so,
because $G_2$ is loopless.

We will show now that this ordering and partition of $V_1 \times
V_2$ are consistent (resp. strongly consistent). Let $(v_p,w_i),
(v_q,w_j), (v_r,w_{\ell})$ be three vertices appearing in that
ordering in $V_1 \times V_2$.

\emph{Case 1: $p = q = r$.} In this case, the three vertices are
in different classes, so no restriction has to be satisfied.

\emph{Case 2: $p = q < r$.} In this case, $(v_p,w_i)$ and
$(v_q,w_j)$ are in different classes, and they are adjacent. So
the conditions both for thinness and proper thinness are
satisfied.

\emph{Case 3: $p < q = r$.} In this case, $(v_q,w_j)$ and
$(v_r,w_{\ell})$ are in different classes, and they are adjacent.
So the conditions both for thinness and proper thinness are
satisfied.

\emph{Case 4: $p < q < r$.} Suppose first that $G_1$ is $k$-thin
(resp. proper $k$-thin) and $(v_p,w_i), (v_q,w_j)$ belong to the
same class, i.e., $i=j$ and $v_p$, $v_q$ belong to the same class
in $G_1$. Since $p < r$, vertices $(v_p,w_i)$ and $(v_r,w_{\ell})$
are adjacent in $G_1 \ltimes G_2$ if and only if $v_p v_r \in E_1$
and $w_i w_{\ell} \not \in E_2$. Since $v_p$, $v_q$ belong to the
same class in $G_1$ and the order and the partition are
consistent, $v_q v_r \in E_1$. Since $i = j$, $w_j w_{\ell} \not
\in E_2$. So $(v_r,w_{\ell})$ and $(v_q,w_j)$ are adjacent in $G_1
\ltimes G_2$, as required.

Now suppose that $G_1$ is proper $k$-thin and $(v_q,w_j),
(v_r,w_{\ell})$ belong to the same class, i.e., $j=\ell$ and
$v_q$, $v_r$ belong to the same class in $G_1$. Since $p < r$,
vertices $(v_p,w_i)$ and $(v_r,w_{\ell})$ are adjacent in $G_1
\ltimes G_2$ if and only if $v_p v_r \in E_1$ and $w_i w_{\ell}
\not \in E_2$. Since $v_q$, $v_r$ belong to the same class in
$G_1$ and the order and the partition are strongly consistent,
$v_p v_q \in E_1$. Since $j=\ell$, $w_i w_j \not \in E_2$. So
$(v_p,w_i)$ and $(v_q,w_j)$ are adjacent in $G_1 \ltimes G_2$, as
required.
\end{proof}

\section{Proofs from the manuscript by Chandran, Mannino and Oriolo}

\noindent \textbf{Theorem~\ref{thm:tK2}.}
\emph{\cite{C-M-O-thinness-man} For every $t \geq 1$,
$\thin(\overline{tK_2})=t$.} \\

\begin{proof} Let $G=\overline{tK_2}$, $t \geq 1$.
Let $V(G) = \{x_1, y_1, \dots, x_t, y_t\}$ and suppose that $(x_i,
y_i)$, for $1 \leq i \leq t$, are the only pairs of non-adjacent
vertices. If we define, for $1 \leq i \leq t$, $V^i = \{x_i,
y_i\}$, then every total order on the vertices of $V(G)$ is
consistent with this partition. We now show that $G$ is not
$(t-1)$-thin. Suppose the contrary, that is, there exist an
ordering $<$ on $V(G)$ and a partition of $V(G)$ into $t-1$
classes $(V^1, \dots, V^{t-1})$ which are consistent. For every
class, denote by $f(V^h)$ the smallest element of $V^h$ with
respect to the ordering $<$. Clearly, there exists at least one
pair $\{x_i, y_i\}$, $1 \leq i \leq t$, such that $\bigcup_h
\{f(V^h)\} \cap \{x_i,y_i\} = \emptyset$. Without loss of
generality, assume that $x_i < y_i$. Let $V^q$ be the class which
$x_i$ belongs to. It follows that $y_i$ is adjacent to $f(V^q)$
and nonadjacent to $x_i$; moreover, $y_i > x_i > f(V^q)$. But this
is a contradiction.
\end{proof}

Recall that the \emph{vertex isoperimetric peak} of a graph $G$,
denoted as $b_v(G)$, is defined as $b_v(G) = \max_s \min_{X\subset
V, |X|=s} |N(X) \cap (V(G) \setminus X)|$, i.e., the maximum over
$s$ of the lower bounds for the number of boundary vertices
(vertices outside
the set with a neighbor in the set) in sets of size $s$.\\

\noindent \textbf{Theorem~\ref{thm:peak}.}
\emph{\cite{C-M-O-thinness-man} For every graph $G$ with at
least one edge,} $\thin(G)\geq b_v(G)/\Delta(G)$. \\

\begin{proof}
Let $t$ be the thinness of $G$, $\Delta = \Delta(G)$, $k = b_v(G)$
and $s$ realizing the vertex isoperimetric peak of $G$. There
exist a partition $V^1, \dots, V^t$ and an ordering $v_1, \dots,
v_n$ of $V(G)$ that are consistent. Let $S = \{v_{n-s+1},\dots,
v_n\}$, that is, $S$ is the set of the $s$ greatest vertices of
$G$. Let $N_S = N(S) \cap (V(G) \setminus S)$. Note that, since
every node in $N_S$ is outside $S$, each of them is smaller than
any vertex in $S$.

We claim that for each $i$, $1 \leq i \leq t$, $|V^i \cap N_S|
\leq \Delta$. (That is, none of the classes can contain more than
$\Delta$ vertices from $N_S$.) Suppose $V^i$ contains at least
$\Delta + 1$ vertices from $N_S$. Let $x$ be the smallest vertex
in $V^i \cap N_S$. Clearly, $x$ is adjacent to some vertex $y \in
S$, since $x \in N_S$. Then, $y$ has to be adjacent to all the
vertices in $V^i \cap N_S$, since all of them are smaller than
$y$. Thus, the degree of $y$ has to be at least $\Delta+1$, a
contradiction. So, each class contains only at most $\Delta$
vertices of $N_S$. It follows that there are at least $|N_S| /
\Delta \geq k / \Delta$ classes.\end{proof}

The following lower bound for the vertex isoperimetric peak of the
grid $G_r$ was proved by Chv\'atalova.

\begin{lemma}\cite{Chv-grid} For every $r$, $b_v(G_r) \geq r$.
\end{lemma}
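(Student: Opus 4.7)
The plan is to exhibit a suitable size $s$ and prove that every $X\subseteq V(G_r)$ of that size has $|N(X)\setminus X|\ge r$. I would take $s = r(r-1)/2$, motivated by the lower-triangular staircase $T=\{(i,j):i+j\le r\}$, which has exactly this size and whose vertex boundary is precisely the anti-diagonal $\{(i,j):i+j=r+1\}$, a set of size $r$. This shows the bound is attainable with equality at this $s$, and suggests that $T$ is the extremal set.

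The argument would proceed by compression. Define the \emph{left-shift} $L$, which within each row replaces $X\cap R_i$ by the leftmost $|X\cap R_i|$ cells of $R_i$, and the \emph{up-shift} $U$, defined analogously for columns. The first step is to verify that $L$ and $U$ preserve $|X|$ and do not increase $|N(X)\setminus X|$; iterating them reduces $X$ to a Young diagram $\lambda_1\ge\cdots\ge\lambda_r\ge 0$ with $\sum_i \lambda_i = r(r-1)/2$ and $\lambda_1\le r$. The second step is to compute the vertex boundary of such a Young diagram directly: a routine case check yields the explicit formula
\[
|N(X)\setminus X| = (\lambda_1-\lambda_r) + P^{\ast},
\]
where $P^{\ast}$ counts those rows $i$ with $1\le\lambda_i\le r-1$ for which either $i=1$ or $\lambda_{i-1}=\lambda_i$ (the medium-value plateau-starts). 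A short case analysis on the structure of the plateau sequence of $\lambda$, combined with the size constraint $\sum\lambda_i=r(r-1)/2$, then shows that this quantity is at least $r$. Equality is achieved by the triangular diagram $\lambda_i = r-i$, recovering the extremal set $T$.

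The main obstacle will be verifying that the shifts do not increase the vertex boundary. Vertex-boundary compressions are known to be subtler than their edge-boundary counterparts, because a shift can simultaneously destroy some boundary cells while creating others; the verification requires a careful case check of how each boundary vertex transforms under $L$ (and under $U$), separating the ``north'' and ``east'' contributions identified in Step~2. If the compression step proves too delicate, a fallback is to argue directly on the original $X$: classify each row and each column as \emph{full}, \emph{empty}, or \emph{mixed}, observe that each mixed row (resp.\ column) contributes at least one vertex to $N(X)\setminus X$ from horizontal (resp.\ vertical) adjacencies, and use the size constraint $|X|=r(r-1)/2$ to force enough mixed lines (together with transitions between full and empty rows/columns) to accumulate at least $r$ distinct boundary vertices.
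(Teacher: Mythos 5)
The paper itself offers no proof of this lemma: it is quoted from Chv\'atalov\'a's paper on labellings of $P_m \times P_n$, so there is nothing internal to compare against, and your plan has to stand on its own. Judged that way, your main (compression) route is sound, and every ingredient you assert is in fact true. The choice $s=r(r-1)/2$ is the right one; the boundary formula for a Young diagram $\lambda_1\ge\cdots\ge\lambda_r$ is correct (telescoping the strict drops gives $\lambda_1-\lambda_r$, and the leftover boundary cells are exactly your ``plateau-starts''); and the step you flag as the main obstacle does go through: writing $X_i$ for the trace of $X$ on row $i$, the contribution of row $i$ to $|N(X)\setminus X|$ is $|(N_P(X_i)\cup X_{i-1}\cup X_{i+1})\setminus X_i|$, which is at least $\max\{|X_{i-1}|,|X_{i+1}|,|X_i|+\varepsilon_i\}-|X_i|$ (with $\varepsilon_i=1$ when $X_i$ is a nonempty proper subset of its row), while after the left-shift the three sets become nested initial segments and the contribution equals exactly this maximum; summing over rows shows $L$ (and symmetrically $U$) does not increase the vertex boundary, and a monotone potential such as the coordinate sum gives termination at a set compressed both ways, i.e.\ a Young diagram. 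One caveat on your last step: the inequality $(\lambda_1-\lambda_r)+P^{\ast}\ge r$ is immediate (with no size hypothesis) when $1\le\lambda_r$ and $\lambda_1\le r-1$, but when the diagram has full rows ($\lambda_1=r$) or empty rows ($\lambda_r=0$) you genuinely need $\sum_i\lambda_i=r(r-1)/2$, and the verification is a small extremal computation (bounding the largest, resp.\ smallest, possible size of a diagram violating the inequality), not just a reading of the plateau structure; it works, but ``short case analysis'' is optimistic.

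The genuine weak point is the fallback. Counting one boundary vertex per mixed row and one per mixed column cannot reach $r$ by itself, because a horizontal witness and a vertical witness may be the same vertex; the extremal triangle shows they coincide as much as possible ($r-1$ mixed rows, $r-1$ mixed columns, yet only $r$ boundary vertices). In the configurations not settled by the easy cases (for instance, some empty rows and empty columns but no full line, so that $a$ mixed rows and $b$ mixed columns satisfy only $ab\ge r(r-1)/2$), the distinctness you get for free is only $\max\{a,b\}$, i.e.\ about $r/\sqrt{2}$, and upgrading this to $r$ would require proving that roughly $(a+b)/2+1$ of the witnesses are distinct, which is essentially the original difficulty in disguise. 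So treat the compression argument as the actual proof and do not rely on the fallback to rescue it.
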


Now, combining the above results, we get \\

\noindent \textbf{Corollary~\ref{cor:grid}.}
\emph{\cite{C-M-O-thinness-man} For every $r \geq 2$,}
$\thin(GR_r) \geq r/4$. \\

We close with a couple of general upper bounds on the thinness of
a graph. First, we need to introduce some notation and a lemma
whose proof is straightforward.

Let $G$ be a graph. A partition $V^1, \dots, V^k$ of $V(G)$ is
\emph{valid} if there exists an ordering which is consistent with
the partition. A class $V^i$ is called a \emph{singleton class} if
$|V^i| = 1$, otherwise $V^i$ is a \emph{non-singleton class}.

\begin{lemma}\cite{C-M-O-thinness-man}\label{lem:singleton} Let $G$ be a graph with a valid partition $V^1, \dots, V^k$ of $V(G)$. Let $X = \{v \in V : v$ belongs to a singleton
class$\}$. Then there exists an ordering $v_1, \dots v_n$ of
$V(G)$ which is consistent with the partition such that the
vertices of $X$ are the smallest ones in the order.
\end{lemma}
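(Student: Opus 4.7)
The plan is to start from an arbitrary ordering $v_1,\dots,v_n$ that is known to be consistent with the partition $V^1,\dots,V^k$ (such an ordering exists because the partition is valid), and then perform a bubble-sort-style rearrangement that pushes every singleton-class vertex to the front while preserving consistency. Concretely, whenever some vertex $v_j$ belongs to a singleton class and its immediate predecessor $v_{j-1}$ does not, I would swap $v_{j-1}$ and $v_j$, obtaining a new ordering, and then repeat.

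The key step is to verify that each such adjacent swap preserves consistency with the partition. The only triples $(v_r, v_s, v_t)$ whose relative order is affected by swapping $v_{j-1}$ and $v_j$ are those that contain both of these vertices. After the swap, each such triple has $v_j$ (the singleton) appearing in either the first or the second position. Since $v_j$ is the sole member of its class, no other vertex can share its class, and therefore the premise ``$v_r$ and $v_s$ belong to the same class'' of the consistency condition is automatically false for these modified triples. All triples not involving both $v_{j-1}$ and $v_j$ keep their relative order, so the consistency conditions they imposed continue to hold exactly as before. Hence the new ordering remains consistent with the same partition.

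For termination, I would use a simple potential, such as the sum of positions of the singleton-class vertices: each legal swap strictly decreases this quantity, so the process halts after finitely many steps. When no more swaps are available, there is no adjacent pair of the form (non-singleton, singleton), which means all singleton-class vertices precede all non-singleton ones in the final ordering, as required.

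The main obstacle is really the bookkeeping in the consistency verification; one has to be careful to enumerate precisely which triples change order when two adjacent elements are transposed and to confirm, case by case, that the singleton status of $v_j$ rules out the only nontrivial instance of the consistency premise. Once this is in place, the rest (termination, and the fact that the final ordering places $X$ at the beginning) is routine.
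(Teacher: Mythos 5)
Your argument is correct. Note that the paper does not actually supply a proof of this lemma: it is attributed to the manuscript of Chandran, Mannino and Oriolo and introduced there as ``a lemma whose proof is straightforward,'' so there is nothing in the text to compare against step by step. Your bubble-sort argument is a valid way to make the straightforward claim precise: an adjacent transposition only changes the relative order of the two swapped vertices, and in every reordered triple the singleton-class vertex occupies the first or second position, so the premise ``the first two vertices lie in the same class'' is vacuously false; together with the decreasing potential this gives the result. A slightly shorter route, in the same spirit but done in one step rather than by repeated swaps, is to pass directly to the ordering that lists the vertices of $X$ first and the remaining vertices afterwards, each block keeping its relative order from the original consistent ordering. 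Then for any triple $x<y<z$ in the new order with $x,y$ in the same class, that class is necessarily non-singleton, so $x,y\notin X$; since $X$ forms a prefix, also $z\notin X$, hence all three vertices kept their original relative order and the consistency condition is inherited from the original ordering. Both arguments buy the same thing; the one-shot version merely avoids the termination bookkeeping.
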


\noindent \textbf{Theorem~\ref{thm:n-log4}.}
\emph{\cite{C-M-O-thinness-man} Let $G$ be a graph. Then}
$\thin(G) \leq |V(G)| - \log(|V(G)|)/4$.
\\

\begin{proof}
Let $t$ be the thinness of $G$, and $n = |V(G)|$. Then there
exists a valid partition of $V(G)$ using $t$ classes. We can
assume that there are at least $2$ singleton classes in this valid
partition. Otherwise, if every class (except possibly one),
contains at least $2$ vertices, then clearly $t \leq (n+1)/2 \leq
n - \log(n)/4$, as required. Let $X = \{u \in V : u$ belongs to a
singleton class$\}$. Let $|X| = h$ and $|V(G) \setminus X| = q$
(clearly, $h + q = n$).

\noindent \textbf{Claim.} $h \leq 2^q$.

Let $V \setminus X = \{y_1, y_2, \dots, y_q\}$. We define a
function $f : X \to \{0, 1\}^q$ as follows: for $x \in X$, $f(x) =
(a_1, a_2, \dots, a_q)$, where $a_i = 1$ if $x$ is adjacent to
$y_i$, and $a_i = 0$ otherwise. Assume for contradiction that $h >
2^q$. Then clearly there exist two vertices $r, s \in X$ such that
$f(r) = f(s)$ (since $|\{0, 1\}^q| = 2^q$). Now, we claim that
even if we merge the two classes containing $r$ and $s$ into one
class, it will remain a valid partition of $V(G)$. This will
provide the required contradiction, since we have assumed that $t$
is the thinness of $G$. By Lemma~\ref{lem:singleton}, there exists
an ordering $\{v_1, \dots, v_n\}$ which is consistent with the
given partition, such that the vertices of the singleton classes
are the smallest, i.e. $X = \{v_1, \dots, v_h\}$. In fact, it is
possible to further assume (without violating the validity of the
partition) that $r = v_{h-1}$ and $s = v_h$. Now, consider merging
the two classes containing $r$ and $s$ into one class. It is not
difficult to verify that the resulting partition is still valid
since the previous ordering is still consistent: there is no
conflict due to vertices of $X \setminus \{r, s\}$, since they are
all smaller than $r$. Also, since $f(r) = f(s)$, it is easy to see
(from the definition of the function $f$) that for any vertex $y
\in V(G) \setminus X$, $y$ is adjacent to $r$ if and only if it is
adjacent to $s$. Therefore, there cannot be any conflict due to
the vertices of $V(G) \setminus X$. Thus, we have a valid
partition of $V(G)$ using only $t-1$ classes, contradicting the
assumption that the thinness of $G$ is $t$. We infer that $h \leq
2^q$. $\diamondsuit$

Now, suppose $h > n - \log(n)/2$. Then, from $2^q \geq h$, we get
$q \geq \log(n - \log(n)/2)$. But this leads to a contradiction
since $n = |X| + |V \setminus X| = h + q \geq n - \log(n)/2 +
\log(n - \log(n)/2) > n$. So, we infer that $h \leq n -
\log(n)/2$. Then, $q = n - h \geq \log(n)/2$. Note that $t \leq h
+ q/2$, since every non-singleton class contains at least $2$
vertices, and there are only a total of $q$ vertices in
non-singleton classes. Thus, $t \leq h + q - q/2 = n - q/2 \leq n
- \log(n)/4$, as required.
\end{proof}

We can also get a bound in terms of the maximum degree of the
graph by modifying the above proof.\\

\noindent \textbf{Theorem~\ref{thm:bound-delta}.}
\emph{\cite{C-M-O-thinness-man} Let $G$ be a graph. Then}
$\thin(G) \leq |V(G)|(\Delta(G)+3)/(\Delta(G)+4)$. \\

\begin{proof}
Let $n = |V(G)|$ and $\Delta = \Delta(G)$. We modify the proof of
Theorem~\ref{thm:n-log4} as follows.

\noindent \textbf{Claim.} $h \leq q(\Delta+2)/2$.

For $x \in X$, we denote by $|f(x)|$ the number of $1$'s in the
$q$-tuple $f(x)$. Clearly, $\sum_{x \in X}|f(x)| \leq \sum_{y \in
V(G)\setminus X} \mbox{degree}(y) \leq q\Delta$. Now partition $X$
into two classes as follows: Let $X_1 = \{x \in X : |f(x)| = 1\}$
and $X_2 = X \setminus X_1$. Note that $|X_1| \leq q$, since
otherwise, if $|X_1|
> q$, there will be two vertices $r, s \in X_1$ such that $f(r) = f(s)$,
leading to a contradiction, as described in the proof of
Theorem~\ref{thm:n-log4}. We can assume that there is no vertex $x
\in X_2$ with $|f(x)| = 0$, since if such a vertex existed we
could have merged the class of this vertex with that of the
greatest vertex in $X$, contradicting the minimality of the size
of the partition. Thus, for every vertex $x \in X_2$, $|f(x)| \geq
2$. Since $\sum_{x\in X_2} |f(x)| \leq \sum_{x \in X} |f(x)| \leq
q\Delta$, we have $|X_2| \leq q\Delta/2$. Thus, $h = |X| = |X_1| +
|X_2| \leq q(\Delta+2)/2$. $\diamondsuit$

It follows that $q = n-h \geq n-q(\Delta+2)/2$. Rearranging, we
get $q \geq 2n/(\Delta +4)$. Now, $t \leq h+q/2 = n-q/2 \leq
n(\Delta+3)/(\Delta+4)$.
\end{proof}

\end{document}